\newcommand{\R}{\mathbb{R}}
\newcommand{\C}{\mathbb{C}}
\newcommand{\Z}{\mathbb{Z}}
\newcommand{\cI}{\mathcal{I}}
\newcommand{\cJ}{\mathcal{J}}
\newcommand{\bra}[1]{\la #1|}
\newcommand{\ket}[1]{| #1 \ra}
\renewcommand{\vec}[1]{\boldsymbol{#1}}
\newcommand{\la}{\langle}
\newcommand{\ra}{\rangle}
\DeclareMathOperator{\range}{range}
\DeclareMathOperator{\tr}{tr}
\DeclareMathOperator{\dist}{dist}
\DeclareMathOperator{\diam}{diam}
\newcommand{\dd}{\,\ensuremath{\textrm{d}}}
\newcommand{\comment}[1]{}
\definecolor{purp}{RGB}{160, 32, 240}
\definecolor{lightblue}{RGB}{32, 160, 240}
\newtheorem{assumption}{Assumption}
\newtheorem{theorem}{Theorem}
\newtheorem{conjecture}{Conjecture}
\newtheorem{lemma}{Lemma}[section]
\newtheorem*{lemma*}{Lemma}
\newtheorem{proposition}[lemma]{Proposition}
\newtheorem*{proposition*}{Proposition}
\theoremstyle{definition}
\newtheorem{definition}{Definition}
\theoremstyle{theorem}
\title{Algebraic localization implies exponential localization in non-periodic insulators}
\date{February 1, 2022}
\thanks{This work is supported in part by the U.S.~National Science Foundation via grant DMS-2012286 and the U.S.~Department of Energy via grant DE-SC0019449. K.D.S. was supported in part by a National Science Foundation Graduate Research Fellowship under Grant No. DGE-1644868. We would like to thank Alexander B. Watson for helpful discussions. 
}
\author{Jianfeng Lu}
\address{(JL) Department of Mathematics, Department of Physics, and Department of Chemistry, Duke University, Box 90320, Durham, NC 27708, USA}
\email{jianfeng@math.duke.edu}
\author{Kevin D. Stubbs}
\address{(KDS) Department of Mathematics, Duke University, Box 90320, Durham, NC 27708, USA}
\email{kstubbs@math.duke.edu}
\begin{document}

\begin{abstract}
  Exponentially-localized Wannier functions are a basis of the Fermi
  projection of a Hamiltonian consisting of functions which decay
  exponentially fast in space. In two and three spatial dimensions, it
  is well understood for periodic insulators that
  exponentially-localized Wannier functions exist if and only if there
  exists an orthonormal basis for the Fermi projection with finite
  second moment (i.e. all basis elements satisfy
  $\int |\vec{x}|^2 |w(\vec{x})|^2 \dd{\vec{x}} < \infty$). In this
  work, we establish a similar result for non-periodic insulators in
  two spatial dimensions. In particular, we prove that if there exists
  an orthonormal basis for the Fermi projection which satisfies
  $\int |\vec{x}|^{5 + \epsilon} |w(\vec{x})|^2 \dd{\vec{x}} < \infty$
  for some $\epsilon > 0$ then there also exists an orthonormal basis
  for the Fermi projection which decays exponentially fast in
  space. This result lends support to the Localization Dichotomy
  Conjecture for non-periodic systems recently proposed by Marcelli,
  Monaco, Moscolari, and Panati in
  \cites{2019MarcelliMonacoMoscolariPanati,2020MarcelliMoscolariPanati}.
\end{abstract}

\maketitle

\section{Introduction}
\label{sec:introduction}
In electronic structure theory, we are often interested in localized representations of the occupied space. For band-insulating materials (i.e., materials where the Fermi energy lies the in the band gap) these local representations can be studied by constructing a well localized basis for a subspace known as the Fermi projection. Over the past few decades, exponentially localized Wannier functions (ELWFs) have emerged as the basis of choice for the Fermi projection in periodic insulators. Among the many applications, Wannier functions have been instrumental in constructing effective tight binding models and in the development of the modern theory of polarization. We refer the readers to the review article  \cites{2012MarzariMostofiYatesSouzaVanderbilt} and references therein for more detailed discussions on the Wannier functions from a physical point of view. 

Given the importance of ELWFs, there have been much research devoted to understanding when ELWFs exist. As a result of these efforts, it is now well understood that for periodic insulators in dimension one, two, and three the existence of ELWFs is tied to the vanishing of certain topological invariants. We give a quick review of these theoretical results 
\begin{itemize}
\item In one spatial dimension, a basis of ELWFs for Fermi projection of an insulating crystalline material always exists \cites{1964DesCloizeaux,1964DesCloizeaux2,1983Nenciu,1988HelfferSjostrand,1991Nenciu}. 
\item In two dimensions, the same result holds if and only if the Chern number, a topological invariant associated with the Fermi projector, vanishes \cites{2007BrouderPanatiCalandraMourougane,2007Panati,2018MonacoPanatiPisanteTeufel}. 
\item In three dimensions, the result holds as long as three ``Chern-like'' numbers all vanish \cites{2007BrouderPanatiCalandraMourougane,2007Panati,2018MonacoPanatiPisanteTeufel}. 
\end{itemize}

To complement these results which connect ELWFs and topology, in \cites{2018MonacoPanatiPisanteTeufel} the authors were able to show that for periodic systems in two dimensions that the Chern number vanishes if and only if there exists a basis of Wannier functions with finite second moment (and similarly in three dimensions). This result, combined with the previous results connecting ELWFs to the Chern number, forms the basis of the localization-topology correspondence or \textit{Localization Dichotomy}. Informally stated, this correspondence is the following result in two and three dimensions:
\begin{align}
    \exists &\text{ an orthonormal basis, $\{ w_\alpha \}$, for $\range{(P)}$ s.t. for all $\alpha$,} \int_{\R^2} |\vec{x}|^2 |w_\alpha(\vec{x})|^2 \dd{\vec{x}} < \infty \\ 
    & \Longleftrightarrow \text{The Chern number vanishes / The ``Chern-like'' numbers vanish} \\
    & \Longleftrightarrow \exists\, \gamma^* > 0, \exists \text{ an orthonormal basis, $\{ \tilde{w}_\alpha \}$, for $\range{(P)}$ s.t.,}  \\
    & \hspace{15em} \text{for all $\alpha$,} \qquad \int_{\R^2} e^{2\gamma^* |\vec{x}|} |\tilde{w}_\alpha(\vec{x})|^2 \dd{\vec{x}} < \infty
\end{align}

Far less is known for systems which are not periodic however it has been conjectured that a Localization Dichotomy should also hold in non-periodic systems \cites{2019MarcelliMonacoMoscolariPanati}. The main result of the paper is to establish a weaker Localization Dichotomy for both periodic and non-periodic systems. In particular, we show that that one can establish the equivalence between exponential and algebraic localization without connecting to the theory of topological invariants. Using the informal notation introduced above, the main result of this paper is to show in two dimensions that:
\begin{align}
    \exists &\text{ an orthonormal basis, $\{ w_\alpha \}$, for $\range{(P)}$ s.t. for some $\epsilon > 0$ and} \\
    & \hspace{15em} \text{for all $\alpha$,}\qquad \int_{\R^2} |\vec{x}|^{5 + \epsilon} |w_\alpha(\vec{x})|^{2} \dd{\vec{x}} < \infty, \\ 
    & \Longleftrightarrow \exists\, \gamma^* > 0, \exists \text{ an orthonormal basis, $\{ \tilde{w}_\alpha \}$, for $\range{(P)}$ s.t.,} \\
    & \hspace{15em} \text{for all $\alpha$,} \qquad \int_{\R^2} e^{2\gamma^* |\vec{x}|} |\tilde{w}_\alpha(\vec{x})|^2 \dd{\vec{x}} < \infty
\end{align}
Unlike many previous results on the existence of Wannier functions, our main result makes no assumptions about the underlying symmetries of the system. Instead, we directly study the localization properties of the Fermi projector $P$, by constructing a new self-adjoint position operator $\widehat{X}$ which depends on $P$. Our constructed position operator is close to the true position operator $X$ in the standard coordinate basis, acts locally, and can be used to localize the $X$ position of the states in the range of the Fermi projector. As we will show, the fact that such an operator $\widehat{X}$ can be constructed implies that exponentially localized Wannier functions exist in both periodic and non-periodic systems and is equivalent to the system being topologically trivial for periodic systems.

This result lends support to the Localization Dichotomy Conjecture for non-periodic systems recently proposed by Marcelli, Monaco, Moscolari, and Panati in \cites{2019MarcelliMonacoMoscolariPanati,2020MarcelliMoscolariPanati}. The precise statement of these results follows in the next section.

\subsection{Technical Statement of Results}
Our main theoretical result concern operators which admit exponentially localized kernels in the following:
\begin{definition}
\label{def:exp-loc-kern}
  Suppose that $A$ is a bounded linear operator on $L^2(\R^d) \rightarrow L^2(\R^d)$. We say that $A$ admits an \textit{exponentially localized kernel} with rate $\gamma$, if $A$ admits an integral kernel $A(\cdot, \cdot) : \R^d \times \R^d \rightarrow \C$ and there exists a finite, positive constant $C$ so that:
\[
    |A(\vec{x}, \vec{x}')| \leq C e^{-\gamma | \vec{x} - \vec{x}'|} \quad a.e.
\]
\end{definition}
In particular, we will consider orthogonal projectors $P$ which admit exponentially localized kernels. Such projectors naturally arise the study of single-body magnetic Schr{\"o}dinger operators. As a concrete example of this, we state the following proposition : 
\begin{proposition}[\cites{1982SimonSchrodinger,Moscolari2019,2020MarcelliMoscolariPanati}]
\label{prop:h}
Let $H$ be a Hamiltonian densely defined on $L^2(\R^2)$ of the following form:
  \[
    H = (i \nabla + A)^2 + V.
  \]
  Suppose that $A \in L^4_{loc}(\R^2; \R^2)$, $\text{div}(A) \in L^2_{\text{loc}}(\R^2;\R)$, and $V \in L^2_{\text{uloc}}(\R^2;\R)$ where for $d \in \Z^+$
  \begin{align}
  & L^p_{loc}(\R^2; \R^d) := \{ f : \forall \vec{x} \in \R^2,\, \int_{|\vec{x} - \vec{y}| \leq 1} |f(\vec{y})|^p \dd{\vec{y}} < \infty \} \\[1ex] 
  & L^p_{uloc}(\R^2; \R^d) := \{ f : \sup_{\vec{x} \in \R^2} \int_{|\vec{x} - \vec{y}| \leq 1} |f(\vec{y})|^p \dd{\vec{y}} < \infty \}.
  \end{align}
  That is, $L^p_{loc}$ denotes the set of functions which are $L^p$ integrable over any compact set and and $L^p_{uloc}$ denotes the set of functions whose $L^p$ integral over any unit ball is uniformly bounded by a constant. Then $H$ is essentially self-adjoint on $L^2(\R^2)$
  
  Suppose further that we can decompose the spectrum of $H$ as $\sigma(H) = \sigma_0 \cup \sigma_1$ where $\dist{(\sigma_0, \sigma_1)} > 0$ and $\diam{(\sigma_0)} < \infty$. If $P$ is the spectral projector associated with the set $\sigma_0$ then $P$ admits an exponentially localized kernel in the sense of Definition~\ref{def:exp-loc-kern}.
\end{proposition}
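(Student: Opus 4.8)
The plan is to prove the two assertions separately. Essential self-adjointness follows from the Leinfelder--Simader theorem combined with a relatively bounded perturbation argument: since $A \in L^4_{\mathrm{loc}}$ and $\text{div}(A) \in L^2_{\mathrm{loc}}$, the magnetic Laplacian $(i\nabla + A)^2$ is essentially self-adjoint on $C_c^\infty(\R^2)$; and since $V \in L^2_{\mathrm{uloc}}(\R^2)$ lies in the two-dimensional Kato class $\cK_2$ (for any $p>1$ one has $L^p_{\mathrm{uloc}}(\R^2) \subset \cK_2$ because $|\log|\cdot||$ is locally $L^{p'}$), it is infinitesimally form-bounded relative to $-\Delta$, hence relative to $(i\nabla+A)^2$ by the diamagnetic inequality; the Kato--Rellich theorem then yields essential self-adjointness of $H$, and in particular $H$ is bounded below. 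All of this is contained in the cited works.

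For the exponential localization I would use a Combes--Thomas estimate, after recording two standard facts. First, by the diamagnetic inequality $|e^{-H}f| \le e^{-(-\Delta + V)}|f|$ pointwise together with the Gaussian bounds for Kato-class Schr\"odinger semigroups, $e^{-H}$ has an integral kernel with $|e^{-H}(\vec{x},\vec{x}')| \le C e^{-c|\vec{x}-\vec{x}'|^2}$; in particular $e^{-H}$ is bounded $L^2 \to L^\infty$ and $L^1 \to L^2$. Second, since $\sigma_0$ is a compact, relatively clopen subset of $\sigma(H)$ (compactness from $\diam(\sigma_0)<\infty$ and $H$ bounded below; relative openness from $\dist(\sigma_0,\sigma_1)>0$), there is a finite-length contour $\Gamma\subset\C$ enclosing $\sigma_0$, disjoint from $\sigma_1$, with $\dist(\Gamma,\sigma(H)) = d_0 > 0$, and by the holomorphic functional calculus applied to the entire function $z\mapsto e^{2z}$,
\[
  \mathbb{1}_{\sigma_0}(H)\, e^{2H} \;=\; \frac{1}{2\pi i}\oint_\Gamma e^{2z}\,(z-H)^{-1}\,\dd z \;=:\; B,
\]
a bounded operator, and $P = e^{-H} B\, e^{-H}$ because $e^{-\lambda}e^{2\lambda}e^{-\lambda}\mathbb{1}_{\sigma_0}(\lambda) = \mathbb{1}_{\sigma_0}(\lambda)$.

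Now comes the core estimate. For real $\vec{q}$ put $H_{\vec{q}} := e^{\vec{q}\cdot\vec{x}} H e^{-\vec{q}\cdot\vec{x}} = H + W_{\vec{q}}$ with $W_{\vec{q}} = 2i\,\vec{q}\cdot(i\nabla + A) - |\vec{q}|^2$. Using $\|(i\nabla + A)u\|^2 = \langle u, (H - V)u\rangle$ and the infinitesimal form-boundedness of $V$ one gets $\|(i\nabla + A)u\| \le \varepsilon\|Hu\| + C_\varepsilon\|u\|$, so $W_{\vec{q}}$ is $H$-bounded with relative bound $O(|\vec{q}|)$. Hence there is $\gamma_0 > 0$ such that for all $|\vec{q}| \le \gamma_0$ and all $z \in \Gamma$ the Neumann series $(z - H_{\vec{q}})^{-1} = (z-H)^{-1}\sum_{n\ge 0}\bigl(W_{\vec{q}}(z-H)^{-1}\bigr)^n$ converges with $\|(z - H_{\vec{q}})^{-1}\| \le 2/d_0$, and a density argument identifies $e^{\vec{q}\cdot\vec{x}}(z-H)^{-1}e^{-\vec{q}\cdot\vec{x}}$ with $(z-H_{\vec{q}})^{-1}$. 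Consequently $e^{\vec{q}\cdot\vec{x}} B e^{-\vec{q}\cdot\vec{x}} = \frac{1}{2\pi i}\oint_\Gamma e^{2z}(z - H_{\vec{q}})^{-1}\dd z$ is bounded on $L^2$ uniformly in $|\vec{q}|\le\gamma_0$, while $e^{\vec{q}\cdot\vec{x}}e^{-H}e^{-\vec{q}\cdot\vec{x}}$ has kernel $e^{\vec{q}\cdot(\vec{x}-\vec{x}')}e^{-H}(\vec{x},\vec{x}')$, which still obeys a uniform Gaussian bound (the Gaussian beats the exponential weight) and so has uniform $L^2\to L^\infty$ and $L^1\to L^2$ norms. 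Since
\[
  e^{\vec{q}\cdot\vec{x}} P e^{-\vec{q}\cdot\vec{x}} \;=\; \bigl(e^{\vec{q}\cdot\vec{x}} e^{-H} e^{-\vec{q}\cdot\vec{x}}\bigr)\bigl(e^{\vec{q}\cdot\vec{x}} B e^{-\vec{q}\cdot\vec{x}}\bigr)\bigl(e^{\vec{q}\cdot\vec{x}} e^{-H} e^{-\vec{q}\cdot\vec{x}}\bigr),
\]
the right-hand side has an integral kernel bounded by some $M<\infty$ uniformly in $|\vec{q}|\le\gamma_0$; thus $|e^{\vec{q}\cdot(\vec{x}-\vec{x}')} P(\vec{x},\vec{x}')| \le M$ a.e., and choosing $\vec{q} = \gamma_0(\vec{x}-\vec{x}')/|\vec{x}-\vec{x}'|$ gives $|P(\vec{x},\vec{x}')| \le M e^{-\gamma_0|\vec{x}-\vec{x}'|}$, the claim with $\gamma=\gamma_0$.

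I expect the main obstacle to be the rigorous handling of the Combes--Thomas conjugation: the multipliers $e^{\pm\vec{q}\cdot\vec{x}}$ are unbounded, so the identities $e^{\vec{q}\cdot\vec{x}}(z-H)^{-1}e^{-\vec{q}\cdot\vec{x}} = (z-H_{\vec{q}})^{-1}$ and the displayed factorization of $e^{\vec{q}\cdot\vec{x}}Pe^{-\vec{q}\cdot\vec{x}}$ must be justified by first defining $H_{\vec{q}}$ on the core $C_c^\infty(\R^2)$ (which is invariant under $e^{\vec q\cdot\vec x}$), checking it is sectorial with the stated resolvent bound, and then extending by density. The companion relative-boundedness estimate for $(i\nabla+A)$ and the Gaussian heat-kernel bound are classical and are precisely what the cited references supply, so the remainder of the argument is an assembly of these ingredients.
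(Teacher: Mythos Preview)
The paper does not supply its own proof of this proposition: it is stated as a result imported from the cited references \cites{1982SimonSchrodinger,Moscolari2019,2020MarcelliMoscolariPanati}, and serves only to motivate why projectors with exponentially localized kernels arise naturally. There is therefore no ``paper's proof'' to compare against.

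That said, your sketch is a correct and standard route to the result, and matches in spirit what the cited works do. The essential self-adjointness argument via Leinfelder--Simader for the magnetic Laplacian together with Kato--Rellich for the $L^2_{\mathrm{uloc}}$ potential is exactly the right combination in two dimensions. For the kernel bound, the factorization $P = e^{-H}\,B\,e^{-H}$ with $B$ given by a finite contour integral is the standard trick to trade the spectral projector for an operator whose conjugation by $e^{\vec{q}\cdot\vec{x}}$ can be controlled, and your use of the Gaussian heat-kernel bound to pass from $L^2\to L^2$ boundedness of the conjugated middle factor to a pointwise kernel estimate is the correct mechanism. One small remark: you do not need $H$ bounded below to conclude $\sigma_0$ is compact; $\diam(\sigma_0)<\infty$ already gives boundedness, and $\dist(\sigma_0,\sigma_1)>0$ forces $\sigma_0$ to be closed in $\R$. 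Also, in the final step where you choose $\vec{q}$ depending on $(\vec{x},\vec{x}')$, the a.e.\ bound should be obtained by first fixing countably many directions $\vec{q}$ and then taking a supremum, since for each fixed $\vec{q}$ the inequality only holds outside a null set; this is routine but worth noting. The technical obstacle you flag (justifying the conjugation identities on a core) is real and is handled carefully in the references you would be citing.
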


With the example of gapped Schr{\"o}dinger operators in mind, we now state our main result.
As part of these steps, we define the Japanese bracket $\la \vec{x} \ra := \sqrt{1 + | \vec{x} |^2}$.

\begin{definition}[$s$-localized generalized Wannier basis]
\label{def:s-loc}
Given an orthogonal projector $P$, we say an orthonormal basis $\{ \psi_{\alpha} \}_{\alpha \in \cI} \subseteq L^2(\R^2)$ is an \textit{$s$-localized generalized Wannier basis for $P$} for some $s > 0$ if:
\begin{enumerate}
\item The collection $\{ \psi_{\alpha} \}_{\alpha \in \cI}$ spans $\range{(P)}$,
\item There exists a finite, positive constant $C$ and a collection of points $\{ \vec{\mu}_{\alpha} \}_{\alpha \in \cI} \subseteq \R^2$ such that for all $\alpha \in \cI$
\[
\int_{\R^2} \la \vec{x} - \vec{\mu}_{\alpha} \ra^{2s} |\psi_{\alpha}(\vec{x})|^2 \dd{\vec{x}} \leq C.
\]
We refer to the collection $\{ \vec{\mu}_{\alpha} \}$ as the \textit{center points} for $\{ \psi_{\alpha} \}$
\end{enumerate}
\end{definition}

\begin{definition}[Exponentially localized generalized Wannier basis]
\label{def:exp-loc}
Given an orthogonal projector $P$, we say an orthonormal basis $\{ \psi_{\alpha} \}_{\alpha \in \cI} \subseteq L^2(\R^2)$ is an \textit{exponentially localized generalized Wannier basis for $P$} if:
\begin{enumerate}
\item The collection $\{ \psi_{\alpha} \}_{\alpha \in \cI}$ spans $\range{(P)}$,
\item There exists finite, positive constants $(C, \gamma)$ and a collection of points $\{ \vec{\mu}_{\alpha} \}_{\alpha \in \cI} \subseteq \R^2$ such that for all $\alpha \in \cI$
\[
\int_{\R^2} e^{2\gamma \la \vec{x} - \vec{\mu}_{\alpha} \ra} |\psi_{\alpha}(\vec{x})|^2 \dd{\vec{x}} \leq C.
\]
As before, we refer to the collection $\{ \vec{\mu}_{\alpha} \}$ as the \textit{center points} for $\{ \psi_{\alpha} \}$
\end{enumerate}
\end{definition}

With these definitions, the main result of this paper is the following:
\begin{theorem}[Main Theorem]
  \label{thm:main}
Let $P$ be any orthogonal projector on $L^2(\R^2)$ which admits an exponentially localized kernel (Definition~\ref{def:exp-loc-kern}). Then the following statements are equivalent:
  \begin{enumerate}
  \item $P$ admits an exponentially localized generalized Wannier basis.
  \item $P$ admits an $s$-localized generalized Wannier basis for $s > 5/2$.
  \end{enumerate}
\end{theorem}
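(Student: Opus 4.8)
The implication $(1)\Rightarrow(2)$ is immediate: if $\{\psi_\alpha\}_{\alpha\in\cI}$ is an exponentially localized generalized Wannier basis with rate $\gamma$ and center points $\{\vec{\mu}_\alpha\}$, then $\la\vec{y}\ra^{2s}\le C_{s,\gamma}\,e^{2\gamma\la\vec{y}\ra}$ for a finite constant $C_{s,\gamma}$ and all $\vec{y}\in\R^2$, so the same basis and the same centers witness the $s$-localization bound of Definition~\ref{def:s-loc} for every $s>0$.

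For the converse the plan is to manufacture from $P$ alone a \emph{regularized position operator} $\widehat{X}=(\widehat{X}_1,\widehat{X}_2)$ on $\mathcal{H}:=\range(P)$ with the following properties: (i) $\widehat{X}_1$ and $\widehat{X}_2$ are self-adjoint and strongly commute; (ii) each $\widehat{X}_j-PX_jP$, a priori defined on a dense domain, extends to a bounded self-adjoint operator on $\mathcal{H}$ with an exponentially localized kernel; (iii) for every unit cube $Q\subseteq\R^2$ and every $R>0$ one has $\bigl\|\,\mathbf{1}_{\{\dist(\cdot,Q)>R\}}\,\chi_Q(\widehat{X})\,\bigr\|\le Ce^{-\gamma'R}$, with $\gamma'$ controlled by the decay rate of $P$. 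Granting $\widehat{X}$, one concludes as follows. Tile $\R^2$ by half-open unit cubes $\{Q_\alpha\}$ with centers $\{\vec{\mu}_\alpha\}$; since $\widehat{X}_1,\widehat{X}_2$ strongly commute, $\sum_\alpha\chi_{Q_\alpha}(\widehat{X})=I_{\mathcal{H}}$, and choosing an orthonormal basis inside each subspace $\range\bigl(\chi_{Q_\alpha}(\widehat{X})\bigr)$ and collecting them yields an orthonormal basis of $\range(P)$, each element $\psi$ of which lies in some $\range\bigl(\chi_{Q_\alpha}(\widehat{X})\bigr)$. For such a $\psi$, property (iii) gives $\|\mathbf{1}_{\{\dist(\cdot,Q_\alpha)>R\}}\psi\|=\|\mathbf{1}_{\{\dist(\cdot,Q_\alpha)>R\}}\chi_{Q_\alpha}(\widehat{X})\psi\|\le Ce^{-\gamma'R}$, i.e.\ uniform exponential $L^2$-decay about $\vec{\mu}_\alpha$, which upgrades by the standard layer-cake argument to $\int_{\R^2}e^{2\gamma''\la\vec{x}-\vec{\mu}_\alpha\ra}|\psi(\vec{x})|^2\dd{\vec{x}}\le C$ for any $\gamma''<\gamma'$. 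Thus the collected basis is an exponentially localized generalized Wannier basis.

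The construction of $\widehat{X}$ begins from $A_j:=PX_jP$ on $\mathcal{H}$. Expanding $P=\sum_\alpha|w_\alpha\ra\la w_\alpha|$ in the given $s$-localized basis ($s>5/2$, centers $\vec{\mu}_\alpha$), the moment bound forces $\|(X_j-(\vec{\mu}_\alpha)_j)w_\alpha\|_{L^2}\le C$ and $|\la w_\alpha|X_j|w_\beta\ra|\le C\la\vec{\mu}_\alpha-\vec{\mu}_\beta\ra^{-(s-1)}$ for $\alpha\ne\beta$; together with the fact that exponential localization of the kernel of $P$ bounds the number of centers in any unit ball, these estimates are used to realize $A_j$ as a self-adjoint operator. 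The pair $(A_1,A_2)$ fails to commute, but a short computation using $P^2=P$ gives $[A_1,A_2]=P\,[[P,X_1],[P,X_2]]\,P$, and since $[P,X_j]$ has the kernel $(y_j-x_j)P(\vec{x},\vec{y})$ — bounded and exponentially localized because the linear factor is absorbed by the exponential decay of $P$ — the operator $[A_1,A_2]$ is bounded on $\mathcal{H}$ and has an exponentially localized kernel. One then seeks a bounded, self-adjoint, exponentially localized correction $(B_1,B_2)$ so that $\widehat{X}_j:=A_j+B_j$ satisfy $[\widehat{X}_1,\widehat{X}_2]=0$, i.e.\ $[A_1,B_2]+[B_1,A_2]+[B_1,B_2]=-[A_1,A_2]$; this is solved by an iteration in the spirit of Nenciu's construction of exponentially localized Wannier functions once the associated obstruction vanishes, with estimates in the spirit of Combes--Thomas — propagating exponential decay of kernels through products, inverses, and functional calculus, as in the proof of Proposition~\ref{prop:h} — ensuring that exponential localization is created and then not degraded beyond a fixed rate. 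Property (iii) follows from the same input together with property (ii), after replacing the sharp cutoff by a smooth function of $\widehat{X}$ equal to $1$ on $Q$.

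The step I expect to be genuinely hard is this correction, and within it the verification that the obstruction vanishes: the linearized equation $[A_1,B_2]+[B_1,A_2]=-[A_1,A_2]$ behaves like division by $x_1-x_1'$, and the obstruction to solving it — for periodic $P$ this is exactly (a multiple of) the Chern number $\tr\bigl(P[[P,X_1],[P,X_2]]\bigr)$ — must be shown to vanish \emph{directly from the mere existence of an $s$-localized basis}. This is where the $s$-localization is really consumed: one expresses the obstruction as a sum over the basis $\{w_\alpha\}$ and runs a discrete summation-by-parts in the centers $\vec{\mu}_\alpha$, which in dimension two requires controlling the growth $\la\vec{\mu}_\alpha\ra$ against the decay $\la\vec{\mu}_\alpha-\vec{\mu}_\beta\ra^{-(s-1)}$ of the off-diagonal position matrix elements (and a second factor of the same type entering through the product structure of the commutator); the requirement that these sums converge and telescope to zero is what pins the threshold at $s>5/2$ in $d=2$. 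The remaining points — essential self-adjointness of $PX_jP$, strong commutativity of the corrected pair, and the choice of orthonormal bases inside the cube subspaces — are comparatively routine and use the $s$-localization only through the above estimates and to pin down the centers $\vec{\mu}_\alpha$.
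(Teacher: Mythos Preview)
Your easy direction $(1)\Rightarrow(2)$ is fine and matches the paper. The hard direction is where your proposal and the paper diverge, and where your argument has a genuine gap.

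The paper does \emph{not} attempt to build a strongly commuting pair $(\widehat{X}_1,\widehat{X}_2)$. Instead it constructs a \emph{single} self-adjoint operator $\widehat{X}$ close to $X$ in operator norm, exponentially localized in the sense $\|\widehat{X}_\gamma-\widehat{X}\|\le C\gamma$, and such that $P\widehat{X}P$ has \emph{uniform spectral gaps} (spectrum contained in $\bigcup_m[m-\tfrac14,m+\tfrac14]$). The $s$-localized basis is used directly to define an intermediate operator $\tilde{X}=\sum_{\vec{m},j}m_1|\psi_{\vec{m}}^{(j)}\rangle\langle\psi_{\vec{m}}^{(j)}|+QXQ$, which is already integer-valued on $\range(P)$; a Hastings-type smoothing $\widehat{X}=\int f(t_1)f(t_2)e^{i(Xt_1+Yt_2)/\Delta}\tilde{X}e^{-i(Xt_1+Yt_2)/\Delta}\,dt$ then enforces exponential locality, and the condition $s>5/2$ enters only in showing that this smoothing perturbs $P\tilde{X}P$ little enough to preserve the gaps. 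The band projectors of $P\widehat{X}P$ give a decomposition $P=\sum_jP_j$ into quasi-one-dimensional pieces, and the ELWFs are obtained as eigenfunctions of $P_jYP_j$ via the authors' earlier work. No Chern-marker computation and no commutator equation are needed.

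Your route, by contrast, hinges on solving $[A_1,B_2]+[B_1,A_2]+[B_1,B_2]=-[A_1,A_2]$ for bounded, exponentially localized, self-adjoint $(B_1,B_2)$. You correctly identify this as the crux and say it ``is solved by an iteration in the spirit of Nenciu's construction once the associated obstruction vanishes,'' but this is precisely the step that is not available in the non-periodic setting. Vanishing of the Chern marker (a single scalar, a trace-per-unit-volume) does not by itself yield solvability of an operator equation; in the periodic case the reduction works because Bloch theory fibers the problem and the obstruction is the only cohomological one, but without translation invariance there is no known mechanism to pass from ``the scalar invariant is zero'' to ``the commutator equation has a bounded, exponentially localized solution.'' Results of Hastings--Loring type on nearly commuting matrices are finite-dimensional and do not transfer to unbounded operators on $L^2(\R^2)$ in the form you need. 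So as written, your proof has a hole exactly at the point you flag as hard: you assert the correction exists but give no argument that survives the loss of periodicity. The paper's one-operator/uniform-gaps strategy is designed precisely to avoid this obstacle.
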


\subsection{Connection with Previous Work and Discussions}
\label{sec:conn-with-prev-work}
As discussed in the beginning of this section, for a two dimensional gapped, periodic Sch{\"o}dinger operator, it is now well understood that a collection of isolated bands admit a well localized basis if and only if the Chern number associated with those bands vanish. Given the nature of topological invariants, it is natural to expect that the localization dichotomy should also be true under small perturbations which do not close the gap of the Hamiltonian. The main result of this paper shows a weaker form of the localization dichotomy for \textit{all} gapped systems, not just those which are close to periodic.

Since we make no symmetry assumptions on our underlying system, the usual method of applying the Bloch-Floquet transform to construct Wannier functions cannot be applied. Instead, we use an alternate technique for constructing (generalized) Wannier functions first proposed by Kivelson and was later expanded by Nenciu-Nenciu\footnote{We will only highlight the key ideas of the  Kivelson and Nenciu-Nenciu return to discuss these works in more detail in Section \ref{sec:proj-pos}.}. In these works, the authors proposed defining the Wannier functions of a one dimensional system to be the eigenfunctions of the operator $PXP$. For periodic systems, this definition agrees with the usual definition of Wannier functions and are exponentially localized \cite{1998NenciuNenciu}.

While the eigenfunctions of $PXP$ are a simple way of constructing ELWFs in one dimension, extending this idea to two dimensions is not straightforward due to the existence of topological obstructions. Following our previous work \cite{2020StubbsWatsonLu1}, we construct ELWFs by assuming it is possible to decompose the projector $P$ into a family of mutually orthogonal projectors $\{ P_j \}_{j \in \cJ}$ so that each $P_j$ (1) admits an exponentially localized kernel, and (2) is localized in $X$. Due to the localization in $X$, the range of each $P_j$ is effectively one dimensional. Therefore, by extending the results of Kivelson and Nenciu-Nenciu, it can be shown that for each $j \in \cJ$ eigenfunctions of $P_j Y P_j$ are exponentially localized in both $X$ and $Y$. Hence, to construct ELWFs in two dimensions it is sufficient to construct the family $\{ P_j \}_{j \in \cJ}$.

In periodic systems, the ability to decompose $P$ into the family $\{ P_j \}_{j \in \cJ}$ is both necessary and sufficient for the system to be topologically trivial as a consequence of the localization dichotomy. To see why this is the case, suppose that the set $\{ \psi^{(1)}, \psi^{(2)}, \cdots, \psi^{(N)} \}$ generates a basis of Wannier functions in an $N$-band system under the group of translations $\{ T_{(j,k)} :  (j,k) \in \Z^2 \}$ where $T_{(j,k)}$ inherits the group structure of $\Z^2$ in the obvious way. In this case, the projectors $\{ P_j \}_{j \in \Z}$ can be constructed by the formula:
\[
  P_j := \sum_{k \in \Z} \sum_{n=1}^N T_{(j,k)} \ket{\psi^{(n)}} \bra{\psi^{(n)}}T_{(j,k)}^\dagger.
\]
If we additionally assume that $\psi^{(n)}$ decays exponentially quickly in space, then one easily checks that $P_j$ admits an exponentially localized kernel and is localized in $X$. The key point of our work is that the family of projectors $\{ P_j \}$ can be constructed in both periodic and non-periodic systems \textit{without} assuming the existence of an exponentially localized basis; fast enough algebraic decay suffices. 

Generalized Wannier bases in two dimensions have been considered and defined in a few previous works \cites{1998NenciuNenciu,2019MarcelliMonacoMoscolariPanati,2020MarcelliMoscolariPanati}. For systems without additional symmetries, the definition we give in this work agrees with the definition given by Nenciu-Nenciu in \cites{1998NenciuNenciu}. In contrast, the recent works by Marcelli, Monaco, Moscolari, and Panati \cites{2019MarcelliMonacoMoscolariPanati} and Marcelli, Moscolari and Panati \cites{2020MarcelliMoscolariPanati} have both assumed that the center points of the generalized Wannier basis lie on a set with fairly rigid structure. In our recent work, we showed that $P$ admitting an exponentially localized kernel implies that, from the perspective of localization, we can assume without loss of generality that the center points of the generalized Wannier basis lie on the integer lattice \cite{2021LuStubbs}. We will review the proof of this reduction in Section \ref{sec:bdd-density}.

To conclude this section, we recall Localization Dichotomy Conjecture for Non-Periodic systems introduced in \cites{2019MarcelliMonacoMoscolariPanati}. The natural extension of the Chern number to the non-periodic case was introduced in \cites{2019MarcelliMonacoMoscolariPanati} (see also \cites{2018CorneanMonacoMoscolari}) and is known as \textit{Chern marker}.
\begin{definition}[Chern Marker]
Let $P$ be a projection on $L^2(\R^2)$ and $\chi_L$ be the indicator function of the set $(-L,L]^2$.  The \textit{Chern marker} of $P$ is defined by
  \[
    C(P) := \lim_{L \rightarrow \infty} \frac{2 \pi i}{4 L^2} \tr{\left( \chi_{L} P \Big[ [ X, P ], [ Y, P ] \Big] P \chi_{L}\right)}
  \]
  whenever the limit on the right hand side exists.
\end{definition}
With this definition, the Localization Dichotomy Conjecture as stated in \cites{2019MarcelliMonacoMoscolariPanati} is the following:
\begin{conjecture}[Localization Dichotomy Conjecture]
  Let $P$ be the spectral projector onto $\sigma_0$ for a Hamiltonian $H$ satisfying the assumptions of Proposition \ref{prop:h}. Then the following statements are equivalent:
  \begin{enumerate}[label=(\alph*)]
  \item $P$ admits a generalized Wannier basis that is exponentially localized.
  \item $P$ admits a generalized Wannier basis that is $s$-localized for $s = 1$.
  \item $P$ is topologically trivial in the sense that its Chern marker $C(P)$ exists and is equal to zero.
  \end{enumerate}
\end{conjecture}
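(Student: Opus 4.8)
This conjecture asks for the three-way equivalence $(a)\Leftrightarrow(b)\Leftrightarrow(c)$, and the plan is to prove it by closing the cycle $(a)\Rightarrow(b)\Rightarrow(c)\Rightarrow(a)$. Since $H$ satisfies the hypotheses of Proposition~\ref{prop:h}, the projector $P$ admits an exponentially localized kernel, so Theorem~\ref{thm:main} applies and the trace defining the Chern marker has a chance of converging. The implication $(a)\Rightarrow(b)$ is free: an exponentially localized generalized Wannier basis satisfies the bound in Definition~\ref{def:s-loc} for every $s>0$, in particular for $s=1$. For $(c)\Rightarrow(a)$ the strategy is to use Theorem~\ref{thm:main} as a black box, so that it suffices to construct an $s$-localized generalized Wannier basis for \emph{some} $s>5/2$ (equivalently, the mutually orthogonal family $\{P_j\}_{j\in\cJ}$ described above, each $P_j$ exponentially localized and localized in $X$); all of the analytic work converting algebraic to exponential decay is then already done, and what remains is purely topological.

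For $(b)\Rightarrow(c)$ I would imitate, at the level of traces per unit volume, the periodic argument of \cite{2018MonacoPanatiPisanteTeufel}; here the $s=1$ hypothesis is precisely the non-periodic counterpart of their finite-second-moment condition. Write $P=\sum_{\alpha\in\cI}p_\alpha$ with $p_\alpha=\ket{\psi_\alpha}\bra{\psi_\alpha}$, and after the reduction of \cite{2021LuStubbs} assume the centers $\{\vec\mu_\alpha\}$ sit on $\Z^2$ so that the book-keeping over the box $(-L,L]^2$ is uniform. The exponentially localized kernel of $P$ already forces the defining trace to be $O(L^2)$, so the issue is to identify its leading coefficient. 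Expanding $P\big[[X,P],[Y,P]\big]P$ through the resolutions $P=\sum_\alpha p_\alpha$ and $[X,P]=\sum_\alpha[X,p_\alpha]$ splits the trace into single-site terms $\tr\big(p_\alpha[[X,p_\alpha],[Y,p_\alpha]]p_\alpha\big)$ and cross terms involving at least two distinct Wannier indices. A direct computation shows each single-site term equals $-2i\,\imag\la(1-p_\alpha)X\psi_\alpha,(1-p_\alpha)Y\psi_\alpha\ra$ (here $X\psi_\alpha,Y\psi_\alpha\in L^2$ by the $s=1$ bound), which vanishes because $X$ and $Y$ are commuting real multiplication operators, so only the cross terms survive; the second-moment bound together with the exponential decay of $P$'s kernel should bound their aggregate over the box by $o(L^2)$, along with the $o(L^2)$ boundary error from the cutoff $\chi_L$. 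Hence $C(P)$ exists and equals zero.

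The genuine obstacle is $(c)\Rightarrow(a)$, i.e.\ building the family $\{P_j\}$ (or any $s$-localized basis with $s>5/2$) out of the single hypothesis $C(P)=0$; in the periodic setting this is exactly where Bloch--Floquet theory and triviality of the Bloch bundle enter, and neither is available here. The approach I would attempt is a strip-by-strip construction: cut $\R^2$ into horizontal strips $S_j=\R\times[j,j+1)$, form regularized versions $Q_j$ of $\chi_{S_j}P\chi_{S_j}$, and observe that $\sum_j Q_j$ differs from $P$ only near the cuts, with the defect across the $j$-th cut carrying a local ``index'' measuring the Chern contribution accumulated below height $j$. The point would be that $C(P)=0$ forces these boundary indices to vanish (or to telescope), so that $\{Q_j\}$ can be corrected into genuinely orthogonal projectors $\{P_j\}$ by a finite-range dressing that preserves exponential localization and $X$-localization; applying the $PYP$-type construction of Kivelson and Nenciu--Nenciu to each $P_j$ then yields the basis. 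Making this ``boundary index'' precise without translation symmetry, proving it is controlled by the Chern marker, and performing the dressing without destroying localization are the three places where new arguments are needed, and I expect this to be the hard part --- it is precisely the step the present paper leaves open. An alternative route that avoids topology in the analytic equivalence would be to sharpen Theorem~\ref{thm:main} from $s>5/2$ to $s=1$, making $(b)\Rightarrow(a)$ direct; but the loss of exponent in the current proof enters at several independent points (the $X$-localization of the $P_j$, the bounded-density reduction, and the one-dimensional $PYP$ estimates), each would have to be made lossless, and even then $(b)\Leftrightarrow(c)$ would still have to be established.
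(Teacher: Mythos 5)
The statement you are addressing is labeled a \emph{Conjecture} in the paper, and the paper does not prove it: it explicitly records only the partial implication (b) $\Rightarrow$ (a) for $s>5/2$ (its Theorem~\ref{thm:main} combined with Proposition~\ref{prop:h}), and cites other work for (b) $\Rightarrow$ (c) under the stronger hypothesis $s>5$, later $s>1$. Your proposal is therefore being measured against an open problem, and as you yourself acknowledge, it does not close it. Of the three arrows in your cycle, only (a) $\Rightarrow$ (b) is actually proved (and that one is indeed immediate, since $\la \vec{x}-\vec{\mu}_\alpha\ra^{2s}\leq C_{s,\gamma}e^{2\gamma\la \vec{x}-\vec{\mu}_\alpha\ra}$ gives a uniform constant). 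For (b) $\Rightarrow$ (c) your sketch hinges on the assertion that the cross terms and the boundary error from $\chi_L$ aggregate to $o(L^2)$; that is exactly the estimate that constitutes the known proofs, and those proofs require $s>1$ strictly --- at $s=1$ the relevant sums are borderline and no argument is supplied. The single-site cancellation you compute is fine but is the easy part.

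The decisive gap is (c) $\Rightarrow$ (a). The strip-by-strip construction with a ``local boundary index'' is a plausible heuristic, but each of its three ingredients --- defining the index without translation invariance, proving it is controlled by the vanishing of the Chern marker, and orthogonalizing the $Q_j$ into genuine projectors $P_j$ without destroying exponential and $X$-localization --- is left entirely unproved, and you correctly identify these as requiring new ideas. In short, what you have written is an accurate map of the open problem and of how the paper's Theorem~\ref{thm:main} would be used as a black box once the topological input is converted into the family $\{P_j\}$; it is not a proof, and it should not be presented as one. If your goal is to contribute something provable here, the two concrete targets consistent with the paper's own discussion are: lowering the threshold in Theorem~\ref{thm:main} below $5/2$ (the paper flags this as needing a new technical ingredient), or establishing (b) $\Rightarrow$ (c) at the endpoint $s=1$.
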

The main result of this work shows that (b) $\Rightarrow$ (a) for $s > 5/2$. In recent work \cites{2020MarcelliMoscolariPanati}, Marcelli, Moscolari, and Panati have shown that (b) $\Rightarrow$ (c) for $s > 5$ which was later improved to $s > 1$ in \cite{2021LuStubbs}. We remark that it is likely that the condition $s > 5/2$ can be weakened for (b) $\Rightarrow$ (a), while new technical ingredient is needed beyond the current analysis. 

While in this work we only consider systems in two dimensions, we expect that the techniques we use here can be generalized to arbitrarily high dimension by an inductive procedure (see \cite[Section 8]{2020StubbsWatsonLu1}). In particular, we predict using these techniques one can establish in $d$-dimensions that (b) $\Rightarrow$ (c) for $s > d + \frac{1}{2}$. We leave this intriguing direction to future research.

\section{Notation and Conventions}
\label{sec:notation}
We begin by fixing some notations. Vectors in $\R^d$ will be denoted by bold face with their components denoted by subscripts. For example, $\vec{v} = (v_1, v_2, v_3, \cdots, v_d) \in \R^d$. For any $\vec{v} \in \R^d$, we use $| \cdot |$ to denote its Euclidean norm. That is $| \vec{v} | := \bigl(\sum_{i=1}^d v_i^2\bigr)^{1/2}$.
For any $f : \R^2 \rightarrow \C$, we will use $\|f\|$ to denote the $L^2$-norm. For any linear operator $A$ on $L^2(\R^2)$, we will use $\|A\|$ to denote the spectral norm. 

Given a point $\vec{a} \in \R^2$, and a non-negative constant $\gamma \geq 0$, we define an exponential growth operator, $B_{\gamma,\vec{a}}$, by 
\begin{equation}
  B_{\gamma,\vec{a}} := \exp\left( \gamma \sqrt{ 1 + (X-a_1)^2 + (Y-a_2)^2 } \right) = e^{\gamma \lvert (X - a_1,\, Y - a_2,\, 1) \rvert},
\end{equation}
where $X$ and $Y$ are the standard position operators: $X f(\vec{x}) = x_1 f(\vec{x})$, $Y f(\vec{x}) = x_2 f(\vec{x})$.
Given a linear operator $A$, we define 
\begin{equation} \label{eq:notation_1}
  A_{\gamma,\vec{a}} := B_{\gamma,\vec{a}} A B_{\gamma,\vec{a}}^{-1}.
\end{equation}
We refer to $A_{\gamma,\vec{a}}$ as ``exponentially-tilted'' relative to $A$. We will often prove estimates where we use the notation~\eqref{eq:notation_1} but omit the point $\vec{a}$. In this case the estimate should be understood as uniform in the choice of point $\vec{a}$. As a note, per our convention, when $\gamma = 0$, $A_{\gamma,\vec{a}} = A$. 

Throughout this paper, we will assume that the orthogonal projector $P$ is fixed and admits an exponentially localized kernel. For this projection, we also define the operators $Q$, $P_{\gamma,\vec{a}}$, $Q_{\gamma,\vec{a}}$:
\begin{align}
    & Q := I - P \\
    & P_{\gamma,\vec{a}} := B_{\gamma,\vec{a}} P B_{\gamma,\vec{a}}^{-1} \\
    & Q_{\gamma,\vec{a}} := B_{\gamma,\vec{a}} Q B_{\gamma,\vec{a}}^{-1} = I - P_{\gamma,\vec{a}}    
\end{align}
where we have used our convention for exponentially tilted operators defined above. Observe that 
\[
P_{\gamma,\vec{a}}^2 = \Bigl( B_{\gamma,\vec{a}} P B_{\gamma,\vec{a}}^{-1} \Bigr) \Bigl( B_{\gamma,\vec{a}} P B_{\gamma,\vec{a}}^{-1} \Bigr) =  B_{\gamma,\vec{a}} P^2 B_{\gamma,\vec{a}}^{-1} = P_{\gamma,\vec{a}}
\]
so $P_{\gamma,\vec{a}}$ is also a projection. Similarly, it can be checked that $Q_{\gamma,\vec{a}}$ is also a projection.

\section{Projected Position Operators and Exponential Localization}
\label{sec:proj-pos}
In this section, we will review our strategy for proving the existence of an exponentially localized basis of $\range{(P)}$ in two dimensions. Our overall strategy is based on our previous work \cite{2020StubbsWatsonLu1} which generalizes the proof techniques used by Nenciu-Nenciu for one dimensional systems in \cites{1998NenciuNenciu} to two dimensions. We will first review the results in one dimension in Section \ref{sec:pxp-1d} and then our extension to two dimensions in Section \ref{sec:pjypj-2d}.

%Before sketching the strategy in two dimensions, we will outline our generalization of Nenciu-Nenciu's argument in one dimension in Section~\ref{sec:pxp-1d}. The corresponding proof in two dimensions follows the roughly the same steps as the one dimensional case with some additional terms which make the calculations more complicated. In Section~\ref{sec:pjypj-2d}, we will outline the changes needed for two dimensions as well as higher dimensional generalizations. For full details on the calculations in two dimensions, we refer the readers to \cites{2020StubbsWatsonLu1}. 

\subsection{Projected Position Operators in One Dimension}
\label{sec:pxp-1d}
The idea of relating projected position operators can be traced back to a proposition by Kivelson for defining Wannier functions in non-periodic system. In \cite{1982Kivelson}, Kivelson proposed defining generalized Wannier functions to be the eigenfunctions of the projected position operator $PXP$. To support this proposal, Kivelson showed in centrosymmetric crystals the exponentially localized Wannier functions found by Kohn in \cite{1959Kohn} are in fact eigenfunctions of $PXP$. After the work by Kivelson, Niu \cite{1991Niu} argued heuristically that in one dimension the eigenfunctions of $PXP$ should decay faster than any polynomial. Later, Nenciu-Nenciu \cite{1998NenciuNenciu} showed rigorously under very general assumptions in one dimension the operator $PXP$ has discrete spectrum and its eigenfunctions are exponentially localized.

While Nenciu-Nenciu's original result was for spectral projectors of Hamiltonians of a particular form, using techniques developed in \cite{2020StubbsWatsonLu1}, one can show that in fact the eigenfunctions of $PXP$ exist and are exponentially localized only making use of the fact that $P$ admits an exponentially localized kernel. In particular, one can show the following:
\begin{theorem}
  Let $P$ be an orthogonal projector on $L^2(\R)$. If $P$ admits an exponentially localized kernel then the following is true 
  \begin{enumerate}[itemsep=1ex]
  \item The operator $PXP$ is essentially self-adjoint.
  \item The operator $PXP$ has discrete spectrum.
  \item There exists finite, positive constants $(C, \tilde{\gamma})$ such that if $\psi \in \range{(P)}$ and $PXP \psi = \eta \psi$, then 
    \[
      \int e^{2 \tilde{\gamma} \sqrt{1 + (x - \eta)^2}} |\psi(x)|^2 \, \text{\emph{d}}x \leq C.
    \]
  \end{enumerate}
  Therefore by the spectral theorem, the collection of all eigenfunctions of the operator $PXP$ form an exponentially localized, orthogonal basis for $\range{(P)}$.
\end{theorem}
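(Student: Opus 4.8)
The plan is to produce an honest self-adjoint realization of $PXP$ as the restriction of a \emph{bounded} perturbation of $X$, to deduce discreteness of its spectrum from a ``bounded density'' estimate for $P$, and to obtain exponential localization of its eigenfunctions by a Gr\"onwall argument in the exponential--tilting parameter. Write $\gamma_0$ for the localization rate of the kernel of $P$ and fix once and for all some $0<\gamma_1<\gamma_0$. First I would record three consequences of $|P(x,x')|\le Ce^{-\gamma_0|x-x'|}$: by the Schur test the commutator $[X,P]$, whose kernel is $(x-x')P(x,x')$, is bounded; more generally, for $0\le\gamma\le\gamma_1$ and any $a$ the tilted operators $[X,P]_{\gamma,a}$ and $P_{\gamma,a}-P$ have kernels dominated by $C|x-x'|e^{-(\gamma_0-\gamma)|x-x'|}$ (using $|\la x-a\ra-\la x'-a\ra|\le|x-x'|$), hence are bounded uniformly in $a$ and in $\gamma\le\gamma_1$; and, since $P=P^*P$, the diagonal obeys $P(x,x)=\int|P(x,y)|^2\,\dd y\le C^2/\gamma_0$, so $\tr(\chi_{[-M,M]}P\chi_{[-M,M]})=\int_{-M}^{M}P(x,x)\,\dd x\le 2MC^2/\gamma_0$ for every $M>0$.

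For essential self-adjointness I would set $\widehat X:=PXP+QXQ$ and observe that, because $XP=PX+[X,P]$ and $XQ=QX-[X,P]$, one has $\widehat X=X-(Q-P)[X,P]$, a bounded symmetric perturbation of the self-adjoint operator $X$; hence $\widehat X$ is self-adjoint on $D(X)$. A direct computation gives $\widehat XP=P\widehat X=PXP$ on $D(X)$, and $P$ maps $D(X)$ into $D(X)$, so $P$ commutes with the resolvent of $\widehat X$, $\range(P)$ reduces $\widehat X$, and $\widehat X|_{\range(P)}$ is a self-adjoint operator on $\range(P)$ that acts as $PXP$ on $\range(P)\cap D(X)$. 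This is the self-adjoint realization in~(1).

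Next I would establish the a priori bound underlying both (2) and (3): if $\psi\in\range(P)$ and $PXP\psi=\eta\psi$, then from $\psi=P\psi$ and $PX\psi=\eta\psi$ one gets $(X-\eta)\psi=QX\psi=QXP\psi=Q[X,P]\psi$, so $\|(X-\eta)\psi\|\le\|[X,P]\|\,\|\psi\|$; in particular $\psi\in D(X)$. The same identity applied to $\psi\in\range E_{[\lambda-1,\lambda+1]}(\widehat X|_{\range(P)})$ gives $\|(X-\lambda)\psi\|\le(1+\|[X,P]\|)\|\psi\|$. For (2), fix $\lambda$ and let $V:=\range E_{[\lambda-1,\lambda+1]}$; Chebyshev's inequality then provides $M=M(\lambda)$ such that $\|(1-\chi_{[-M,M]})\psi\|\le\tfrac14$ for every unit $\psi\in V$, and if $\psi_1,\dots,\psi_N\in V$ are orthonormal, writing $\psi_i=\chi_{[-M,M]}\psi_i+v_i$ with $\|v_i\|\le\tfrac14$ and using $\la\psi_i,P\psi_i\ra=1$ forces $\la\chi_{[-M,M]}\psi_i,\,P\,\chi_{[-M,M]}\psi_i\ra\ge c$ for an absolute constant $c>0$; summing over $i$ and bounding by the trace of the nonnegative operator $\chi_{[-M,M]}P\chi_{[-M,M]}$ yields $Nc\le 2MC^2/\gamma_0$, so $E_{[\lambda-1,\lambda+1]}$ has finite rank. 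As $\lambda$ is arbitrary, the spectrum of $\widehat X|_{\range(P)}$ is a discrete set of finite-multiplicity eigenvalues, which is~(2) and produces an orthonormal eigenbasis of $\range(P)$.

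The heart of the matter is (3), and this is where I expect the main difficulty. Fix an eigenpair $(\psi,\eta)$ with $\|\psi\|=1$. To keep every manipulation between $L^2$ vectors and bounded operators I would regularize the tilt: set $B^{(\epsilon)}:=e^{F_\epsilon(X)}$ with $F_\epsilon(x)=\gamma\la x-\eta\ra/(1+\epsilon\la x-\eta\ra)$, which is bounded, increases to $B_{\gamma,\eta}$ as $\epsilon\downarrow0$, and satisfies $|F_\epsilon'|\le\gamma$, so the tilted-kernel bounds from the first paragraph hold uniformly in $\epsilon$. Put $\psi_\gamma:=B^{(\epsilon)}\psi$; since $(X-\eta)\psi\in L^2$ (previous paragraph) and multiplication operators commute, $\psi_\gamma\in D(X)$ with $(X-\eta)\psi_\gamma=B^{(\epsilon)}(X-\eta)\psi$, and the eigenvalue relation $P(X-\eta)\psi=0$ becomes $P^{(\epsilon)}_{\gamma,\eta}(X-\eta)\psi_\gamma=0$; a short computation then gives $(X-\eta)\psi_\gamma=Q^{(\epsilon)}_{\gamma,\eta}(X-\eta)\psi_\gamma=Q^{(\epsilon)}_{\gamma,\eta}[X,P]^{(\epsilon)}_{\gamma,\eta}\psi_\gamma$, so $\|(X-\eta)\psi_\gamma\|\le C_1\|\psi_\gamma\|$ with $C_1$ independent of $\eta$, $\epsilon$, and $\gamma\le\gamma_1$. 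I would then differentiate in $\gamma$ (legitimate for fixed $\epsilon>0$ since $F_\epsilon$ is bounded): $\partial_\gamma\|\psi_\gamma\|^2=2\la\psi_\gamma,(\partial_\gamma F_\epsilon)(X)\psi_\gamma\ra\le 2\la\psi_\gamma,\la X-\eta\ra\psi_\gamma\ra\le 2\bigl(\|\psi_\gamma\|^2+\|\psi_\gamma\|\,\|(X-\eta)\psi_\gamma\|\bigr)\le 2(1+C_1)\|\psi_\gamma\|^2$, and Gr\"onwall from $\gamma=0$ gives $\|\psi_\gamma\|^2\le e^{2(1+C_1)\gamma}$ uniformly in $\epsilon$; monotone convergence as $\epsilon\downarrow0$ then yields $\int e^{2\gamma_1\la x-\eta\ra}|\psi(x)|^2\,\dd x\le e^{2(1+C_1)\gamma_1}$, which is~(3) with $\tilde\gamma=\gamma_1$, and combined with the eigenbasis above the spectral theorem gives the exponentially localized orthonormal basis of $\range(P)$. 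The hard part is orchestrating this last argument: one must first prove $\psi\in D(X)$ before the tilted eigenvalue identity even makes sense, carry the regularization $\epsilon$ throughout so that all operators in play are bounded, and recognize that the self-improving inequality closes using only the first-moment bound $\|(X-\eta)\psi_\gamma\|\lesssim\|\psi_\gamma\|$ together with the \emph{uniform} boundedness of the tilted commutator $[X,P]^{(\epsilon)}_{\gamma,\eta}$. The remaining ingredients---boundedness of $[X,P]$ and its tilts, the bounded diagonal of $P$, and the counting argument---are routine Schur-test and trace estimates, and it is worth noting that the discreteness step genuinely relies on the one-dimensionality of the problem.
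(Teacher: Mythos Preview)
Your argument is correct, but note that the paper does not actually prove this one-dimensional theorem: it is stated in Section~\ref{sec:pxp-1d} as background, attributed to Nenciu--Nenciu \cite{1998NenciuNenciu} together with the techniques of \cite{2020StubbsWatsonLu1}. What the paper \emph{does} prove is the two-dimensional analog (Theorem~\ref{thm:arma-main} and Lemma~\ref{lem:arma-main}), and it is instructive to compare your route to the machinery used there. For essential self-adjointness, your decomposition $\widehat X=PXP+QXQ=X-(Q-P)[X,P]$ is precisely the bounded-perturbation idea the paper invokes (via Kato--Rellich) in Section~\ref{sec:xhat-usg}. For discreteness, your trace/counting argument based on $P(x,x)\le C$ is essentially the ``bounded density'' principle the paper isolates in Section~\ref{sec:bdd-density}; the cited works instead obtain compactness of the resolvent more indirectly, so your direct finite-rank-on-windows argument is a clean alternative. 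For exponential localization, the standard Combes--Thomas route (implicit in \cite{2020StubbsWatsonLu1} and reflected in the paper's use of the tilts $P_{\gamma,\vec a}$) is to show $\|(PXP)_\gamma-PXP\|=O(\gamma)$ and then bound the tilted resolvent near the eigenvalue; your Gr\"onwall argument in the tilt parameter $\gamma$, closed by the identity $(X-\eta)\psi_\gamma=Q_\gamma^{(\epsilon)}[X,P]_\gamma^{(\epsilon)}\psi_\gamma$, is a genuinely different and somewhat more elementary packaging of the same underlying input (uniform boundedness of the tilted commutator). Both approaches need the regularization step you include to justify the manipulations before one knows $B_{\gamma,\eta}\psi\in L^2$; your handling of that point is sound.
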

Unfortunately, in two dimensions the operator $PXP$ does not generically have compact resolvent and therefore $PXP$ may not have any $L^2$-eigenvectors. 

\subsection{Projected Position Operators in Two Dimensions}
\label{sec:pjypj-2d}
To generalize the idea of using projected position operators to define Wannier functions, in recent work \cites{2020StubbsWatsonLu1} we have shown that if it is possible to decompose $P$ into a sum of ``quasi-one dimensional'' projectors, then one can extend to construct an exponentially localized basis for $\range{(P)}$ in two dimensions. The statement proven in \cites{2020StubbsWatsonLu1} is the following:
\begin{theorem}[adapted from \text{\cite[Section 7]{2020StubbsWatsonLu1}}]
  \label{thm:arma-main}
  Suppose that $P$ is an orthogonal projector on $L^2(\R^2)$ which admits an exponentially localized kernel. Suppose further that we can decompose $P$ as a sum of orthogonal projectors $\{ P_j \}_{j \in \cJ}$ where
  \[
    P = \sum_{j \in \cJ} P_j \qquad P_j P_k =
        \begin{cases}
          P_j, & j = k; \\
          0, & j \neq k.
        \end{cases}.
  \]
  If there exist constants $(C, \gamma)$ so that the following holds
  \begin{enumerate}[itemsep=1ex]
  \item Each $P_j$ admits an exponentally localized kernel with rate $\gamma$.
  \item Each $P_j$ is concentrated along a line of the form $x = \xi_j$ in the sense that for each $P_j$ there exists a $\xi_j \in \R$ such that:
    \[
      \| ( X - \xi_j ) P_{j,\gamma} \| \leq C \quad \text{and} \quad \| P_{j,\gamma} ( X - \xi_j ) \| \leq C.
    \]
  \end{enumerate}
  Then there exist constants $(C^*, \gamma^*)$ so that the following is true for each $j \in \cJ$:
  \begin{enumerate}[itemsep=1ex]
  \item The operator $P_j Y P_j$ is essentially self-adjoint.
  \item The operator $P_j Y P_j$ has discrete spectrum.
  \item If $\psi \in \range{(P_j)}$ and $P_j Y P_j \psi = \eta \psi$, then 
    \[
      \int e^{2 \gamma^* \sqrt{1 + (x - \xi_j)^2 + (y - \eta)^2}} |\psi(x,y)|^2 \, \text{\emph{d}}x \,\text{\emph{d}}y \leq C^*.
    \]
  \end{enumerate}
  Therefore by the spectral theorem, the collection of all eigenfunctions of the operators $\{ P_j Y P_j \}_{j \in \cJ}$ form an exponentially localized, orthogonal basis for $\range{(P)}$.
\end{theorem}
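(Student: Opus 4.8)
The plan is to follow the strategy of \cite{2020StubbsWatsonLu1}. Fix $j \in \cJ$ and set $\widehat{Y}_j := P_j Y P_j$, regarded as a symmetric operator on $\range(P_j)$ with dense core $\{P_j f : f \in C_c^\infty(\R^2)\}$ (the exponentially localized kernel of $P_j$ makes $P_j f$ decay exponentially, so it lies in the domain of $Y$). The geometric point is that hypothesis (2) forces $\range(P_j)$ to be a ``quasi-one-dimensional'' subspace concentrated near the line $x = \xi_j$, so that $\widehat{Y}_j$ plays exactly the role $PXP$ plays for a genuinely one-dimensional projector; one therefore adapts the one-dimensional analysis of $PXP$ (the theorem quoted above, and \cite{1998NenciuNenciu}), the new feature being that the $x$-direction is not trivial but is controlled uniformly by hypothesis (2). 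I would prove the three conclusions in the order (3) $\Rightarrow$ (1) $\Rightarrow$ (2): the exponential localization estimate is the engine, essential self-adjointness is then a short Combes--Thomas resolvent construction, and discreteness of the spectrum follows by combining the localization estimate with a Hilbert--Schmidt bound on $P_j$ restricted to horizontal strips.

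For the localization estimate (3) --- the heart of the matter --- I would run a Combes--Thomas argument with the two-dimensional tilt $B := B_{\gamma',(\xi_j,\eta)}$. Suppose $\psi = P_j\psi$ and $\widehat{Y}_j\psi = \eta\psi$, so $P_j(Y-\eta)\psi = 0$. Since $B$ is a multiplication operator it commutes with $Y$, and conjugating the relations $\psi = P_j\psi$ and $P_j(Y-\eta)\psi = 0$ by $B$ shows that $\psi_{\gamma'} := B\psi$ satisfies $\psi_{\gamma'} = P_{j,\gamma',(\xi_j,\eta)}\psi_{\gamma'}$ and $P_{j,\gamma',(\xi_j,\eta)}(Y-\eta)\psi_{\gamma'} = 0$, whence
\[
  (Y - \eta)\psi_{\gamma'} = \bigl[\,Y,\ P_{j,\gamma',(\xi_j,\eta)}\,\bigr]\psi_{\gamma'}.
\]
Two a priori bounds, uniform for all sufficiently small $\gamma' > 0$, then close the argument. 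First, $\|[Y, P_{j,\gamma',(\xi_j,\eta)}]\| \le C$: the commutator has kernel $(x_2 - x_2')P_{j,\gamma',(\xi_j,\eta)}(\vec x, \vec x')$, and $P_{j,\gamma',(\xi_j,\eta)}$ still admits an exponentially localized kernel since the weight in $B$ is $1$-Lipschitz, so the extra polynomial factor is absorbed into a slightly worse exponential rate and a Schur bound applies. Second, $\|(X - \xi_j)P_{j,\gamma',(\xi_j,\eta)}\| \le C$: this is hypothesis (2), together with its stability under shrinking the tilt rate. Combining, $\|(Y - \eta)\psi_{\gamma'}\| \le C\|\psi_{\gamma'}\|$ and $\|(X - \xi_j)\psi_{\gamma'}\| \le C\|\psi_{\gamma'}\|$, so with $f(\gamma') := \|\psi_{\gamma'}\|^2$ one obtains the differential inequality
\[
  f'(\gamma') = 2\,\la\, \lvert(X - \xi_j,\ Y - \eta,\ 1)\rvert\,\psi_{\gamma'},\ \psi_{\gamma'}\,\ra \le C'\,f(\gamma'),
\]
using $\lvert(v_1, v_2, 1)\rvert \le 1 + |v_1| + |v_2|$ and Cauchy--Schwarz; Gr\"onwall gives $\|\psi_{\gamma'}\| \le e^{C'\gamma'/2}\|\psi\|$. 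To legitimize differentiating $\gamma' \mapsto \|B_{\gamma',(\xi_j,\eta)}\psi\|^2$ --- a priori $\psi_{\gamma'}$ need not be in $L^2$ --- one repeats the computation with radially truncated, bounded weights $B^{(R)}$, whose exponents are again $1$-Lipschitz so all constants are uniform in $R$, and then lets $R \to \infty$ by monotone convergence. Fixing any such $\gamma'$ as $\gamma^*$ proves (3), with center points $(\xi_j, \eta)$.

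With (3) available, essential self-adjointness (1) follows as in one dimension: for $z \notin \R$ and $\psi \in \range(P_j)$, a direct computation using $(I - P_j)Y P_j = [Y, P_j]P_j$ gives $(\widehat{Y}_j - z)\,P_j(Y - z)^{-1}\psi = (I + E_z)\psi$ with $E_z := P_j[Y, P_j](Y - z)^{-1}$ and $\|E_z\| \le \|[Y, P_j]\|/|\imag z| < 1$ once $|\imag z|$ is large; inverting $I + E_z$ shows $\widehat{Y}_j - z$ maps $\range(P_j)$ onto itself, and a symmetric argument gives injectivity, so such $z$ lie in the resolvent set of $\overline{\widehat{Y}_j}$ and the deficiency indices vanish. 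For discreteness (2), I would show $(\overline{\widehat{Y}_j} - i)^{-1}$ is compact by writing it as $\chi_{|Y| \le R}(\overline{\widehat{Y}_j} - i)^{-1} + \chi_{|Y| > R}(\overline{\widehat{Y}_j} - i)^{-1}$. The first term is compact because $\chi_{|Y| \le R}P_j$ is Hilbert--Schmidt --- this is precisely where the exponentially localized kernel of $P_j$ and the $x$-localization of hypothesis (2) are used \emph{together}, the former giving decay in $|\vec x - \vec x'|$ and the latter confining the $x$-variables to a neighborhood of $\xi_j$. The second term tends to $0$ in norm as $R \to \infty$, since the localization estimate (3), promoted to the spectral subspaces of $\widehat{Y}_j$ via the functional calculus, shows that states of bounded $\widehat{Y}_j$-energy are concentrated in bounded $|Y|$. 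Hence $\sigma(\overline{\widehat{Y}_j})$ is discrete with finite multiplicities, and by the spectral theorem the eigenfunctions of $\widehat{Y}_j$ form an orthonormal basis of $\range(P_j)$. Since $\{P_j\}_{j \in \cJ}$ decomposes $P$ into mutually orthogonal projectors, the union over $j$ of these bases is an orthonormal basis of $\range(P)$, and (3) furnishes the claimed exponential bound.

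The main obstacle is the localization estimate of the second paragraph: one must close the differential inequality \emph{uniformly} in the tilt rate, which in particular requires the stability of hypothesis (2) as the tilt is decreased, and care with the truncation-and-limit procedure that legitimizes the Gr\"onwall step. The Hilbert--Schmidt bound powering compactness in (2) is the other somewhat delicate point, as it genuinely uses the exponential decay of the kernel and the $x$-localization in tandem rather than either alone. Everything else is routine bookkeeping once the framework of \cite{2020StubbsWatsonLu1} is in place.
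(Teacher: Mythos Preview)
The paper does not contain a proof of this theorem: it is stated with the attribution ``adapted from \cite[Section 7]{2020StubbsWatsonLu1}'' and used as a black box, so there is nothing in the present paper to compare your sketch against.

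That said, your outline is a faithful reconstruction of the strategy one finds in the cited reference: the Combes--Thomas/Gr\"onwall argument for the exponential bound, the parametrix $P_j(Y-z)^{-1}$ for essential self-adjointness, and compactness of the resolvent for discreteness are exactly the right ingredients, and your identification of the two delicate points (stability of hypothesis~(2) under shrinking the tilt, and the Hilbert--Schmidt bound on $\chi_{|Y|\le R}P_j$) is accurate. Two small remarks. First, your order $(3)\Rightarrow(1)\Rightarrow(2)$ introduces a mild circularity: in proving discreteness you invoke ``the localization estimate~(3), promoted to the spectral subspaces via the functional calculus,'' but (3) as stated is an \emph{eigenfunction} bound, unavailable until after the spectrum is known to be pure point. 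The fix is either to prove a resolvent-localization estimate directly by the same Combes--Thomas machinery, or simply to reorder the argument as $(1)\Rightarrow(2)\Rightarrow(3)$, establishing compactness of $(\overline{\widehat{Y}_j}-i)^{-1}$ before ever mentioning eigenfunctions. Second, the Hilbert--Schmidt bound on $\chi_{|Y|\le R}P_j$ genuinely fails if one uses only the exponentially localized kernel of $P_j$ (the $x_1$-integral diverges); one must first extract from hypothesis~(2) a pointwise kernel estimate of the form $|P_j(\vec{x},\vec{x}')|\lesssim e^{-\gamma'|\vec{x}-\vec{x}'|}e^{-\gamma'|x_1-\xi_j|}$, which is where the uniformity of the tilted bound over all base points $\vec{a}$ is used. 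You flag this step as delicate, which it is, but you should be aware that it requires more than the untilted bound $\|(X-\xi_j)P_j\|\le C$ alone.
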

Given the results of Theorem~\ref{thm:arma-main}, it is natural to wonder how to construct the collection $\{ P_j \}_{j \in \cJ}$. As part of the contribution in \cites{2020StubbsWatsonLu1}, we introduced the assumption of ``uniform spectral gaps'' which allows us to explicitly construct $\{ P_j \}$ as a family of spectral projections. In particular, we say that $PXP$ has uniform spectral gaps if the following holds\footnote{As a remark, assuming that $P$ admits an exponentially localized kernel implies that $PXP$ is an essentially self-adjoint operator \cite[Lemma 4.1]{2020StubbsWatsonLu1} so its spectrum is purely real.}.
\begin{assumption}[Uniform Spectral Gaps]  \label{def:usg}
  Suppose that there exist constants $(d, D)$ such that: 
  \begin{enumerate}
  \item There exists a countable set, $\cJ$, such that:
    \[
      \sigma(PXP) = \bigcup_{j \in \cJ} \sigma_j.
    \]
  \item The distance between $\sigma_j, \sigma_k$ ($j \neq k$) is uniformly bounded from below:
    \[
      d := \min_{j \neq k} \Big( \dist( \sigma_j, \sigma_k ) \Big)  > 0.
    \]
  \item The diameter of each $\sigma_j$ is uniformly bounded:
    \[
      D := \max_{j \in \cJ} \Big( \diam( \sigma_j ) \Big) < \infty.
    \]
  \end{enumerate}
\end{assumption}
If we assume that $PXP$ has uniform spectral gaps we can define the \textit{band projectors} for $PXP$ via the spectral theorem as follows:
\begin{definition}[Band Projectors for $PXP$]
  Suppose that $PXP$ is essentially self-adjoint and satisfies the uniform spectral gaps assumption for a collection of sets $\{ \sigma_j \}_{j \in \cJ}$. We define the \textit{band projectors} for $PXP$ as the collection of orthogonal projectors $\{ P_j \}_{j \in \cJ}$ where $P_j$ is the spectral projection associated with $\sigma_j$.
\end{definition}
Intuitively speaking, since $P_j$ is a spectral projection for $PXP$ associated with the bounded set $\sigma_j$, we can expect that functions from $\range{(P_j)}$ will be concentrated in the strip $\{ (x,y) \in \R^2 : x \in \sigma_j \}$. Additionally, since there is some separation between the ``bands'' of $PXP$, using techniques from Combes-Thomas-Agmon theory, it can be shown that the projectors $P_j$ admit each admit an exponentially localized kernel with a rate which is $\Omega(d^{-1})$. Hence, by Theorem \ref{thm:arma-main}, if $PXP$ has uniform spectral gaps, then $\range{(P)}$ admits a basis of exponentially localized generalized Wannier functions.

While $PXP$ having uniform spectral gaps is a sufficient condition to prove the existence of exponentially localized generalized Wannier functions it is not a necessary condition. In \cites{2020StubbsWatsonLu2}, we show that in time reversal symmetric systems with non-zero $\Z_2$-invariant, $PXP$ cannot have uniform spectral gaps. Despite this difficulty, the main techniques for proving Theorem~\ref{thm:arma-main} do not strongly rely on the specific properties of the position operator $X$. In fact, we can replace $X$ with any essentially self-adjoint operator which is not ``too far from $X$''. More formally, we have the following result:
\begin{lemma}[adapted from \text{\cite[Lemma A.4]{2020StubbsWatsonLu1}}]
  \label{lem:arma-main}
  Suppose that $\widehat{X}$ is a symmetric operator and there exist constants $(C,C',\gamma^*)$ so that for all $\gamma \leq \gamma^*$:
  \begin{enumerate}[itemsep=1.2ex]
  \item $\| \widehat{X} - X \| \leq C$.
  \item $\| \widehat{X}_{\gamma} - \widehat{X} \| \leq C' \gamma$
  \item $P \widehat{X} P$ has uniform spectral gaps.
  \end{enumerate}
  If $\{ P_j \}_{j \in \cJ}$ are the band projectors of $P \widehat{X} P$, then the family $\{ P_j \}_{j \in \cJ}$ satisfies the assumptions of Theorem \ref{thm:arma-main}.
\end{lemma}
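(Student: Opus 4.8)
The goal is to verify that the band projectors $\{P_j\}_{j\in\cJ}$ of $P\widehat{X}P$ satisfy the two hypotheses of Theorem~\ref{thm:arma-main}: each $P_j$ admits an exponentially localized kernel with a uniform rate $\gamma$, and each $P_j$ is concentrated along a line $x=\xi_j$ in the sense that $\|(X-\xi_j)P_{j,\gamma}\|$ and $\|P_{j,\gamma}(X-\xi_j)\|$ are uniformly bounded. I would proceed by first transferring all estimates from the true position operator $X$ to the modified operator $\widehat{X}$ using hypotheses (1) and (2), and then invoking the Combes--Thomas--Agmon machinery already developed for the $PXP$ case in \cite{2020StubbsWatsonLu1}.

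Let me sketch the key steps. First, since $P\widehat{X}P$ has uniform spectral gaps (hypothesis (3)) with constants $(d,D)$, I would write each band projector via the Riesz formula $P_j = \frac{1}{2\pi i}\oint_{\Gamma_j}(z - P\widehat{X}P)^{-1}\dd z$, where $\Gamma_j$ is a contour at distance $\sim d$ from $\sigma_j$; since $\widehat X$ is symmetric and $P\widehat XP$ is essentially self-adjoint (which follows because $\|\widehat X - X\|\le C$ and $PXP$ is essentially self-adjoint when $P$ admits an exponentially localized kernel, cf.\ \cite[Lemma 4.1]{2020StubbsWatsonLu1}), the resolvent is bounded by $\sim d^{-1}$ on $\Gamma_j$. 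To get exponential localization, I would exponentially tilt: consider $(P\widehat{X}P)_{\gamma,\vec a} = B_{\gamma,\vec a}(P\widehat XP)B_{\gamma,\vec a}^{-1}$ and show that for $\gamma$ sufficiently small (relative to $d$ and the exponential rate of $P$'s kernel), this tilted operator stays within $O(d)$ of $P\widehat XP$ in norm. The estimate $\|(P\widehat XP)_{\gamma,\vec a} - P\widehat XP\| \le \|P_{\gamma,\vec a}\widehat X_{\gamma,\vec a}P_{\gamma,\vec a} - P\widehat XP\|$ is controlled by combining: (a) $\|P_{\gamma,\vec a} - P\| = O(\gamma)$, which uses that $P$ has an exponentially localized kernel (this is the standard fact that tilting a nicely-localized projector by a small exponential perturbs it little, uniformly in $\vec a$); (b) $\|\widehat X_{\gamma,\vec a} - \widehat X\| \le C'\gamma$ directly from hypothesis (2); and (c) uniform boundedness of $\|\widehat X_{\gamma,\vec a}P_{\gamma,\vec a}\|$, obtained by writing $\widehat X P = (\widehat X - X)P + XP$ and noting $XP$ is bounded after tilting because $P$ is exponentially localized, while $(\widehat X - X)P$ is bounded by $C$. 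Once $\|(P\widehat XP)_{\gamma,\vec a} - P\widehat XP\|$ is smaller than, say, $d/4$, the contour $\Gamma_j$ still avoids $\sigma((P\widehat XP)_{\gamma,\vec a})$, so the tilted band projector $(P_j)_{\gamma,\vec a}$ is given by the same contour integral with the tilted resolvent, hence $\|(P_j)_{\gamma,\vec a}\| \le C(d)$ uniformly in $j$ and $\vec a$. A standard argument then converts this uniform bound on exponentially-tilted operators into an exponentially localized kernel with rate $\gamma$: one has $|B_{\gamma,\vec a}(\vec x)P_j(\vec x,\vec x')B_{\gamma,\vec a}^{-1}(\vec x')|$ controlled in an averaged sense, and taking $\vec a = \vec x'$ (or $\vec x$) yields pointwise exponential decay $|P_j(\vec x,\vec x')| \le C e^{-\gamma|\vec x - \vec x'|}$.

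For the second hypothesis of Theorem~\ref{thm:arma-main} --- concentration along a line --- I would set $\xi_j$ to be the center (e.g.\ midpoint) of the interval $\sigma_j \subseteq \sigma(P\widehat XP) \subseteq \R$. Since $P_j$ is the spectral projection of $P\widehat XP$ onto $\sigma_j$, functional calculus gives $\|(P\widehat XP - \xi_j)P_j\| \le \diam(\sigma_j)/2 \le D/2$. To pass from $P\widehat XP$ to the bare $X$, write $(X - \xi_j)P_j = (X - \widehat X)P_j + (\widehat X - P\widehat XP)P_j + (P\widehat XP - \xi_j)P_j$. The first term is bounded by $C$ via hypothesis (1); the third by $D/2$ as above; the middle term, $(\widehat X - P\widehat XP)P_j = Q\widehat XP\,P_j + P\widehat X Q P P_j$... more carefully, $\widehat X P_j - P\widehat X P P_j = \widehat X P_j - P\widehat X P_j$ (using $PP_j = P_j$) $= Q\widehat X P_j$, and since $Q = Q(X - \widehat X) + QX$ composed against $P_j = PP_j$, localization of $P$ again makes $Q\widehat XP$ bounded. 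The same applies with factors on the right after taking adjoints (recalling $\widehat X$ is symmetric and $P$, $P_j$ self-adjoint). Repeating the entire argument with everything tilted by $B_{\gamma,\vec a}$ --- using the bounds on $(P_j)_{\gamma,\vec a}$, $(P\widehat XP)_{\gamma,\vec a}$, and $\widehat X_{\gamma,\vec a}$ just established --- gives $\|(X-\xi_j)P_{j,\gamma}\| \le C$ and $\|P_{j,\gamma}(X-\xi_j)\| \le C$ uniformly. Hence the family $\{P_j\}$ satisfies the hypotheses of Theorem~\ref{thm:arma-main}.

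The main obstacle, I expect, is the careful bookkeeping in Step 2 to ensure every constant is \emph{uniform in $j$ and in the tilting center $\vec a$} simultaneously --- in particular, showing $\|P_{\gamma,\vec a} - P\| = O(\gamma)$ with a constant depending only on the kernel bound $(C_P,\gamma_P)$ of $P$ and not on $\vec a$, and verifying that the threshold on $\gamma$ forcing $\|(P\widehat XP)_{\gamma,\vec a} - P\widehat XP\| < d/4$ can be chosen uniformly. All of these are quantitative refinements of estimates that appear for $PXP$ in \cite[Appendix A]{2020StubbsWatsonLu1}, so the route is essentially to re-run that appendix's proof, substituting $\widehat X$ for $X$ and absorbing the extra error terms $\|\widehat X - X\| \le C$ and $\|\widehat X_\gamma - \widehat X\| \le C'\gamma$ at each step; no genuinely new idea should be required beyond this substitution.
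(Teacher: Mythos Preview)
The paper does not give its own proof of this lemma; it is stated as ``adapted from \cite[Lemma~A.4]{2020StubbsWatsonLu1}'' and the reader is referred there for the argument. Your overall architecture --- Riesz representation of $P_j$ together with Combes--Thomas tilting of the resolvent, followed by spectral calculus to obtain the line concentration --- is indeed the strategy of that appendix, and your closing remark that one ``re-runs that appendix's proof, substituting $\widehat{X}$ for $X$'' is the correct summary.

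There is, however, a genuine technical slip in your perturbation step. You claim ``uniform boundedness of $\|\widehat{X}_{\gamma,\vec a}P_{\gamma,\vec a}\|$, obtained by writing $\widehat{X}P=(\widehat{X}-X)P+XP$ and noting $XP$ is bounded after tilting because $P$ is exponentially localized.'' This is false: $XP$ is an unbounded operator (any $\psi\in\range(P)\setminus D(X)$ witnesses this), and tilting does not help since $X$ commutes with $B_{\gamma,\vec a}$, so $(XP)_{\gamma,\vec a}=XP_{\gamma,\vec a}$ is equally unbounded. Consequently the difference $(P\widehat{X}P)_{\gamma,\vec a}-P\widehat{X}P$, which you want to make smaller than $d/4$, is in general \emph{not} a bounded operator. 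For example, for a translation-invariant $P$ one checks that the kernel of $P_{\gamma}XP_{\gamma}-PXP$ contains a contribution of size $\sim\gamma\,x_1\cdot(\text{exp.\ decaying in }|\vec{x}-\vec{y}|)$, so Schur's test fails and one can exhibit a sequence of unit vectors on which the norm blows up. Thus your Neumann-series argument for the tilted resolvent, as written, does not go through.

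This is a fixable issue rather than a wrong approach: the argument in \cite{2020StubbsWatsonLu1} does not bound $\|(P\widehat{X}P)_{\gamma,\vec a}-P\widehat{X}P\|$ in isolation but instead controls quantities in which the unbounded growth is absorbed by the resolvent or by the projector $P_j$ itself (so that effectively only $X-\xi_j$, rather than $X$, appears). Your second half --- deriving the line-concentration estimate from $\|(P\widehat{X}P-\xi_j)P_j\|\le D/2$ together with hypotheses~(1) and~(2) --- is correctly organized. So the sketch is sound in outline; you should just be aware that the specific intermediate bound you wrote in step~(c) is not true, and the cited appendix handles this point with more care than your sketch suggests.
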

The main result of this paper is to show that if there exists a generalized Wannier basis for $\range{(P)}$ which is $s$-localized for $s > 5/2$, then it is possible to construct an operator $\widehat{X}$ satisfying the assumptions of Lemma \ref{lem:arma-main} and hence $\range{(P)}$ admits an exponentially localized generalized Wannier basis by Theorem \ref{thm:arma-main}. Henceforth, we will assume that we have fixed an $s$-localized basis $\{ \psi_{\alpha} \}$ with centers points $\{ \vec{\mu}_{\alpha} \}$ where $s > 5/2$.

\subsection{Paper Organization}
The remainder of this paper is organized as follows. In Section \ref{sec:bdd-density}, we show that as a consequence of the exponential localization of $P$, the center points of any generalized Wannier basis cannot be too strongly clustered; a property we term as having ``bounded density''. Using the notion of bounded density, we then construct our new position operator $\widehat{X}$ in Section \ref{sec:xhat-constr}. In the following sections, we prove the necessary properties of $\widehat{X}$. In particular, in Section \ref{sec:xhat-close} we show that $\widehat{X}$ is close to $X$ in spectral norm, in Section \ref{sec:xhat-exp-loc} we show that $\widehat{X}$ is exponentially localized, and finally in Section \ref{sec:xhat-usg} we show that $P \widehat{X} P$ has uniform spectral gaps.

\section{On the center points of generalized Wannier bases}
\label{sec:bdd-density}
A key part of our construction of $\widehat{X}$ revolves around the center points of the generalized Wannier basis $\{ \psi_{\alpha} \}$. Unlike in the periodic case, where the set of center points is closed under lattice translations, for non-periodic systems there is a large amount of freedom in the choice of center points. For sake of argument, let us fix some element of the generalized Wannier basis $\psi_{\alpha}$ with center point  $\vec{\mu}_{\alpha}$. By definition there exists a constant $C$ so that
\[
  \int_{\R^2} \la \vec{x} - \vec{\mu}_{\alpha} \ra^{2s} |\psi_{\alpha}(\vec{x})|^2 \dd{\vec{x}} \leq C.
\]
Now, let us fix some $\vec{v}_{\alpha} \in \R^2$. If we perform the mapping $\vec{\mu}_{\alpha} \mapsto \vec{\mu}_{\alpha} + \vec{v}_{\alpha}$, by triangle inequality and the fact that $|a+b|^p \leq 2^p (|a|^p + |b|^p)$ one easily checks that
\[
  \int_{\R^2} \la \vec{x} - (\vec{\mu}_{\alpha} + \vec{v}_{\alpha}) \ra^{2s} |\psi_{\alpha}(\vec{x})|^2 \dd{\vec{x}} \leq 2^{2s} (C + | \vec{v}_{\alpha} |^{2s}).
\]
Hence, so long as $| \vec{v}_{\alpha} |$ is bounded, at the price of making the constant $C$ slightly worse, we could have equally well taken the center point of $\psi_{\alpha}$ to be $\vec{\mu}_{\alpha} + \vec{v}_{\alpha}$ instead of $\vec{\mu}_\alpha$.

Despite this apparent freedom in the choice of center points, the center points of a localized generalized Wannier basis is not arbitrary; the fact that $P$ has an exponentially localized kernel puts a restriction on how densely the center points may cluster. In particular, as shown in \cite{2021LuStubbs}, we have the following result:
\begin{definition}
\label{def:bdd-density}
We say that a collection of points $\{ \vec{\mu}_{\alpha} \}_{\alpha \in \cI}$ has \textit{bounded density} if there exists a constant $M < \infty$ such that for all $\vec{x} \in \R^2$ we have
\[
    \#  \{ \alpha : | \vec{\mu}_{\alpha} - \vec{x} | \leq 1 \}  \leq M
\]  
\end{definition}
\begin{lemma}[\text{\cite[Lemma 2.1]{2021LuStubbs}}]
  \label{lem:bdd-density}
  Let $P$ be an orthogonal projector which admits an exponentially localized kernel. If $\{ \psi_{\alpha} \}_{\alpha \in \cI}$ is an $s$-localized generalized Wannier basis for $P$ for some $s > 0$, then the center points for $\{ \psi_{\alpha} \}_{\alpha \in \cI}$ have bounded density.
\end{lemma}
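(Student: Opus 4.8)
The plan is to bound, in two complementary ways, the trace $\tr(\chi P \chi)$, where $\chi$ is the indicator function of a ball $B_R(\vec{x}_0) := \{ \vec{x} \in \R^2 : |\vec{x} - \vec{x}_0| \le R \}$ about an arbitrary point $\vec{x}_0 \in \R^2$; here $R$ will be chosen only at the end, depending on $s$ and the localization constant $C$ but \emph{not} on $\vec{x}_0$, which is what makes the resulting density bound uniform.

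First I would obtain an upper bound from the exponentially localized kernel of $P$. Since $P = P^* = P^2$, one has $\chi P \chi = (\chi P)(\chi P)^*$, and the kernel bound of Definition~\ref{def:exp-loc-kern} gives
\[
\| \chi P \|_{\mathrm{HS}}^2 = \int_{B_R(\vec{x}_0)} \int_{\R^2} |P(\vec{x}, \vec{x}')|^2 \dd{\vec{x}'} \dd{\vec{x}} \le C^2 \, |B_R(\vec{x}_0)| \int_{\R^2} e^{-2\gamma |\vec{z}|} \dd{\vec{z}} \le C_1 R^2 .
\]
Hence $\chi P \chi$ is trace class with $\tr(\chi P \chi) = \|\chi P\|_{\mathrm{HS}}^2 \le C_1 R^2$.

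Next I would produce a lower bound from the basis. Completing $\{ \psi_\alpha \}_{\alpha \in \cI}$ to an orthonormal basis of $L^2(\R^2)$ by adjoining an orthonormal basis of $\ker(P) = \range(Q)$ and computing the trace in that basis, the relations $P \psi_\alpha = \psi_\alpha$ and $\chi^2 = \chi$ give $\tr(\chi P \chi) = \tr(P \chi P) = \sum_{\alpha \in \cI} \| \chi \psi_\alpha \|^2$, the vectors in $\ker P$ contributing nothing. For each $\alpha$ with $|\vec{\mu}_\alpha - \vec{x}_0| \le 1$, Chebyshev's inequality together with Definition~\ref{def:s-loc} yields, for any $\rho > 0$,
\[
\int_{|\vec{x} - \vec{\mu}_\alpha| > \rho} |\psi_\alpha(\vec{x})|^2 \dd{\vec{x}} \le (1 + \rho^2)^{-s} \int_{\R^2} \la \vec{x} - \vec{\mu}_\alpha \ra^{2s} |\psi_\alpha(\vec{x})|^2 \dd{\vec{x}} \le C (1 + \rho^2)^{-s} .
\]
Choosing $\rho = \rho(C, s)$ so that $C(1 + \rho^2)^{-s} \le 1/2$, using $\|\psi_\alpha\| = 1$ and the inclusion $B_\rho(\vec{\mu}_\alpha) \subseteq B_{\rho + 1}(\vec{x}_0)$, I would set $R := \rho + 1$, so that $\| \chi \psi_\alpha \|^2 \ge 1/2$ for every such $\alpha$.

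Combining the two bounds with this $R$ gives
\[
\tfrac12 \, \# \{ \alpha \in \cI : |\vec{\mu}_\alpha - \vec{x}_0| \le 1 \} \le \sum_{\alpha \in \cI} \| \chi \psi_\alpha \|^2 = \tr(\chi P \chi) \le C_1 R^2 ,
\]
so the center points have bounded density with $M := 2 C_1 (\rho + 1)^2$, independent of $\vec{x}_0$. The only point needing genuine care rather than a one-line estimate is the trace-class bookkeeping in the first two steps — verifying that $\chi P$ is Hilbert--Schmidt and that $\tr(\chi P \chi)$ may legitimately be evaluated by summing $\|\chi\psi_\alpha\|^2$ over the generalized Wannier basis — but this is standard Hilbert--Schmidt theory given the pointwise decay of the kernel of $P$, so I do not expect it to be a serious obstacle.
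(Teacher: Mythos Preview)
The paper does not actually give a proof of this lemma; it merely cites \cite[Lemma 2.1]{2021LuStubbs}. So there is no in-paper argument to compare against. Your proposal is a correct and self-contained proof, and it is in fact the standard argument one would expect in the cited reference: bound $\tr(\chi_{B_R(\vec{x}_0)} P \chi_{B_R(\vec{x}_0)})$ from above via the Hilbert--Schmidt norm of $\chi P$ (using the kernel decay), and from below by $\tfrac{1}{2}$ times the number of centers in the unit ball around $\vec{x}_0$ (using Chebyshev and the $s$-localization).

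One small expository point: your sentence ``the relations $P\psi_\alpha = \psi_\alpha$ and $\chi^2 = \chi$ give $\tr(\chi P \chi) = \tr(P\chi P)$'' reads as if those two relations alone yield the trace identity, which they do not. The identity is simply $\tr(\chi P \chi) = \|\chi P\|_{\mathrm{HS}}^2 = \|(\chi P)^*\|_{\mathrm{HS}}^2 = \|P\chi\|_{\mathrm{HS}}^2 = \tr(P\chi P)$, which you have already set up. The relations $P\psi_\alpha = \psi_\alpha$ and $\chi^2 = \chi$ are then what let you evaluate $\tr(P\chi P)$ as $\sum_\alpha \|\chi\psi_\alpha\|^2$ in the completed basis. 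With that clarification the bookkeeping you flagged as ``needing genuine care'' is entirely routine, and the argument goes through as written.
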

Since we are free to move each center point by a constant amount and all localized basis for $\range{(P)}$ have center points with bounded density, we can assume without loss of generality that these center points lie on the integer lattice:
\begin{lemma}
\label{lem:fdc}
Let $\{ \psi_{\alpha} \}_{\alpha \in \cI}$ is a $s$-localized basis with center points $\{ \vec{\mu}_{\alpha} \}_{\alpha \in \cI}$. If we additionally assume that the center points have bounded density, then we may find a positive integer $M$ so that we can relabel the basis as $\{ \psi_{\vec{m}}^{(j)} \}$ where $\vec{m} \in \Z^2$ and $j \in \{ 1, \cdots, M \}$. Furthermore, the center point of $\psi_{\vec{m}}^{(j)}$ can be taken to be $\vec{m}$ without loss of generality.
\end{lemma}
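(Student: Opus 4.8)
The plan is simply to round each center point to the nearest lattice point and then invoke the bounded-density hypothesis to bound the number of basis elements that get rounded to any given lattice site. First I would fix, for each $\alpha \in \cI$, a closest point $\vec{n}(\alpha) \in \Z^2$ to $\vec{\mu}_{\alpha}$, using any fixed convention (e.g.\ lexicographic order on coordinates) to break ties so that $\vec{n}(\cdot)$ is a well-defined map $\cI \to \Z^2$. Since every point of $\R^2$ lies within distance $\tfrac{1}{\sqrt 2}$ of $\Z^2$, we have $|\vec{\mu}_{\alpha} - \vec{n}(\alpha)| \le 1$ for all $\alpha$.

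Next I would use Lemma~\ref{lem:bdd-density}: let $M$ be the bounded-density constant for the center points $\{\vec{\mu}_{\alpha}\}$. If $\vec{n}(\alpha) = \vec{m}$ then $|\vec{\mu}_{\alpha} - \vec{m}| \le 1$, so for every $\vec{m} \in \Z^2$
\[
  \#\{\alpha \in \cI : \vec{n}(\alpha) = \vec{m}\} \;\le\; \#\{\alpha \in \cI : |\vec{\mu}_{\alpha} - \vec{m}| \le 1\} \;\le\; M .
\]
Thus each fibre of $\vec{n}$ is finite with cardinality at most $M$, so I can enumerate $\{\alpha : \vec{n}(\alpha) = \vec{m}\}$ as $\alpha^{(1)}_{\vec m}, \dots, \alpha^{(k_{\vec m})}_{\vec m}$ with $k_{\vec m} \le M$ and set $\psi^{(j)}_{\vec m} := \psi_{\alpha^{(j)}_{\vec m}}$. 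This is nothing but a relabeling of $\{\psi_{\alpha}\}$ by an index set contained in $\Z^2 \times \{1,\dots,M\}$ (not every pair need occur), and the relabeled family is still orthonormal and still spans $\range(P)$.

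Finally, to put the centers on the lattice, I would apply the elementary computation already carried out at the start of this section with $\vec{v}_{\alpha} := \vec{n}(\alpha) - \vec{\mu}_{\alpha}$, which satisfies $|\vec{v}_{\alpha}| \le 1$; it yields
\[
  \int_{\R^2} \la \vec{x} - \vec{n}(\alpha) \ra^{2s} |\psi_{\alpha}(\vec{x})|^2 \dd{\vec{x}} \;\le\; 2^{2s}\bigl(C + 1\bigr),
\]
so $\{\psi^{(j)}_{\vec m}\}$ remains $s$-localized with the uniform constant $2^{2s}(C+1)$ and with the center point of $\psi^{(j)}_{\vec m}$ equal to $\vec m \in \Z^2$. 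There is no real analytic content here; the only points requiring care are purely bookkeeping — choosing a consistent tie-breaking rule so $\vec{n}(\cdot)$ is well defined, and noticing that the \emph{single} bounded-density constant $M$ simultaneously bounds the size of every fibre, which is exactly what lets the secondary index $j$ range over one fixed finite set.
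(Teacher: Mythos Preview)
Your argument is correct and matches the paper's proof essentially line for line: the paper partitions $\R^2$ into the half-open unit squares $S_{\vec m}$ centered at lattice points (which amounts to your rounding map with a specific tie-breaking rule), uses bounded density to bound the number of centers in each $S_{\vec m}$ by $M$, and then invokes the same moment-shift computation to relocate the centers to $\vec m$. The only cosmetic difference is that the paper pads the unused slots $(\vec m,j)$ with the zero function so that the index set is exactly $\Z^2\times\{1,\dots,M\}$, noting this is harmless for the spectral-norm estimates that follow.
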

\begin{proof}[Proof Sketch]
For each $\vec{m} \in \Z^2$ let us define the unit square centered at $\vec{m}$ as follows
\[
S_{\vec{m}} := \bigg[ m_1 - \frac{1}{2}, m_1 + \frac{1}{2} \bigg) \times \bigg[ m_2 - \frac{1}{2}, m_2 + \frac{1}{2} \bigg).
\]
Since the basis $\{ \psi_{\alpha} \}_{\alpha \in \cI}$ has center points with bounded density, we know that there are at most $M$ center points contained in the square $S_{\vec{m}}$ (as it is contained in the Euclidean ball of radius 1 centered at $\vec{m}$).  Because of this, we can relabel this basis as $\{ \psi_{\vec{m}}^{(j)} \}$ where $\psi_{\vec{m}}^{(j)}$ has its center in $S_{\vec{m}}$ and $j$ is a degeneracy index which runs from $\{ 1, \cdots, M \}$. If $S_{\vec{m}}$ has fewer than $M$ center points, say it has $j^*$, then we define $\psi_{\vec{m}}^{(j)} \equiv 0$ for all $j > j^*$. Strictly speaking this enlarged set is no longer a basis, but it does not matter as we will be interested in controlling various spectral norms which are unchanged by this enlargement.
\end{proof}
By using the relabeling from Lemma \ref{lem:fdc}, we can now construct the operator $\widehat{X}$.

\section{Construction of $\widehat{X}$}
\label{sec:xhat-constr}
In the previous section, we have relabeled our $s$-localized basis as $\{ \psi_{\vec{m}}^{(j)} \}$ and discretized its center points to lie on the integer lattice. We will now use this discretization of the center points to define a new position operator $\tilde{X}$ as follows:
\begin{equation}
  \label{eq:xtilde}
  \tilde{X} := \sum_{\vec{m},j} m_1 \ket{\psi_{\vec{m}}^{(j)}} \bra{\psi_{\vec{m}}^{(j)}} + QXQ.
\end{equation}
Since $PQ = QP = 0$ it's clear that $\sigma(P \tilde{X} P) \subseteq \Z$ and therefore $P \tilde{X} P$ has uniform spectral gaps. Furthermore, it can be shown that if $\{ \psi_{\vec{m}}^{(j)} \}$ is $s$-localized with $s > 2$ then $\| \tilde{X} - X \| = O(1)$. Hence, it can be shown that $\tilde{X}$ satisfies Assumptions (1) and (3) of Lemma \ref{lem:arma-main}. Unfortunately, it doesn't seem possible to prove that $\tilde{X}$ satisfies Assumption (2) since the basis $\{ \psi_{\vec{m}}^{(j)} \}$ only decays algebraically quickly.

To address this issue, we will define a new position operator $\widehat{X}$ which modifies $\tilde{X}$ so that it satisfies Assumption (2) while still preserving Assumptions (1) and (3). Following the approach used by Hastings in \cites{2009Hastings}, we define the operator $\widehat{X}$ as follows:
\begin{equation}
  \label{eq:xhat}
  \widehat{X} := \int_{\R^2} f(t_1) f(t_2) e^{i (X t_1 + Y t_2) / \Delta } \tilde{X} e^{-i (X t_1 + Y t_2) / \Delta} \dd{t_1} \dd{t_2}
\end{equation}
Here $\Delta$ is a finite parameter to be chosen as part of our proofs and $f(t)$ is a filter function defined in terms of its Fourier transform as follows:
\[
  \hat{f}(\xi) =
  \begin{cases}
    (1 - |\xi|^2)^3 & |\xi| \leq 1 \\
    0 & |\xi| \geq 1 \\
  \end{cases}
\]
Note that since $\hat{f}(\xi)$ is $C^2(\R^2)$, $t f(t) \in L^1(\R)$. Also, note that $\int f(t) \dd{t} = \hat{f}(0) = 1$ and $\hat{f}$ is an even function so $f$ is real valued.

The effect of this modification is best understood in the special case where $\tilde{X}$ and $\widehat{X}$ are both matrices and $X$ and $Y$ are integer valued. In this case, we can find a simultaneous eigenbasis for $X$ and $Y$, $\{  e_{\vec{\lambda}} \}_{\lambda \in \Z^2}$ so that for each $\vec{\lambda} = (\lambda_1, \lambda_2) \in \Z^2$ we have that:
\[
X e_{\vec{\lambda}} = \lambda_1  e_{\vec{\lambda}} \quad  \text{and} \quad Y e_{\vec{\lambda}} = \lambda_2  e_{\vec{\lambda}}.
\]
One then easily checks that
\[
\la e_{\vec{\lambda}}, \widehat{X} e_{\vec{\mu}} \ra = \la e_{\vec{\lambda}}, \tilde{X} e_{\vec{\mu}} \ra \hat{f}\left( \frac{\lambda_1 - \mu_1}{\Delta} \right) \hat{f}\left( \frac{\lambda_2 - \mu_2}{\Delta} \right).
\]
Since $\hat{f}(0) = 1$ and $\hat{f}$ is compactly supported, this calculation shows that the entries of $\widehat{X}$ are roughly equal to the entries of $\widehat{X}$ near the diagonal and entries which are distance $\gtrsim \Delta$ away from the diagonal are set to zero. 

Now that we've defined $\widehat{X}$, to prove the main theorem it remains to show that $\widehat{X}$ satisfies the assumptions of Lemma \ref{lem:arma-main}. We prove $\widehat{X}$ that satisfies Assumption (1) in Section \ref{sec:xhat-close}, Assumption (2) in Section \ref{sec:xhat-exp-loc}, and Assumption (3) in Section \ref{sec:xhat-usg}.

\section{Closeness of $\widehat{X}$ and $X$}
\label{sec:xhat-close}
The main goal of this section is to prove the following proposition.
\begin{proposition}
  \label{prop:xhat-x-close}
  Suppose that $P$ is an orthogonal projector which admits an exponentially localized kernel (Definition~\ref{def:exp-loc-kern}). Suppose further that $P$ admits a basis which is $s$-localized for some $s > 2$, then there exists a finite constant $C > 0$ such that
  \[
    \| \widehat{X} - X \| \leq C.
  \]
\end{proposition}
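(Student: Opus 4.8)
The plan is to bound $\|\widehat{X}-X\|$ by splitting the difference into two pieces, $\|\widehat{X}-X\| \le \|\widehat{X}-\tilde{X}\| + \|\tilde{X}-X\|$, and controlling each separately. For the second piece, I expect the estimate $\|\tilde{X}-X\| = O(1)$ to follow from the hypothesis $s > 2$ essentially as sketched in Section~\ref{sec:xhat-constr}: writing $\tilde{X}-X = \sum_{\vec{m},j} m_1 \ket{\psi_{\vec{m}}^{(j)}}\bra{\psi_{\vec{m}}^{(j)}} + QXQ - X$, one rewrites $X = PXP + PXQ + QXP + QXQ$, so the $QXQ$ terms cancel and one is left with a term supported on $\range(P)$, namely $\sum_{\vec{m},j}(m_1 - PXP)\ket{\psi_{\vec{m}}^{(j)}}\bra{\psi_{\vec{m}}^{(j)}}$ roughly speaking, plus off-diagonal blocks $PXQ + QXP$. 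The key input is that each $\psi_{\vec{m}}^{(j)}$ is localized near $\vec{m}$ with second moment bounded uniformly (which is guaranteed since $s > 2 > 1$), together with the bounded-density property from Lemma~\ref{lem:bdd-density} / Lemma~\ref{lem:fdc}, which lets one sum the localized contributions via a Schur-test / Cotlar–Stein type argument on the lattice $\Z^2$: overlaps between $\psi_{\vec{m}}^{(j)}$ and the position operator acting near a far-away square decay like $\la \vec{m}-\vec{n}\ra^{-2s}$ times geometric factors, and $\sum_{\vec{n}} \la \vec{m}-\vec{n}\ra^{-2s} < \infty$ precisely when $2s > 2$, i.e. $s > 1$.

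For the first piece, $\|\widehat{X}-\tilde{X}\|$, I would use the convolution structure in \eqref{eq:xhat}. Since $\int f(t_1)f(t_2)\dd{t_1}\dd{t_2} = \hat f(0)^2 = 1$, we may write
\[
  \widehat{X} - \tilde{X} = \int_{\R^2} f(t_1) f(t_2)\Bigl( e^{i(Xt_1+Yt_2)/\Delta}\,\tilde{X}\, e^{-i(Xt_1+Yt_2)/\Delta} - \tilde{X}\Bigr)\dd{t_1}\dd{t_2}.
\]
The integrand is a commutator-type expression: conjugating $\tilde{X}$ by the unitary $e^{i(Xt_1+Yt_2)/\Delta}$ changes it by an amount governed by the commutators $[\,\cdot\,, e^{iXt_1/\Delta}]$ and $[\,\cdot\,, e^{iYt_2/\Delta}]$. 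I would expand the conjugation to express the difference as $\frac{1}{\Delta}\int_0^{1}\!\!\cdots$ via the fundamental theorem of calculus in the parameter, producing a factor $(t_1+t_2)/\Delta$ times norms of $[\tilde{X},X]$ and $[\tilde{X},Y]$ (more precisely $[QXQ, \cdot]$-type and $[\sum m_1\ket{\psi}\bra{\psi},\cdot]$-type pieces). Then, using that $t_1 f(t_1), t_2 f(t_2) \in L^1(\R)$ (noted in the text because $\hat f \in C^2$), the $t$-integral converges and one gets $\|\widehat{X}-\tilde{X}\| \lesssim \frac{1}{\Delta}\bigl(\|[\tilde{X},X]\| + \|[\tilde{X},Y]\|\bigr)$. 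The commutator $[QXQ,X]=0$ and $[QXQ,Y] = Q[X,Y]Q + \text{(terms with } [Q,X],[Q,Y]) $; the relevant commutators $[Q,X]=-[P,X]$ and $[P,X]$, $[P,Y]$ are bounded because $P$ has an exponentially localized kernel, and $[\sum_{\vec{m},j} m_1\ket{\psi_{\vec{m}}^{(j)}}\bra{\psi_{\vec{m}}^{(j)}}, Y]$ is bounded by the localization of the $\psi_{\vec{m}}^{(j)}$ plus bounded density. Since $\Delta$ is a free parameter $\ge 1$, this term is $O(1)$ as well.

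The main obstacle I anticipate is the bookkeeping in the $\|\tilde{X}-X\|$ estimate: one must carefully handle the off-diagonal $PXQ$ blocks and verify that the sums over the lattice converge uniformly, which requires genuinely using bounded density (Lemma~\ref{lem:bdd-density}) to prevent the degeneracy index $j$ and nearby squares from accumulating. A Schur test on $\Z^2$ with weights $\la \vec{m}-\vec{n}\ra^{-2s}$ should work, but pinning down that the matrix elements $\la \psi_{\vec{m}}^{(j)}, X\psi_{\vec{n}}^{(k)}\ra$ and the cross terms with $Q$ actually decay at this rate — using the kernel bound on $P$ together with the $s$-localization and a Combes–Thomas / Cauchy–Schwarz splitting of the domain of integration — is where the real work lies. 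The commutator estimate for $\|\widehat{X}-\tilde{X}\|$ is comparatively soft once the $L^1$ bound on $tf(t)$ and the boundedness of $[P,X],[P,Y]$ are in hand.
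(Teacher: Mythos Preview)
Your treatment of $\|\tilde{X}-X\|$ is essentially what the paper does: cancel the $QXQ$ blocks, bound $PXQ=-P[X,P]$ and $QXP$ using the exponentially localized kernel of $P$, and control the remaining $PXP-\sum m_1\ket{\psi_{\vec m}^{(j)}}\bra{\psi_{\vec m}^{(j)}}$ piece by a Schur test on the kernel $K(\vec{x},\vec{y})=\sum\sum\la\psi_{\vec m}^{(j)},(X-m_1)\psi_{\vec m'}^{(j')}\ra\,\psi_{\vec m}^{(j)}(\vec x)\overline{\psi_{\vec m'}^{(j')}(\vec y)}$. That part is fine.

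The gap is in your handling of $\|\widehat{X}-\tilde{X}\|$. Your Duhamel argument produces $\|[\tilde X,X]\|$ and $\|[\tilde X,Y]\|$, and you assert these are bounded; they are not. For instance, in $[\tilde X,Y]$ the block $[QXQ,Y]$ contains terms such as $(QYP)\,XQ$ and $PYQ\,XQ$: here $QYP=-[P,Y]P$ is bounded, but it is followed by the \emph{unbounded} factor $XQ$, and there is no cancellation. Likewise $[P\tilde X P,Y]$ has a piece $Q Y P\cdot(P\tilde X P)$, and $P\tilde X P$ is unbounded on $\range(P)$ (its spectrum is $\Z$). This is precisely why, later in the paper, the commutators $[X,\tilde X]$ and $[Y,\tilde X]$ only appear sandwiched between $\la X-\lambda\ra^{-1/2}$ weights (Proposition~\ref{prop:gaps-comm-bd}), and even then require $s>5/2$ rather than $s>2$.

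The paper avoids this issue entirely by a one-line observation you overlooked: since $[X,e^{i(Xt_1+Yt_2)/\Delta}]=0$ and $\int f=1$, one can pull $X$ inside the integral to get
\[
\widehat X - X \;=\; \int_{\R^2} f(t_1)f(t_2)\,e^{i(Xt_1+Yt_2)/\Delta}\,(\tilde X - X)\,e^{-i(Xt_1+Yt_2)/\Delta}\,\dd t_1\dd t_2,
\]
whence $\|\widehat X-X\|\le\|f\|_{L^1}^2\,\|\tilde X-X\|$. No commutator with $\tilde X$ is needed, and the splitting $\widehat X-\tilde X$ versus $\tilde X-X$ is never used. With this, only the $\|\tilde X-X\|$ estimate (which you already have) is required.
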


We begin with a straightforward calculation. By definition of $\widehat{X}$ we have that:
\begin{align}
  \widehat{X} - X
  & = \int_{\R^2} f(t_1) f(t_2) e^{i (X t_1 + Y t_2) / \Delta} \tilde{X} e^{-i (X t_1 + Y t_2) / \Delta} \dd{t_1} \dd{t_2} - X \\
  & = \int_{\R^2} f(t_1) f(t_2)\left( e^{i (X t_1 + Y t_2) / \Delta} \tilde{X} e^{-i (X t_1 + Y t_2)/ \Delta} - X \right) \dd{t_1} \dd{t_2} \\
  & = \int_{\R^2} f(t_1) f(t_2) e^{i (X t_1 + Y t_2) / \Delta} (\tilde{X} - X) e^{-i (X t_1 + Y t_2)/ \Delta} \dd{t_1} \dd{t_2}  \label{eq:xhat-x}
\end{align}
where we have used that $\int f(t) \dd{t} = 1$ and the fact that $[X, e^{-i (X t_1 + Y t_2)}] = 0$. Therefore,
\begin{align}
  \| \widehat{X} - X \|
  & \leq \int_{\R^2} \| f(t_1) f(t_2) e^{i (X t_1 + Y t_2)/ \Delta } (\tilde{X} - X) e^{-i (X t_1 + Y t_2)/ \Delta} \| \dd{t_1} \dd{t_2} \\
  & \leq \| \tilde{X} - X \| \left( \int_{\R} | f(t_1) | \dd{t_1} \right) \left( \int_{\R} | f(t_2) | \dd{t_2} \right)
\end{align}
Since $f \in L^1(\R)$, the proposition is proved so long as we can show that $\| \tilde{X} - X\|$ is bounded. Let's recall the definition of $\tilde{X}$
\[
  \tilde{X} = \sum_{\vec{m},j} m \ket{\psi_{\vec{m}}^{(j)}} \bra{\psi_{\vec{m}}^{(j)}} + QXQ. \tag{\ref{eq:xtilde}, revisited}
\]
Now since $P + Q = I$ we have that
\begin{align}
  X - \tilde{X}
  & = (P+Q)X(P+Q) - \tilde{X} \\
  & = PXP + PXQ + QXP + QXQ - \tilde{X} \\
  & = \Big(PXP - \sum_{\vec{m},j} m \ket{\psi_{\vec{m}}^{(j)}} \bra{\psi_{\vec{m}}^{(j)}} \Big) + \Big( PXQ + QXP \Big).  \label{eq:pxp-diff}
\end{align}
Now at least formally we can write:
\begin{align}
  PXP
  & = \Bigl( \sum_{\vec{m},j} \ket{\psi_{\vec{m}}^{(j)}} \bra{\psi_{\vec{m}}^{(j)}} \Bigr) X \Bigl( \sum_{\vec{m}',j'} \ket{\psi_{\vec{m}'}^{(j')}} \bra{\psi_{\vec{m}'}^{(j')}} \Bigr) \\[1ex]
  & = \sum_{\vec{m},j} \sum_{\vec{m}',j'}  \la \psi_{\vec{m}}^{(j)}, X \psi_{\vec{m}'}^{(j')} \ra \ket{\psi_{\vec{m}}^{(j)}}  \bra{\psi_{\vec{m}'}^{(j')}}.
\end{align}
Since $\{ \psi_{\vec{m}}^{(j)} \}$ is an orthonormal basis, we have that when $(\vec{m},j) \neq (\vec{m}',j')$:
\[
  \la \psi_{\vec{m}}^{(j)}, X \psi_{\vec{m}'}^{(j')} \ra = \la \psi_{\vec{m}}^{(j)}, (X - m_1) \psi_{\vec{m}'}^{(j')} \ra.
\]
Therefore, we can express the difference from Equation~\eqref{eq:pxp-diff} as follows:
\begin{align}
  PXP - & \sum_{\vec{m},j} m \ket{\psi_{\vec{m}}^{(j)}} \bra{\psi_{\vec{m}}^{(j)}} \\
  & = \sum_{\vec{m},j} \sum_{\vec{m}',j'}  \la \psi_{\vec{m}}^{(j)}, (X - m_1) \psi_{\vec{m}'}^{(j')} \ra \ket{\psi_{\vec{m}}^{(j)}}  \bra{\psi_{\vec{m}'}^{(j')}}.
\end{align}
Hence,
\begin{align}
  \| \tilde{X} - X \| 
     & \leq  \| \sum_{\vec{m},j} \sum_{\vec{m}',j'}  \la \psi_{\vec{m}}^{(j)}, (X - m_1) \psi_{\vec{m}'}^{(j')} \ra \ket{\psi_{\vec{m}}^{(j)}}  \bra{\psi_{\vec{m}'}^{(j')}} \|  + \| PXQ \|  + \| QXP \|.
\end{align}
The term $\| PXQ \|$ can be bounded by a constant by first observing that $PXQ = P[X, Q] = -P[X,P]$ and hence $\| PXQ \| \leq \| [X, P]\|$. Since $P$ admits an exponentially localized kernel, it is easily checked that $\| [X, P] \|$ is bounded by a constant hence so is $\| PXQ \|$. The fact that $\| QXP \|$ is bounded by a constant follows similarly.

Therefore, to complete the proof of Proposition~\ref{prop:xhat-x-close}, we it suffies to show that that the following integral kernel defines a bounded operator on $L^2(\R^2)$:
\begin{equation}
\label{eq:pxp-kern}
K(\vec{x}, \vec{y}) := \sum_{\vec{m},j} \sum_{\vec{m}',j'}  \la \psi_{\vec{m}}^{(j)}, (X - m_1) \psi_{\vec{m}'}^{(j')} \ra \psi_{\vec{m}}^{(j)}(\vec{x}) \overline{\psi_{\vec{m}'}^{(j')}(\vec{y})}
\end{equation}

For these purposes, we prove something a little stronger:
\begin{proposition}
  \label{prop:abs-kern}
 If the basis $\{\psi_{\vec{m}}^{(j)} \}$ is $s$-localized with $s > 2$, then the integral operator defined by the kernel $|K(\vec{x}, \vec{y})|$ in Equation \eqref{eq:pxp-kern} is a bounded operator $L^2(\R^2) \rightarrow L^2(\R^2)$.
\end{proposition}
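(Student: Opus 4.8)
The plan is to prove boundedness of the integral operator with kernel $|K(\vec{x},\vec{y})|$ by a Schur test. Writing the double sum in \eqref{eq:pxp-kern}, the matrix element $\la \psi_{\vec{m}}^{(j)}, (X-m_1)\psi_{\vec{m}'}^{(j')}\ra$ can be split as $\la \psi_{\vec{m}}^{(j)}, (X - m_1)\psi_{\vec{m}'}^{(j')}\ra = \la (X-m_1)\psi_{\vec{m}}^{(j)}, \psi_{\vec{m}'}^{(j')}\ra$; more usefully, I would write $X - m_1 = (X - m_1')/2 + (X - m_1)/2 + (m_1' - m_1)/2$ acting symmetrically — actually the cleanest device is the identity $X - m_1 = \tfrac12\big((X-m_1) + (X - m_1')\big) + \tfrac12(m_1' - m_1)$ and then, using orthonormality to absorb the constant piece when $(\vec m,j)\ne(\vec m',j')$, bound $|\la \psi_{\vec{m}}^{(j)}, (X-m_1)\psi_{\vec{m}'}^{(j')}\ra|$ by a sum of terms each of which carries one factor of $\la \vec{x}-\vec{m}\ra$ or $\la \vec{y}-\vec{m}'\ra$ against the corresponding basis function. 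Concretely I would use Cauchy–Schwarz to get, for instance,
\[
\big|\la \psi_{\vec{m}}^{(j)}, (X-m_1)\psi_{\vec{m}'}^{(j')}\ra\big| \leq \big\| \la \cdot - \vec{m}\ra^{-s}(X-m_1)\la\cdot-\vec m'\ra^{-s}\big\|\cdot\text{(norms of the tilted basis fns)},
\]
but in practice the decisive bound is the pointwise one: since each $\psi_{\vec m}^{(j)}$ satisfies $\int \la\vec x-\vec m\ra^{2s}|\psi_{\vec m}^{(j)}(\vec x)|^2\dd{\vec x}\le C$, one gets $|\la \psi_{\vec{m}}^{(j)}, (X-m_1)\psi_{\vec{m}'}^{(j')}\ra| \le C\,\la \vec m - \vec m'\ra^{\,1-s}$ (roughly; this is where the "$s>2$" so that $s-1>1$ enters, giving a summable tail), by pairing the $X-m_1$ with whichever basis function is "closer" to the argument and using the overlap decay $|\la\psi_{\vec m}^{(j)},\psi_{\vec m'}^{(j')}\ra|\lesssim \la\vec m-\vec m'\ra^{-2s}$ type estimates.

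Next I would carry out the Schur test. Bound
\[
\int_{\R^2} |K(\vec x,\vec y)|\dd{\vec y} \;\le\; \sum_{\vec m,j}\sum_{\vec m',j'} \big|\la\psi_{\vec m}^{(j)},(X-m_1)\psi_{\vec m'}^{(j')}\ra\big|\,|\psi_{\vec m}^{(j)}(\vec x)|\int_{\R^2}|\psi_{\vec m'}^{(j')}(\vec y)|\dd{\vec y}.
\]
Here $\int|\psi_{\vec m'}^{(j')}(\vec y)|\dd{\vec y}\le C$ because $s$-localization with $s>1$ gives $L^1$ control: $\int|\psi_{\vec m'}^{(j')}| \le \|\la\cdot-\vec m'\ra^{s}\psi_{\vec m'}^{(j')}\|\cdot\|\la\cdot-\vec m'\ra^{-s}\|_{L^2}<\infty$. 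Then I need $\sum_{\vec m,j}|\psi_{\vec m}^{(j)}(\vec x)|\big(\sum_{\vec m',j'}\la\vec m-\vec m'\ra^{1-s}\big)$. The inner geometric sum over $\vec m'$ converges since $s-1>1$ and the degeneracy index runs over a finite set of size $M$; and $\sum_{\vec m,j}|\psi_{\vec m}^{(j)}(\vec x)|$ is uniformly bounded in $\vec x$ — this uses bounded density (Lemma~\ref{lem:bdd-density}/Lemma~\ref{lem:fdc}): only $O(1)$ of the $\vec m$ lie within distance $R$ of $\vec x$ for each $R$, and for the far ones $|\psi_{\vec m}^{(j)}(\vec x)|$ is pointwise small — but pointwise values of an $L^2$ function are not controlled directly, so I would instead run the Schur test with the weight trick, i.e. test against the weight $h(\vec x) = 1$ is too crude; better to use the symmetric Schur test with weight and estimate $\int|K(\vec x,\vec y)|\dd\vec y$ by grouping: use $|\psi_{\vec m}^{(j)}(\vec x)|\le \la\vec x-\vec m\ra^{-s}\cdot\la\vec x-\vec m\ra^{s}|\psi_{\vec m}^{(j)}(\vec x)|$ and Cauchy–Schwarz across the sum in $(\vec m,j)$, landing on $\big(\sum_{\vec m,j}\la\vec x-\vec m\ra^{-2s}\big)^{1/2}\big(\sum_{\vec m,j}\la\vec x-\vec m\ra^{2s}|\psi_{\vec m}^{(j)}(\vec x)|^2(\cdots)\big)^{1/2}$, the first factor bounded by bounded density and $2s>2$, and the second integrating to a constant after the $\dd\vec y$ (and later $\dd\vec x$) integration because $\sum\int\la\vec x-\vec m\ra^{2s}|\psi_{\vec m}^{(j)}(\vec x)|^2\dd\vec x$ is controlled term-by-term. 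The same estimate with the roles of $\vec x,\vec y$ swapped handles $\int|K(\vec x,\vec y)|\dd\vec x$, and Schur's lemma concludes.

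The main obstacle I anticipate is the bookkeeping in the first step: extracting the decay $\la\vec m-\vec m'\ra^{1-s}$ from $\la\psi_{\vec m}^{(j)},(X-m_1)\psi_{\vec m'}^{(j')}\ra$ honestly, since $X-m_1$ is unbounded and one must split it so that each half is paired against the basis function whose center is closest, while simultaneously using orthonormality to kill the would-be-linearly-growing constant term — and then making sure the resulting exponents ($2s$ from overlaps on one side, $1-s$ growth from the position factor, and $2$ from the lattice sum dimension) combine to something summable, which forces exactly $s>2$. Assembling these into the two Schur integrals and confirming that bounded density (rather than any periodicity) is all that's needed for the lattice sums is then routine but must be done carefully to keep the final constant finite and independent of $\vec x,\vec y$.
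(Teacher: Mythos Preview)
Your proposal has the right skeleton (Schur test on the kernel, then summability of matrix elements), but there are two genuine gaps that prevent it from closing, and both are resolved in the paper by ingredients you do not invoke.

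\textbf{Pointwise control of the basis functions.} You correctly flag that ``pointwise values of an $L^2$ function are not controlled directly,'' and your workaround via Cauchy--Schwarz does not rescue the Schur test. After your Cauchy--Schwarz step you are left with a factor
\[
\Big(\sum_{\vec m,j}\la\vec x-\vec m\ra^{2s}|\psi_{\vec m}^{(j)}(\vec x)|^2(\cdots)\Big)^{1/2}
\]
evaluated at a \emph{fixed} $\vec x$; there is no reason this should be bounded uniformly in $\vec x$, and Schur requires exactly that uniform bound before any $\dd{\vec x}$ integration. The paper closes this hole by importing an external pointwise estimate (their Lemma~\ref{lem:wf-pointwise}, from Marcelli--Moscolari--Panati): because $P$ has an exponentially localized kernel and $\psi_{\vec m}^{(j)}=P\psi_{\vec m}^{(j)}$, one actually gets $|\psi_{\vec m}^{(j)}(\vec x)|\le C\la\vec x-\vec m\ra^{-s}$ pointwise. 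This is the step that converts $s$-localization in $L^2$ into genuine pointwise decay, and it is essential here; you should use it rather than try to avoid it.

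\textbf{The matrix-element sum.} Your claimed bound $|\la\psi_{\vec m}^{(j)},(X-m_1)\psi_{\vec m'}^{(j')}\ra|\lesssim\la\vec m-\vec m'\ra^{1-s}$ is asserted but not derived, and even if true it would not suffice: summing $\la\vec m-\vec m'\ra^{1-s}$ over $\vec m'\in\Z^2$ requires $s-1>2$, i.e.\ $s>3$, not $s>2$ (your remark ``$s-1>1$ gives a summable tail'' is a one-dimensional count). The paper does \emph{not} try to extract a single off-diagonal decay rate; instead it inserts a partition of unity $\sum_{\vec k}\chi_{\vec k}=1$ into the inner product, applies a localized decay lemma (Lemma~\ref{lem:kl-mn-bd}) with carefully chosen anisotropic exponents $s_1,s_2$ in the two coordinate directions, and a product-to-sum inequality (Lemma~\ref{lem:prod-to-sum-bd}) to trade $(|X-m_1|+1)^{s_1}(|Y-m_2|+1)^{s_2}$ for $(|X-m_1|+1)^{s_1+s_2}+(|Y-m_2|+1)^{s_1+s_2}$. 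This bookkeeping is what makes the exponents add up to exactly $s>2$ rather than $s>3$, and it is the substantive technical content of the proof.
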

Note that this proposition implies that the basis expansion of $PXP$ we used above was valid.

\section{$\widehat{X}$ is exponentially localized}
\label{sec:xhat-exp-loc}
In this section, we prove the following proposition.
\begin{proposition}
  \label{prop:xhat-exp-loc}
  Suppose that $P$ is an orthogonal projector which admits an exponentially localized kernel (Definition~\ref{def:exp-loc-kern}). Suppose further that $P$ admits a basis which $s$-localized for some $s > 2$, then there exist a finite, positive constant $C$ such that for any $\gamma \geq 0$ sufficiently small:
  \begin{equation}
  \label{eq:xhat-exp-loc-eq1} 
  \| \widehat{X}_{\gamma} - \widehat{X} \| \leq C \gamma
  \end{equation}
\end{proposition}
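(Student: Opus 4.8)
The plan is to exploit the one structural feature that separates $\widehat X$ from $\tilde X$: the Fourier cutoff built into \eqref{eq:xhat} confines the integral kernel of $\widehat X - X$ to a strip of \emph{fixed} width about the diagonal, and on such a strip the exponential tilt $B_{\gamma,\vec a}(\vec x)B_{\gamma,\vec a}(\vec y)^{-1}$ differs from $1$ by only $O(\gamma)$. Granting this, the estimate collapses to the boundedness of the ``absolute-value kernel'' of $\widehat X - X$, which is in essence already established in Section~\ref{sec:xhat-close}.

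First I would note that $B_{\gamma,\vec a}$ is multiplication by a function of $X$ and $Y$, so it commutes with every $e^{i(Xt_1+Yt_2)/\Delta}$ and with $X$ itself; in particular $X_\gamma = X$, so conjugating \eqref{eq:xhat-x} by $B_{\gamma,\vec a}$ gives
\[
  \widehat X_\gamma - \widehat X = (\widehat X - X)_\gamma - (\widehat X - X).
\]
Set $E := \widehat X - X$, which by \eqref{eq:xhat-x} equals $\int_{\R^2} f(t_1)f(t_2)\, e^{i(Xt_1+Yt_2)/\Delta}(\tilde X - X)\, e^{-i(Xt_1+Yt_2)/\Delta}\dd{t_1}\dd{t_2}$. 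The $\delta$-function parts of $\tilde X$ and $X$ cancel, so $\tilde X - X$ has a genuine integral kernel, namely $-K(\vec x,\vec y) - (PXQ)(\vec x,\vec y) - (QXP)(\vec x,\vec y)$ with $K$ as in \eqref{eq:pxp-kern}. Carrying out the $t$-integrals (using that $\hat f$ is real, even, and $\hat f(0)=1$) shows $E$ has kernel
\[
  E(\vec x,\vec y) = (\tilde X - X)(\vec x,\vec y)\,\hat f\!\left(\tfrac{x_1-y_1}{\Delta}\right)\hat f\!\left(\tfrac{x_2-y_2}{\Delta}\right),
\]
which vanishes unless $|x_1-y_1|\le\Delta$ and $|x_2-y_2|\le\Delta$, since $\hat f$ is supported in $[-1,1]$.

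Next, $E_\gamma - E$ has kernel $\bigl(B_{\gamma,\vec a}(\vec x)B_{\gamma,\vec a}(\vec y)^{-1} - 1\bigr)E(\vec x,\vec y)$. From the reverse triangle inequality $\bigl|\,|(\vec x - \vec a, 1)| - |(\vec y - \vec a, 1)|\,\bigr| \le |\vec x - \vec y|$ and the elementary bound $|e^w - 1|\le |w|e^{|w|}$ one gets $\bigl|B_{\gamma,\vec a}(\vec x)B_{\gamma,\vec a}(\vec y)^{-1} - 1\bigr| \le \gamma\,|\vec x - \vec y|\,e^{\gamma|\vec x - \vec y|}$, which on the support of $E$ is at most $\sqrt 2\,\Delta\, e^{\sqrt2\,\Delta\,\gamma}\,\gamma \le C_\Delta\,\gamma$ for $\gamma\le 1$, with $C_\Delta$ depending only on the fixed parameter $\Delta$. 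Hence $|(E_\gamma - E)(\vec x,\vec y)| \le C_\Delta\,\gamma\,|(\tilde X - X)(\vec x,\vec y)|$, and the comparison principle for integral operators with nonnegative kernels yields
\[
  \|\widehat X_\gamma - \widehat X\| = \|E_\gamma - E\| \le C_\Delta\,\gamma\,\bigl\|\,|\tilde X - X|\,\bigr\|,
\]
where $|\tilde X - X|$ is the integral operator with kernel $|(\tilde X - X)(\vec x,\vec y)|$. It then remains to check $\bigl\|\,|\tilde X - X|\,\bigr\| < \infty$: the contribution of $|K(\vec x,\vec y)|$ gives a bounded operator by Proposition~\ref{prop:abs-kern} (this is where $s>2$ enters), while $PXQ = -P[X,P]$ and $QXP = [X,P]P$ have exponentially localized kernels — since $P$ does, so does $[X,P]$ (the polynomial factor $|x_1-y_1|$ is absorbed into a slightly smaller rate), and hence so do its one-sided products with $P$ — so Schur's test disposes of the remaining two pieces. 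Combining, $\|\widehat X_\gamma - \widehat X\| \le C\gamma$ for all sufficiently small $\gamma$, uniformly in $\vec a$.

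The one point requiring care — and the reason the naive bound $\|\widehat X_\gamma - \widehat X\| \le \|f\|_{L^1}^2\,\|\tilde X_\gamma - \tilde X\|$ is useless — is that $\tilde X$ is unbounded and its bounded part $\tilde X - X$ decays only algebraically off the diagonal, so there is no room to absorb the factor $e^{\gamma|\vec x - \vec y|}$ produced by the tilt; the Fourier cutoff is exactly what repairs this, by confining $\widehat X - X$ to a diagonal strip of fixed width on which the tilt is within $O(\gamma)$ of the identity. Thus the genuine content of the proposition is the boundedness of the absolute-kernel operator $|\tilde X - X|$, which is already in hand from Section~\ref{sec:xhat-close}; everything else is the short kernel computation sketched above.
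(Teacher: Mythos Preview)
Your argument is correct and follows the same core idea as the paper's proof: write $\widehat X_\gamma - \widehat X = (\widehat X - X)_\gamma - (\widehat X - X)$, pass to the integral kernel, use the compact support of $\hat f$ to bound the tilt factor $|B_{\gamma,\vec a}(\vec x)B_{\gamma,\vec a}(\vec y)^{-1}-1|$ by $C_\Delta\gamma$ on the relevant strip, and reduce to the boundedness of the absolute-value kernel. This is exactly how the paper handles the $P\tilde XP - PXP$ contribution (its $A^{(1)}$ term).

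The one genuine difference is in the treatment of the cross terms $PXQ+QXP$. You absorb them into the same kernel argument, observing that $PXQ=-P[X,P]$ and $QXP=[X,P]P$ inherit exponentially localized kernels from $P$, so Schur's test bounds the corresponding absolute-value operators and the Fourier cutoff handles the tilt just as before. The paper instead separates these into a second piece $A^{(2)}$ and bounds $\|A^{(2)}_\gamma-A^{(2)}\|$ by a purely operator-theoretic estimate (Lemma~\ref{lem:qxp-gamma}), namely $\|Q_\gamma XP_\gamma - QXP\|\le C\gamma$, which follows from $\|P_\gamma-P\|\le C_1\gamma$ and related bounds; this route never invokes the compact support of $\hat f$ for the cross terms. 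Your unified kernel treatment is arguably cleaner and avoids introducing the auxiliary lemma, while the paper's split makes explicit that the $s>2$ hypothesis is needed only for the $P\tilde XP - PXP$ part and that the cross terms are controlled by the exponential localization of $P$ alone.
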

As we saw in Equation~\eqref{eq:xhat-x} in Section~\ref{sec:xhat-close}, using the fact that $\int_{\R} f = 1$ and $[X, e^{-i (X t_1 + Y t_2) / \Delta}] = 0$ we have that:
\[
  \widehat{X} - X = \int_{\R^2} f(t_1) f(t_2) e^{i (X t_1 + Y t_2) / \Delta } (\tilde{X} - X) e^{-i (X t_1 + Y t_2) / \Delta} \dd{t_1} \dd{t_2}.
\]
Hence, since $\tilde{X} = P\tilde{X}P + QXQ$ and $X = PXP + QXQ + PXQ + QXP$ we can rewrite $\tilde{X} - X$ in the integrand above and obtain:
\begin{align}
  \widehat{X} - X
  & = \int_{\R^2} f(t_1) f(t_2) e^{i (X t_1 + Y t_2) / \Delta } (P\tilde{X}P - PXP \\
  & \hspace{14em} - QXP - PXQ) e^{-i (X t_1 + Y t_2) / \Delta } \dd{\vec{t}} \\
  & = \int_{\R^2} f(t_1) f(t_2) \Big( e^{i (X t_1 + Y t_2) / \Delta } (P\tilde{X}P - PXP) e^{-i (X t_1 + Y t_2) / \Delta} \\
  & \hspace{8em} - e^{i (X t_1 + Y t_2) / \Delta } (QXP + PXQ) e^{-i (X t_1 + Y t_2) / \Delta } \Big) \dd{t_1} \dd{t_2}.
\end{align}
To reduce clutter in the next few steps, let's define the following shorthands:
\begin{align}
  & A^{(1)} := \int_{\R^2} f(t_1) f(t_2) e^{i (X t_1 + Y t_2) / \Delta } (P\tilde{X}P - PXP) e^{-i (X t_1 + Y t_2) / \Delta} \dd{t_1} \dd{t_2} \\[1ex]
  & A^{(2)} := \int_{\R^2} f(t_1) f(t_2) e^{i (X t_1 + Y t_2) / \Delta } (Q X P + P X Q) e^{-i (X t_1 + Y t_2) / \Delta} \dd{t_1} \dd{t_2} 
\end{align}
Using this notation we clearly have that
\begin{equation}
  \label{eq:xhat-x-a1a2-1}
  \widehat{X} - X = A^{(1)} - A^{(2)}.
\end{equation}
Multiplying on the left by $B_{\gamma}$ and on the right by $B_{\gamma}^{-1}$ we have that
\begin{equation}
  \label{eq:xhat-x-a1a2-2}
  \widehat{X}_{\gamma} - X = A^{(1)}_{\gamma} - A^{(2)}_{\gamma},
\end{equation}
where we have made use of our convention for exponentially tilted operators (Section~\ref{sec:notation}). Using the identities in Equations~\eqref{eq:xhat-x-a1a2-1} and~\eqref{eq:xhat-x-a1a2-2} we can rewrite the difference we are interested in bounding as follows:
\begin{align}
  \widehat{X}_{\gamma} - \widehat{X}
  & = (\widehat{X}_{\gamma} - X) - (\widehat{X} - X) \\
  & = (A^{(1)}_{\gamma} - A^{(2)}_{\gamma}) - (A^{(1)} - A^{(2)}) \\
  & = (A^{(1)}_{\gamma} - A^{(1)}) - (A^{(2)}_{\gamma} - A^{(2)}) 
\end{align}
Hence to show that $\| \widehat{X}_{\gamma} - \widehat{X} \| \leq C \gamma$, it is enough to find constants $K_1, K_2$ so that
\begin{align}
  & \| A^{(1)}_{\gamma} - A^{(1)} \| \leq K_1 \gamma \\
  & \| A^{(2)}_{\gamma} - A^{(2)} \| \leq K_2 \gamma
\end{align}
We will show the bound for $A^{(1)}$ in Section~\ref{sec:A1-bd} and the bound for $A^{(2)}$ in Section~\ref{sec:A2-bd}.

\subsection{Bounding $\| A^{(1)}_{\gamma} - A^{(1)} \|$}
\label{sec:A1-bd}
From the calculations in Section~\ref{sec:xhat-close}, we have shown that we can write the action of $PXP - P\tilde{X}P$ in terms an integral kernel $K(\vec{x},\vec{y})$ (Equation \eqref{eq:pxp-kern}). Using this kernel, for any $g \in C_c^\infty(\R^2)$ we have that:
\[
  ((P \tilde{X} P - PXP)g)(\vec{x}) = \int_{\R^2} K(\vec{x}, \vec{y}) g(\vec{y}) \dd{\vec{y}}.
\]
We can then use this kernel to express the action of $A^{(1)}$ on any arbitrary $g \in C_c^{\infty}(\R^2)$:
\begin{align}
  (A^{(1)}g)(\vec{x})
  & = \int_{\R^2} f(t_1) f(t_2) e^{i (x_1 t_1 + x_2 t_2) / \Delta } \left( \int_{\R^2} K(\vec{x},\vec{y}) e^{-i (y_1 t_1 + y_2 t_2) / \Delta} g(\vec{y}) \right) \dd{t_1} \dd{t_2} \\
  & = \int_{\R^2} K(\vec{x},\vec{y}) g(\vec{y}) \left( \int_{\R} f(t_1) e^{i (x_1 - y_1) t_1 / \Delta} \dd{t_1}  \right) \left( \int_{\R} f(t_2) e^{i (x_2 - y_2) t_2 / \Delta} \dd{t_2}  \right) \dd{\vec{y}} \\
  & = \int_{\R^2} K(\vec{x},\vec{y}) g(\vec{y}) \hat{f}\left( \frac{x_1 - y_1}{\Delta} \right) \hat{f}\left( \frac{x_2 - y_2}{\Delta} \right) \dd{\vec{y}} 
\end{align}
Slightly abusing notation we define
\begin{equation}
  \label{eq:f-notation-abuse}
  \hat{f}\left(\frac{\vec{x} - \vec{y}}{\Delta}\right) := \hat{f}\left(\frac{x_1 - y_1}{\Delta}\right) \hat{f}\left(\frac{x_2 - y_2}{\Delta}\right).
\end{equation}
With this notation we have
\[
  (A^{(1)}g)(\vec{x}) = \int_{\R^2} K(\vec{x},\vec{y}) \hat{f}\left( \frac{\vec{x} - \vec{y}}{\Delta} \right)  g(\vec{y}) \dd{\vec{y}}.
\]
Now recall our definition for $B_{\gamma}$:
\[
B_{\gamma} = B_{\gamma,\vec{a}} = \exp\left( \gamma \sqrt{ 1 + (X-a_1)^2 + (Y-a_2)^2 } \right).
\]
Since $B_{\gamma}$ acts pointwisely, it's easy to see that
\[
  B_{\gamma} (P\tilde{X}P - PXP) B_{\gamma}^{-1}g = e^{\gamma \la \vec{x} - \vec{a} \ra} \int_{\R^2}K(\vec{x},\vec{y}) e^{-\gamma \la \vec{y} - \vec{a} \ra} g(\vec{y}) \dd{\vec{y}}.
\]
Therefore, repeating similar steps gives us that:
\begin{equation}
  \label{eq:A1_g}
  (A^{(1)}_{\gamma}g)(\vec{x}) = \int_{\R^2} K(\vec{x},\vec{y}) e^{\gamma \la \vec{x} - \vec{a} \ra}  e^{-\gamma \la \vec{y} - \vec{a} \ra} \hat{f}\left( \frac{\vec{x} - \vec{y}}{\Delta} \right) g(\vec{y}) \dd{\vec{y}}
\end{equation}
and so
\[
  ((A^{(1)}_{\gamma} - A^{(1)})g)(\vec{x}) = \int_{\R^2} K(\vec{x},\vec{y}) ( e^{\gamma \la \vec{x} - \vec{a} \ra}  e^{-\gamma \la \vec{y} - \vec{a} \ra} - 1)  \hat{f}\left( \frac{\vec{x} - \vec{y}}{\Delta} \right) g(\vec{y}) \dd{\vec{y}}.
\]

Since we are interested in the spectral norm of $A^{(1)}_{\gamma} - A^{(1)}$, we can use our expression for $(A^{(1)}_{\gamma} - A^{(1)})g$, take the inner product with any $h \in L^2(\R^2)$, and apply triangle inequality to conclude that
\begin{equation}
  \label{eq:bd1}
    \| A^{(1)}_{\gamma} - A^{(1)} \| \leq \sup_{\| g\|=\|h\|=1}
    \int_{\R^2} \int_{\R^2} \left| h(\vec{x}) K(\vec{x},\vec{y}) (e^{\gamma \la \vec{x} - \vec{a} \ra}  e^{-\gamma \la \vec{y} - \vec{a} \ra} - 1)  \hat{f}\left( \frac{\vec{x} - \vec{y}}{\Delta} \right) g(\vec{y}) \right| \dd{\vec{y}} \dd{\vec{x}} 
\end{equation}

Using reverse triangle inequality and elementary calculus one can check that
\[
  |e^{\gamma \la \vec{x} - \vec{a} \ra}  e^{-\gamma \la \vec{x} - \vec{a} \ra} - 1| \leq \gamma | \vec{x} - \vec{y} | e^{\gamma | \vec{x} - \vec{y} |}.
\]
So since $\hat{f}$ is compactly supported on $[-\Delta, \Delta]^2$ we have that:
\begin{equation}
  \label{eq:bd2}
  \| A^{(1)}_{\gamma} - A^{(1)} \| \leq \gamma (\sqrt{2} \Delta e^{\gamma \sqrt{2} \Delta}) \sup_{\| g\|=\|h\|=1} \int_{\R^2} \int_{\R^2} \left| h(\vec{x}) K(\vec{x},\vec{y}) g(\vec{y}) \right| \dd{\vec{y}} \dd{\vec{x}}.
\end{equation}
Due to Proposition~\ref{prop:abs-kern}, we know that the kernel $| K(\vec{x},\vec{y}) |$ defines a bounded operator on $L^2(\R^2)$. Hence there exists a constant $C$ so that for all $\gamma \geq 0$
\[
  \|A^{(1)}_{\gamma} - A^{(1)}\| \leq \Big(C \sqrt{2} \Delta e^{\gamma \sqrt{2} \Delta}\Big) \gamma
\]
which is what we wanted to show.

\subsection{Bounding $\|A^{(2)}_{\gamma} - A^{(2)}\|$}
\label{sec:A2-bd}
Let's begin by recalling the definitions for $A^{(2)}_{\gamma}$ and $A^{(2)}$:
\begin{align}
  & A^{(2)}_{\gamma} = \int_{\R^2} f(t_1) f(t_2) e^{i (X t_1 + Y t_2) / \Delta } (Q_{\gamma} X P_{\gamma} + P_{\gamma} X Q_{\gamma}) e^{-i (X t_1 + Y t_2) / \Delta} \dd{t_1} \dd{t_2} \\
 & A^{(2)} = \int_{\R^2} f(t_1) f(t_2) e^{i (X t_1 + Y t_2) / \Delta } (Q X P + P X Q) e^{-i (X t_1 + Y t_2) / \Delta} \dd{t_1} \dd{t_2} 
\end{align}
Hence we can write the difference we're interested in as:
\begin{align}
\int_{\R^2} f(t_1) f(t_2) e^{i (X t_1 + Y t_2) / \Delta } (Q_{\gamma} X P_{\gamma} - Q X P + P_{\gamma} X Q_{\gamma} - P X Q) e^{-i (X t_1 + Y t_2) / \Delta} \dd{t_1} \dd{t_2} 
\end{align}
We now have the following lemma which comes as a consequence of the fact that $P$ admits an exponentially localized kernel:
\begin{lemma}
\label{lem:qxp-gamma}
Suppose that $P$ is an orthogonal projector which admits an exponentially localized kernel (Definition~\ref{def:exp-loc-kern}) and $Q = I - P$. Then there exists a finite positive constants $(C, C', \gamma^*)$ such that for all $0 \leq \gamma \leq \gamma^*$
  \[
  \begin{split}
  & \| Q_{\gamma} X P_{\gamma} - QXP \| \leq C \gamma \\
  & \| P_{\gamma} X Q_{\gamma} - PXQ \| \leq C' \gamma.
  \end{split}
  \]    
\end{lemma}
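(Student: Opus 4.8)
The plan is to reduce both estimates to a single general fact: if $R$ is a bounded operator on $L^2(\R^2)$ that admits an exponentially localized kernel (Definition~\ref{def:exp-loc-kern}), then $\|R_{\gamma,\vec a} - R\| \le C\gamma$ for all sufficiently small $\gamma \ge 0$, uniformly in $\vec a$. Granting this, I would first establish the algebraic identities
\[
  Q_{\gamma,\vec a} X P_{\gamma,\vec a} = B_{\gamma,\vec a}\,(QXP)\,B_{\gamma,\vec a}^{-1}, \qquad P_{\gamma,\vec a} X Q_{\gamma,\vec a} = B_{\gamma,\vec a}\,(PXQ)\,B_{\gamma,\vec a}^{-1},
\]
so that the lemma follows by applying the general fact with $R = QXP$ and with $R = PXQ$. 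To prove these identities, note that $B_{\gamma,\vec a}$ is a multiplication operator in position, hence commutes with both $X$ and $Y$; therefore $[X,P_{\gamma,\vec a}] = B_{\gamma,\vec a}[X,P]B_{\gamma,\vec a}^{-1}$. Combining this with $P_{\gamma,\vec a}^2 = P_{\gamma,\vec a}$ and the elementary identities $QXP = [X,P]P$ and $PXQ = -P[X,P]$ (valid as identities of bounded operators because $\range(P) \subseteq \mathrm{dom}(X)$, a consequence of the exponential kernel bound on $P$), one obtains $Q_{\gamma,\vec a}XP_{\gamma,\vec a} = [X,P_{\gamma,\vec a}]P_{\gamma,\vec a} = B_{\gamma,\vec a}[X,P]P\,B_{\gamma,\vec a}^{-1}$, and similarly $P_{\gamma,\vec a}XQ_{\gamma,\vec a} = -P_{\gamma,\vec a}[X,P_{\gamma,\vec a}] = -B_{\gamma,\vec a}P[X,P]B_{\gamma,\vec a}^{-1}$.

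Next I would verify that $QXP$ and $PXQ$ indeed admit exponentially localized kernels. The commutator $[X,P]$ has integral kernel $(x_1 - y_1)P(\vec x,\vec y)$, and since $|P(\vec x,\vec y)| \le C e^{-\gamma_0|\vec x - \vec y|}$ one absorbs the linear prefactor into the exponential: $|x_1 - y_1|\,e^{-\gamma_0|\vec x-\vec y|} \le C_{\gamma_0'}\, e^{-\gamma_0'|\vec x - \vec y|}$ for any $\gamma_0' < \gamma_0$. Thus $[X,P]$ has an exponentially localized kernel, and a short convolution estimate — namely $\int_{\R^2} e^{-a|\vec x - \vec z|}e^{-b|\vec z - \vec y|}\dd{\vec z} \le C\, e^{-c|\vec x - \vec y|}$ with $c = \tfrac12\min(a,b)$, retaining one integrable exponential factor after using $a|\vec x - \vec z| + b|\vec z - \vec y| \ge \min(a,b)|\vec x - \vec y|$ — shows that the products $QXP = [X,P]P$ and $PXQ = -P[X,P]$ again have exponentially localized kernels, say with common rate $\gamma_1 > 0$ and constant $C_1$.

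Finally I would prove the general fact. The operator $R_{\gamma,\vec a} - R$ has integral kernel $R(\vec x,\vec y)\bigl(e^{\gamma\la\vec x - \vec a\ra}e^{-\gamma\la\vec y-\vec a\ra} - 1\bigr)$. Using the $1$-Lipschitz bound $|\la\vec x-\vec a\ra - \la\vec y-\vec a\ra| \le |\vec x - \vec y|$ together with $|e^{s}-1|\le|s|e^{|s|}$, the second factor is at most $\gamma|\vec x - \vec y|\,e^{\gamma|\vec x - \vec y|}$, so the kernel of $R_{\gamma,\vec a}-R$ is dominated by $C_1\gamma\,|\vec x - \vec y|\,e^{-(\gamma_1-\gamma)|\vec x-\vec y|}$. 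Setting $\gamma^* := \gamma_1/2$, for $\gamma\le\gamma^*$ this is at most $C_1\gamma\,|\vec x - \vec y|\,e^{-\gamma_1|\vec x-\vec y|/2} \le C_2\,\gamma\,e^{-\gamma_1|\vec x-\vec y|/4}$, and the Schur test (the $\vec x$- and $\vec y$-integrals of $e^{-\gamma_1|\vec x - \vec y|/4}$ are finite and independent of the other variable and of $\vec a$) gives $\|R_{\gamma,\vec a}-R\|\le C\gamma$ uniformly in $\vec a$. Applying this to $R = QXP$ and $R = PXQ$ produces the two claimed bounds, with a common $\gamma^*$.

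The estimates themselves (Schur tests, convolution-of-exponentials bounds, the $|e^s-1|$ inequality) are routine; the part requiring genuine care is the operator-theoretic bookkeeping around the unbounded operator $X$ — justifying $QXP = [X,P]P$ and $[X,P_{\gamma,\vec a}] = B_{\gamma,\vec a}[X,P]B_{\gamma,\vec a}^{-1}$ as identities of everywhere-defined bounded operators (resting on $\range(P) \subseteq \mathrm{dom}(X)$ and on $B_{\gamma,\vec a}^{-1}$ preserving $\mathrm{dom}(X)$), together with tracking the loss of exponential decay rate at each step so that a single $\gamma^* > 0$ works uniformly.
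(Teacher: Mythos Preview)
Your proof is correct and takes a genuinely different, somewhat cleaner route than the paper's. The paper does not pass through the identity $Q_\gamma X P_\gamma = (QXP)_\gamma$; instead it writes
\[
  Q_\gamma X P_\gamma - QXP = Q_\gamma[X,P_\gamma] - Q[X,P] = (Q_\gamma - Q)[X,P_\gamma] + Q\bigl([X,P_\gamma] - [X,P]\bigr)
\]
and then bounds each piece separately using $\|P_\gamma - P\| \le C_1\gamma$, $\|[X,P_\gamma]\| \le C_2$, and $\|[X,P_\gamma - P]\| \le C_3\gamma$, all of which follow directly from the exponential kernel bound on $P$. Your approach instead observes once that $B_{\gamma,\vec a}$ commutes with $X$, so the whole object is a tilt of the single bounded operator $QXP$, and then proves one general lemma ($\|R_\gamma - R\| \le C\gamma$ for any $R$ with exponentially localized kernel) via a direct Schur test on the kernel. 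What you gain is a reusable statement and a shorter argument that never splits $Q_\gamma X P_\gamma$ into pieces; what the paper's approach gains is that it avoids having to verify that $QXP$ itself has an exponentially localized kernel (the convolution step), since it only ever needs operator-norm bounds on $P_\gamma - P$ and $[X,P_\gamma]$ individually. Both approaches handle the unboundedness of $X$ by the same maneuver $QXP = [X,P]P$ (equivalently $Q[X,P]$), and your remarks about the domain bookkeeping are on point.
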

\begin{proof}
  We will only prove the first bound, the second bound follows by analogous calculations. Since $Q_{\gamma} P_{\gamma} = P_{\gamma} Q_{\gamma} = 0$ we have that
  \begin{align}
       Q_{\gamma} X P_{\gamma} - QXP 
       & = Q_{\gamma} [X, P_{\gamma}] - Q [X, P] \\
       & = (Q_{\gamma} - Q + Q) [X, P_{\gamma}] - Q [X, P] \\
       & = (Q_{\gamma} - Q) [X, P_{\gamma}] - Q [X, P_{\gamma} - P] 
  \end{align}
  Using the fact that $P$ admits an exponentially localized kernel, one can easily verify that for all $\gamma$ sufficiently small there exist constants $C_1, C_2, C_3$ so that:
  \[
    \begin{split}
    & \| Q_{\gamma} - Q \| = \| P_{\gamma} - P \| \leq C_1 \gamma \\
    & \| [X, P_{\gamma}] \| \leq C_2 \\
    & \| [X, P_{\gamma} - P]  \| \leq C_3 \gamma
    \end{split}
  \]
  Hence, $\| Q_{\gamma} X P_{\gamma} - QXP \| \leq (C_1 C_2 + C_3) \gamma$ and the result is proved.
\end{proof}
Hence applying Lemma~\ref{lem:qxp-gamma} we have that 
\[
  \|A^{(2)}_{\gamma} - A^{(2)}\| \leq (C + C') \gamma \left( \int |f(t_1)| \dd{t_1} \right) \left( \int |f(t_2)| \dd{t_2} \right).
\]

\section{$P \widehat{X} P$ has uniform spectral gaps}
\label{sec:xhat-usg}
We will begin this section by first proving that $P \widehat{X} P$ is essentially self-adjoint so the notion of uniform spectral gaps makes sense. In particular, we have the following easy lemma
\begin{lemma}
  Suppose that $P$ is an orthogonal projector which admits an exponentially localized kernel (Definition~\ref{def:exp-loc-kern}). Suppose further that $P$ admits a basis which is $s$-localized for some $s > 2$. Then $P \widehat{X} P$ is essentially self-adjoint.
\end{lemma}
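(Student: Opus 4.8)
The plan is to exploit Proposition~\ref{prop:xhat-x-close}, which says that $\widehat{X}$ differs from the ordinary position operator $X$ by a bounded operator, together with the already-known essential self-adjointness of $PXP$. First I would record that $\widehat{X}$ is symmetric: $\tilde{X}$ as defined in~\eqref{eq:xtilde} is symmetric (a real-weighted sum of rank-one orthogonal projections plus the symmetric operator $QXQ$), conjugation by the unitaries $e^{\pm i(Xt_1 + Yt_2)/\Delta}$ preserves symmetry, and the filter $f$ is real-valued, so the integral~\eqref{eq:xhat} defines a symmetric operator. Since $\widehat{X} - X$ is bounded (Proposition~\ref{prop:xhat-x-close}, which is where the hypothesis $s > 2$ is used) and symmetric, it is bounded self-adjoint, and hence so is $B := P(\widehat{X} - X)P$, by $(PBP)^\ast = P B^\ast P$.

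Next I would observe that the domain of $P\widehat{X}P$ coincides with that of $PXP$: writing $\widehat{X} = X + (\widehat{X} - X)$ with the second summand bounded, $P\psi$ lies in the domain of $\widehat{X}$ if and only if it lies in the domain of $X$, and on this common domain $P\widehat{X}P = PXP + B$. By \cite[Lemma 4.1]{2020StubbsWatsonLu1} (see also the footnote in Section~\ref{sec:proj-pos}), the hypothesis that $P$ admits an exponentially localized kernel already implies that $PXP$ is essentially self-adjoint on its natural domain, so $\overline{PXP}$ is self-adjoint.

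Finally I would invoke the standard perturbation fact (a special case of Kato--Rellich, the perturbation having relative bound zero): if $A$ is essentially self-adjoint on a domain $D$ and $B$ is a bounded self-adjoint operator, then $A+B$ is essentially self-adjoint on $D$ and $\overline{A+B} = \overline{A} + B$, the latter being self-adjoint as the sum of a self-adjoint operator and an everywhere-defined bounded self-adjoint operator. Applying this with $A = PXP$ and $B = P(\widehat{X} - X)P$ gives that $P\widehat{X}P$ is essentially self-adjoint, as claimed. There is no substantive analytic obstacle here, since the real work has been done in Proposition~\ref{prop:xhat-x-close} and in the cited essential self-adjointness of $PXP$; the one point that I would be careful to spell out is the identification of the domains of $PXP$ and $P\widehat{X}P$, so that the abstract perturbation statement applies verbatim.
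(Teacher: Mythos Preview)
Your proposal is correct and follows essentially the same approach as the paper: both arguments verify that $\widehat{X}$ is symmetric, decompose $P\widehat{X}P = PXP + P(\widehat{X}-X)P$ with the second summand bounded by Proposition~\ref{prop:xhat-x-close}, and then invoke Kato--Rellich together with the known essential self-adjointness of $PXP$. Your treatment is in fact slightly more careful about the domain identification, which the paper leaves implicit.
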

\begin{proof}
  Recall the definitions of $\tilde{X}$ and $\widehat{X}$:
  \begin{align}
    & \tilde{X} = \sum_{\vec{m},j} m_1 \ket{\psi_{\vec{m}}^{(j)}} \bra{\psi_{\vec{m}}^{(j)}} + QXQ \tag{\ref{eq:xtilde}, revisited}\\[.5ex]
    & \widehat{X} =  \int_{\R^2} f(t_1) f(t_2) e^{i (X t_1 + Y t_2) / \Delta } \tilde{X} e^{-i (X t_1 + Y t_2) / \Delta} \dd{t_1} \dd{t_2}. \tag{\ref{eq:xhat}, revisited}
  \end{align}
  Since $f$ is real valued and $\tilde{X}$ is a symmetric operator, it is easy to see that $\widehat{X}$ is also a symmetric operator.

  Next, notice that
  \[
    P \widehat{X} P = P X P + P (\widehat{X} - X) P
  \]
  Since $P$ admits an exponentially localized kernel the approach from \cite{1998NenciuNenciu}, it can be easily shown that in two-dimensions $PXP$ is essentially self-adjoint. Therefore, since $\| P (\widehat{X} - X) P \| \leq \| \widehat{X} - X \|$ is bounded due to our proof from Section~\ref{sec:xhat-close}, by the Kato-Rellich theorem \cite[Theorem X.12]{1975ReedSimonii}, $P \widehat{X} P$ is essentially self-adjoint.
\end{proof}
Having established essential self-adjointness, the main goal of this section is to prove the following proposition:
\begin{proposition}
  \label{prop:xhat-usg}
  Suppose that $P$ is an orthogonal projector which admits an exponentially localized kernel (Definition~\ref{def:exp-loc-kern}). Suppose further that $P$ admits a basis which is $s$-localized for some $s > 5/2$. Next, define a set of gaps $G$ as follows:
  \begin{equation}
    \label{eq:g-def}
    G = \bigcup_{m \in \Z} \biggl( m + \frac{1}{4}, m + \frac{3}{4} \biggr).
  \end{equation}
  If $\widehat{X}$ is as defined in Equation~\eqref{eq:xhat} then for $\Delta > 0$ sufficiently large, $G \subseteq \rho(P \widehat{X} P)$. Hence for such a choice of $\Delta$, $P \widehat{X} P$ has uniform spectral gaps.
\end{proposition}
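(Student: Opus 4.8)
The plan is to realize $P\widehat{X}P$ as a small bounded perturbation of $P\tilde{X}P$, whose spectrum is exactly contained in $\Z$. As noted after the definition~\eqref{eq:xtilde}, $P\tilde{X}P = \sum_{\vec{m},j} m_1 \ket{\psi_{\vec{m}}^{(j)}}\bra{\psi_{\vec{m}}^{(j)}}$ is diagonal in the orthonormal basis $\{\psi_{\vec{m}}^{(j)}\}$ of $\range(P)$ and vanishes on $\range(Q)$, so it is (essentially) self-adjoint with $\sigma(P\tilde{X}P) \subseteq \Z$. Writing $P\widehat{X}P = P\tilde{X}P + P(\widehat{X}-\tilde{X})P$ and invoking the elementary bound $\sigma(T+E) \subseteq \{z : \dist(z,\sigma(T)) \le \|E\|\}$, valid for self-adjoint $T$ and bounded $E$ (it follows from $\|(T-z)^{-1}\| = \dist(z,\sigma(T))^{-1}$ for $z \notin \sigma(T)$), I get $\sigma(P\widehat{X}P) \subseteq \{z : \dist(z,\Z) \le \|\widehat{X}-\tilde{X}\|\}$. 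Since every point of $G$ lies at distance strictly more than $1/4$ from $\Z$, the proposition reduces to showing $\|\widehat{X}-\tilde{X}\| < 1/4$ once $\Delta$ is large; granting that, the nonempty sets $\sigma_m := \sigma(P\widehat{X}P) \cap [m-\tfrac14, m+\tfrac14]$ have diameter $\le \tfrac12$ and mutual distance $\ge \tfrac12$, which is exactly Assumption~\ref{def:usg}.

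To control $\|\widehat{X}-\tilde{X}\|$ I would prove the stronger statement $\|\widehat{X}-\tilde{X}\| \le C/\Delta$ with $C$ independent of $\Delta$. Since $X$ commutes with $e^{-i(Xt_1+Yt_2)/\Delta}$ and $\int f = 1$, the smoothing operation $M \mapsto \int_{\R^2} f(t_1)f(t_2) e^{i(Xt_1+Yt_2)/\Delta} M e^{-i(Xt_1+Yt_2)/\Delta}\dd{t_1}\dd{t_2}$ fixes $X$; hence $\widehat{X} - X$ is the smoothing of $\tilde{X}-X$, and subtracting shows that $\widehat{X}-\tilde{X}$ is obtained by applying (smoothing $-$ identity) to $\tilde{X}-X$. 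As in Section~\ref{sec:A1-bd}, this operation sends an operator with kernel $M(\vec{x},\vec{y})$ to the one with kernel $M(\vec{x},\vec{y})\bigl(\hat{f}(\tfrac{\vec{x}-\vec{y}}{\Delta})-1\bigr)$, using the product convention~\eqref{eq:f-notation-abuse}. From $\hat{f}(\xi) = (1-|\xi|^2)^3$ on $[-1,1]$ and $\hat{f}(\xi)=0$ outside, one checks $|\hat{f}(\xi)-1| \le 3|\xi|$ for all $\xi$, whence $\bigl|\hat{f}(\tfrac{\vec{x}-\vec{y}}{\Delta})-1\bigr| \le \tfrac{C}{\Delta}|\vec{x}-\vec{y}|$; by Schur's test it then suffices that $|\vec{x}-\vec{y}|$ times the kernel of $\tilde{X}-X$ has uniformly bounded row and column integrals. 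Recall from Section~\ref{sec:xhat-close} that $\tilde{X}-X = -(PXQ+QXP) - (\text{operator with kernel } K)$, with $K$ as in~\eqref{eq:pxp-kern}.

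For $PXQ + QXP = P[X,Q] + Q[X,P]$ this is immediate: $[X,P]$ has an exponentially localized kernel because $P$ does, composing with $P$ or $Q$ preserves exponential localization, and multiplying an exponentially decaying kernel by $|\vec{x}-\vec{y}|$ leaves it Schur-integrable. The substantive point is the analogue for $K$, namely $\sup_{\vec{x}} \int_{\R^2} |\vec{x}-\vec{y}|\,|K(\vec{x},\vec{y})| \dd{\vec{y}} < \infty$ together with the transposed bound; this is a quantitative refinement of Proposition~\ref{prop:abs-kern}. Bounding $|\vec{x}-\vec{y}| \le \la \vec{x}-\vec{m}\ra + |\vec{m}-\vec{m}'| + \la \vec{y}-\vec{m}'\ra$ inside the double sum defining $K$, one absorbs $\la \vec{x}-\vec{m}\ra$ and $\la \vec{y}-\vec{m}'\ra$ into the $\la\cdot-\vec{m}\ra^{2s}$ and $\la\cdot-\vec{m}'\ra^{2s}$ localization weights of $\psi_{\vec{m}}^{(j)}$ and $\psi_{\vec{m}'}^{(j')}$, and absorbs the extra $|\vec{m}-\vec{m}'|$ into the off-diagonal decay of the matrix elements $\la \psi_{\vec{m}}^{(j)}, (X-m_1)\psi_{\vec{m}'}^{(j')} \ra$ in $|\vec{m}-\vec{m}'|$, all while keeping the sum over $\vec{m}'$ convergent — here the bounded density of the center points (Lemma~\ref{lem:bdd-density}) controls how many indices $\vec{m}'$ contribute at a given scale. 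I expect this weighted Schur estimate to be the main obstacle, and it is exactly where the hypothesis is pushed from the $s > 2$ used for Proposition~\ref{prop:abs-kern} up to $s > 5/2$: the extra factor $|\vec{x}-\vec{y}|$, split across the two localization weights and the matrix-element decay, costs half a power of $s$. Once this bound is in hand we have $\|\widehat{X}-\tilde{X}\| \le C/\Delta$, so choosing $\Delta > 4C$ gives $\|\widehat{X}-\tilde{X}\| < 1/4$, hence $G \subseteq \rho(P\widehat{X}P)$ and $P\widehat{X}P$ has uniform spectral gaps.
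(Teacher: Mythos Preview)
Your overall framework---viewing $P\widehat{X}P$ as a small perturbation of $P\tilde{X}P$, whose spectrum sits in $\Z$, and invoking spectral stability---is the same starting point as the paper's.  The divergence is in how you control the perturbation.  You aim for the uniform bound $\|\widehat{X}-\tilde{X}\|\le C/\Delta$, which via Schur's test amounts to showing that the kernel of $\tilde X - X$ weighted by $|\vec{x}-\vec{y}|$ has bounded row and column integrals.  That weighted kernel bound is essentially the statement that the commutators $[X,\tilde X]$ and $[Y,\tilde X]$ are bounded operators.  The paper explicitly flags this route (the estimate~\eqref{eq:bad-diff}) as the ``asymmetric'' one and remarks that carrying it through appears to require $s>3$, not $s>5/2$.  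Concretely, tracing the argument of Appendix~\ref{sec:abs-kernel-bd} with an extra $|\vec{x}-\vec{y}|$ inserted: the term $\la\vec{y}-\vec{m}'\ra\,|\psi_{\vec{m}'}^{(j')}(\vec{y})|$ is in $L^1(\dd\vec{y})$ only for $s>3$ (via the pointwise bound of Lemma~\ref{lem:wf-pointwise}), and the term $|\vec{m}-\vec{m}'|\,|\la\psi_{\vec{m}}^{(j)},(X-m_1)\psi_{\vec{m}'}^{(j')}\ra|$ summed over $\vec{m}'$ likewise forces an extra full power of localization.  Your heuristic that the extra $|\vec{x}-\vec{y}|$ ``costs half a power'' is where the proposal breaks; in a one-sided Schur bound there is no mechanism to split a single weight symmetrically between two basis functions, so the cost is a full power.

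The paper recovers $s>5/2$ by not bounding $P\widehat{X}P - P\tilde{X}P$ in operator norm at all.  Instead it introduces the operator $S_\lambda = |\lambda - P\tilde X P|^{-1/2}$ (restricted to $\range(P)$), writes
\[
(\lambda - P\widehat{X}P)^{-1} = S_\lambda\bigl(S_\lambda(\lambda - P\tilde X P)S_\lambda - S_\lambda(P\widehat{X}P - P\tilde X P)S_\lambda\bigr)^{-1}S_\lambda,
\]
and shows $\|S_\lambda(P\widehat{X}P - P\tilde X P)S_\lambda\|\le C/\Delta$.  After trading $S_\lambda$ for $\la X-\lambda\ra^{-1/2}$ (Lemma~\ref{lem:sqrt-bd}), this reduces to bounding $\la X-\lambda\ra^{-1/2}[X,\tilde X]\la X-\lambda\ra^{-1/2}$ and $\la X-\lambda\ra^{-1/2}[Y,\tilde X]\la X-\lambda\ra^{-1/2}$ (Proposition~\ref{prop:gaps-comm-bd}), where the two flanking factors of $\la X-\lambda\ra^{-1/2}$ supply exactly the missing half-power of decay on each side.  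In the basis expansion this shows up as the factor $|\lambda-m_1|^{-1/2}|\lambda-m_1'|^{-1/2}$ in Proposition~\ref{prop:comm-est}, which is what makes the Schur sums converge at $s>5/2$.  So the gap in your proposal is not the reduction to a weighted kernel estimate, but the absence of this symmetric resolvent-square-root trick, without which the stated hypothesis $s>5/2$ does not suffice.
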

The basic idea behind proving Proposition~\ref{prop:xhat-usg} is to pick some $\lambda \in G$ and consider $(\lambda - P \widehat{X} P)^{-1}$. Since by construction $\sigma(P\tilde{X}P) \subseteq \Z$ we can formally write:
\begin{align}
  (\lambda - P \widehat{X} P)^{-1}
  & = (\lambda - P \tilde{X} P + P \tilde{X} P - P \widehat{X} P)^{-1} \\
  & = (\lambda - P \tilde{X} P)^{-1} \Big(I - (P \widehat{X} P - P \tilde{X} P) (\lambda - P \tilde{X} P)^{-1} \Big)^{-1}.
\end{align}
If we can show that for some constant $C$
\begin{equation}
  \label{eq:bad-diff}
  \| (P \widehat{X} P - P \tilde{X} P) (\lambda - P \tilde{X} P)^{-1} \| \leq C \Delta^{-1},
\end{equation}
then by picking $\Delta \geq (2C)^{-1}$ we have that
\begin{align}
  \| (\lambda - P \widehat{X} P)^{-1} \|
  & \leq \| (\lambda - P \tilde{X} P)^{-1}\| \| \Big(I - (P \widehat{X} P - P \tilde{X} P) (\lambda - P \tilde{X} P)^{-1} \Big)^{-1} \| \\
  & \leq \Big( \frac{1}{4} \Big)^{-1} \Big(1 - \frac{1}{2}\Big)^{-1} = 8, 
\end{align}
where we have used that $\lambda \in G$ and $\sigma(P \tilde{X} P) \subseteq \Z$. Hence $\lambda \in \rho(P\widehat{X} P)$.

While it is possible to prove the bound in Equation~\eqref{eq:bad-diff}, we found proving this seems to require $\{ \psi_{\vec{m}}^{(j)}\}$ is $s$-localized with $s > 3$. We can slightly improve this to $s > 5/2$ by introducing decay from the resolvent $(\lambda - P \tilde{X} P)^{-1}$ ``symmetrically''.

Towards these ends, let us define the square root of $(\lambda - P \tilde{X} P)^{-1}$. Explicitly, for any $\lambda \in G$ we define $S_{\lambda}$ as follows
\begin{equation}\label{eq:defS}
  S_{\lambda} := |\lambda|^{-1/2} Q + \sum_{\vec{m},j} |\lambda - m|^{-1/2} \ket{\psi_{\vec{m}}^{(j)}} \bra{\psi_{\vec{m}}^{(j)}}
\end{equation}
Note that by construction $[S_{\lambda}, P] = 0$.

Since $P + Q = I$ and the collection $\{ \psi_{\vec{m}}^{(j)}\}$ spans $\range{(P)}$ we have that:
\begin{align}
  \lambda - P \tilde{X} P
  & = \lambda P + \lambda Q - \sum_{\vec{m},j} m \ket{\psi_{\vec{m}}^{(j)}} \bra{\psi_{\vec{m}}^{(j)}} \\
  & = \lambda Q + \sum_{\vec{m},j} (\lambda - m) \ket{\psi_{\vec{m}}^{(j)}} \bra{\psi_{\vec{m}}^{(j)}}.
\end{align}
A simple calculation shows that
\[
  S_{\lambda} (\lambda - P \tilde{X} P) S_{\lambda} = \frac{\lambda}{|\lambda|} Q + \sum_{\vec{m},j} \frac{\lambda - m}{|\lambda - m|} \ket{\psi_{\vec{m}}^{(j)}} \bra{\psi_{\vec{m}}^{(j)}}.
\]
Hence, since $\lambda \in \R$, $S_{\lambda} (\lambda - P \tilde{X} P) S_{\lambda}$ has eigenvalues $\pm 1$. 

With this definition of $S_{\lambda}$ we can now repeat similar steps to before to get
\begin{align}
  (\lambda - P \widehat{X} P)^{-1}
  & = (\lambda - P \tilde{X} P + P \tilde{X} P - P \widehat{X} P)^{-1} \\
  & = S_{\lambda} \Big(S_{\lambda} (\lambda - P \tilde{X} P) S_{\lambda} - S_{\lambda} (P \widehat{X} P - P \tilde{X} P) S_{\lambda} \Big)^{-1}  S_{\lambda}.
\end{align}
Therefore if we can show that
\begin{equation}
  \label{eq:good-diff}
  \| S_{\lambda} (P \widehat{X} P - P \tilde{X} P) S_{\lambda} \| \leq C \Delta^{-1}
\end{equation}
then by choosing $\Delta \geq (2 C)$ the previous argument implies that $\lambda \in \rho(P \widehat{X} P)$.

Let's start our proof of Equation~\eqref{eq:good-diff} by considering the difference $P \widehat{X} P - P \tilde{X} P$. Using the fact that $\int f = 1$ we have that
\begin{align}
  P \widehat{X} P& - P \tilde{X} P \\
  & = \int_{\R^2} f(t_1) f(t_2) P e^{i (X t_1 + Y t_2) / \Delta } \tilde{X} e^{-i (X t_1 + Y t_2) / \Delta} P \dd{t_1} \dd{t_2} - P \tilde{X} P \\
    & = \int_{\R^2} f(t_1) f(t_2) P \left( e^{i (X t_1 + Y t_2) / \Delta } \tilde{X} e^{-i (X t_1 + Y t_2) / \Delta} - \tilde{X} \right) P \dd{t_1} \dd{t_2}.
\end{align}
For the next few steps, let's define the difference in parenthesis as $D(t_1, t_2)$:
\[
  D(t_1,t_2) := e^{i (X t_1 + Y t_2) / \Delta } \tilde{X} e^{-i (X t_1 + Y t_2) / \Delta} - \tilde{X}.
\]
With this short-hand notation, we have that:
\begin{equation}
  \label{eq:good-diff1}
  \| S_{\lambda} (P \widehat{X} P - P \tilde{X} P) S_{\lambda} \| \leq \int_{\R^2} |f(t_1)| |f(t_2)| \| S_{\lambda} P D(t_1, t_2) P S_{\lambda} \| \dd{t_1} \dd{t_2}
\end{equation}
We will now use techniques similar to those used by Hastings \cites{2009Hastings} to control Equation~\eqref{eq:good-diff1}. One important difference between the present work and previous work is that the operators $X, Y, \tilde{X}$ are not bounded. Despite this fact, due to the multiplication on the left and right by $S_{\lambda}$, we are able to control Equation~\eqref{eq:good-diff1} and prove a similar bound to the one proved in Hastings' work \cite[Lemma 1]{2009Hastings}.

Our first step of controlling Equation~\eqref{eq:good-diff1} will be exchange the decay provided by $S_{\lambda}$ (which is diagonal in the basis $\{ \psi_{\vec{m}}^{(j)} \}$) for $\la X - \lambda \ra^{-1/2}$ (which is diagonal in the position basis). Formally, we calculate
\begin{align}
  \| S_{\lambda} & P D(t_1, t_2) P S_{\lambda} \| \\
                 & = \| S_{\lambda} P \la X - \lambda \ra^{1/2} \la X - \lambda \ra^{-1/2} D(t_1, t_2) \la X - \lambda \ra^{-1/2} \la X - \lambda \ra^{1/2} P S_{\lambda} \| \\
                 & \leq \| S_{\lambda} P \la X - \lambda \ra^{1/2} \| \| \la X - \lambda \ra^{-1/2} D(t_1, t_2) \la X - \lambda \ra^{-1/2}\| \| \la X - \lambda \ra^{1/2} P S_{\lambda} \|
\end{align}
Intuitively speaking, we should expect that $\| S_{\lambda} P \la X - \lambda \ra^{1/2}\|$ and $\| \la X - \lambda \ra^{1/2} P S_{\lambda} \|$ are both bounded since $S_{\lambda}$ is the square root of $\tilde{X}$ when restricted to $\range{(P)}$ and $X$ and $\tilde{X}$ differ by $O(1)$ in the spectral norm. Indeed, we have the following lemma:
\begin{lemma}
  \label{lem:sqrt-bd}
  Suppose that $P$ is an orthogonal projector which admits an exponentially localized kernel (Definition~\ref{def:exp-loc-kern}). Suppose further that $P$ admits an $s$-localized generalized Wannier basis for some $s > 2$. Then there exists a constant $C >0 $ such that for any $\lambda \in G$:
  \begin{align}
    & \| S_{\lambda} P \la X - \lambda \ra^{1/2} \| \leq C \\
    & \| \la X - \lambda \ra^{1/2} P S_{\lambda} \| \leq C     
  \end{align}
\end{lemma}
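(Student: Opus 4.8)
The two estimates are adjoint to one another: since $P$, $S_\lambda$ are bounded and self-adjoint and $\la X - \lambda\ra^{1/2}$ is (unbounded) self-adjoint, one has $\big(S_\lambda P \la X - \lambda\ra^{1/2}\big)^{*} = \la X - \lambda\ra^{1/2} P S_\lambda$. So the plan is to prove
\[
  \sup_{\lambda \in G}\ \big\| S_\lambda P \la X - \lambda\ra^{1/2}\big\| \leq C
\]
and deduce the companion bound by taking adjoints. The only functional-analytic care needed is the routine observation that the product of the bounded operator $S_\lambda P$ with the closed operator $\la X - \lambda\ra^{1/2}$ is closed, so once it is shown to be bounded on the dense domain $\operatorname{dom}(\la X - \lambda\ra^{1/2})$ it is everywhere defined with that bound, and then its adjoint $\la X - \lambda\ra^{1/2} P S_\lambda$ has the same norm.

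For the main estimate, observe that $\bra{\psi_{\vec m}^{(j)}}P = \bra{\psi_{\vec m}^{(j)}}$ (as $\psi_{\vec m}^{(j)} \in \range(P)$), so by \eqref{eq:defS} we have $S_\lambda P = \sum_{\vec m, j}|\lambda - m|^{-1/2}\ket{\psi_{\vec m}^{(j)}}\bra{\psi_{\vec m}^{(j)}}$, and for $g \in \operatorname{dom}(\la X - \lambda\ra^{1/2})$ orthonormality of $\{\psi_{\vec m}^{(j)}\}$ gives
\[
  \big\| S_\lambda P \la X - \lambda\ra^{1/2} g\big\|^{2} = \sum_{\vec m, j}|\lambda - m|^{-1}\,\big|\la \la X - \lambda\ra^{1/2}\psi_{\vec m}^{(j)},\, g\ra\big|^{2}.
\]
The key trick, in the spirit of Section~\ref{sec:xhat-close}, is to trade the abstract decay carried by $|\lambda-m|^{-1/2}$ for position-space decay: write pointwise
\[
  \la x_1 - \lambda\ra^{1/2}\psi_{\vec m}^{(j)}(\vec x) = \Big(\la x_1 - \lambda\ra^{1/2}\la \vec x - \vec m\ra^{-s}\Big)\cdot\Big(\la \vec x - \vec m\ra^{s}\psi_{\vec m}^{(j)}(\vec x)\Big),
\]
where the second factor has $L^2$-norm bounded by a fixed constant thanks to Definition~\ref{def:s-loc}. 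Cauchy--Schwarz then gives $\big|\la \la X - \lambda\ra^{1/2}\psi_{\vec m}^{(j)}, g\ra\big|^{2} \leq C\int_{\R^2}\la x_1 - \lambda\ra\,\la \vec x - \vec m\ra^{-2s}\,|g(\vec x)|^{2}\dd{\vec x}$. Summing over $j$ (at most $M$ nonzero terms per $\vec m$, by the bounded-density reduction of Lemmas~\ref{lem:bdd-density} and~\ref{lem:fdc}) and interchanging the (nonnegative) sum and integral reduces the whole estimate to the pointwise lattice bound
\[
  \sup_{\lambda \in G}\ \sup_{\vec x \in \R^2}\ \sum_{\vec m \in \Z^2}|\lambda - m|^{-1}\,\la x_1 - \lambda\ra\,\la \vec x - \vec m\ra^{-2s} < \infty.
\]

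This last bound is where the hypotheses $\lambda \in G$ and $s > 2$ are used, and it is the step I expect to be the crux (though it is elementary). Applying $\la a + b\ra \leq \sqrt{2}\,\la a\ra\la b\ra$ with $a = x_1 - m_1$, $b = m_1 - \lambda$, together with the fact that $\lambda \in G$ forces $|m_1 - \lambda| \geq \tfrac14$ and hence $\la m_1 - \lambda\ra \leq \sqrt{17}\,|m_1 - \lambda|$, one gets $|\lambda - m|^{-1}\la x_1 - \lambda\ra \leq \sqrt{34}\,\la x_1 - m_1\ra \leq \sqrt{34}\,\la \vec x - \vec m\ra$; thus the sum is dominated by $\sqrt{34}\sum_{\vec m \in \Z^2}\la \vec x - \vec m\ra^{1 - 2s}$, which is finite uniformly in $\vec x$ as soon as $2s - 1 > 2$, i.e. for all $s > 3/2$ and in particular for $s > 2$. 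This establishes $\| S_\lambda P \la X - \lambda\ra^{1/2}\| \leq C$ uniformly in $\lambda \in G$, and the bound on $\| \la X - \lambda\ra^{1/2} P S_\lambda\|$ then follows by the duality remark in the first paragraph.
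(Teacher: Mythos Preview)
Your argument is correct and takes a genuinely different, more elementary route than the paper. The paper proves Lemma~\ref{lem:sqrt-bd} as a special case of Lemma~\ref{lem:sqrt-bd2}, which it deduces from the auxiliary Lemma~\ref{lem:sqrt-diff} (namely $\|PS_\lambda^{-1}P - P\la X-\lambda\ra^{1/2}P\|\le C'$); that lemma in turn is established by the Schur-test / partition-of-unity machinery of Appendix~\ref{sec:abs-kernel-bd}. By contrast, you expand $\|S_\lambda P\la X-\lambda\ra^{1/2}g\|^2$ directly via orthonormality, apply a weighted Cauchy--Schwarz, and reduce everything to the elementary lattice bound $\sum_{\vec m}\la \vec x-\vec m\ra^{1-2s}<\infty$. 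Your route is shorter, avoids the kernel estimates entirely, and in fact shows that $s>3/2$ already suffices for this particular lemma. The paper's route, on the other hand, yields the stronger operator comparison of Lemma~\ref{lem:sqrt-diff} and simultaneously gives the companion bounds $\|S_\lambda^{-1}P\la X-\lambda\ra^{-1/2}\|\le C$, which are needed later in Appendix~\ref{sec:usg-bd3}.

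One minor quibble: the assertion that ``the product of the bounded operator $S_\lambda P$ with the closed operator $\la X-\lambda\ra^{1/2}$ is closed'' is not quite right in general (bounded\,$\circ$\,closed need not be closed; it is closed\,$\circ$\,bounded that is). This does not affect your argument, since all you actually use is that a bounded operator on a dense domain extends uniquely with the same norm; the adjoint identity $\big(S_\lambda P\la X-\lambda\ra^{1/2}\big)^{*}\supseteq \la X-\lambda\ra^{1/2}PS_\lambda$ then gives the second bound, exactly as the paper does.
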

\begin{proof}
  Given in Appendix~\ref{sec:sqrt-bd}.
\end{proof}
Combining this lemma with the above calculation and Equation~\eqref{eq:good-diff1} we therefore conclude that
\begin{equation}
  \label{eq:good-diff2}
  \begin{split}
    \| S_{\lambda} & (P \widehat{X} P - P \tilde{X} P) S_{\lambda} \| \\
    & \leq C^2 \int_{\R^2} |f(t_1)| |f(t_2)| \| \la X - \lambda \ra^{-1/2} D(t_1, t_2) \la X - \lambda \ra^{-1/2} \| \dd{t_1} \dd{t_2}
  \end{split}
\end{equation}
For the next few steps, let us define the shorthand
\begin{equation}\label{eq:defXb}
  \tilde{X}_b := \la X - \lambda \ra^{-1/2} \tilde{X} \la X - \lambda \ra^{-1/2}.
\end{equation}
Since $\| \tilde{X} - X \| = O(1)$ it's easy to see that for a fixed value of $\lambda$, the operator $\tilde{X}_b$ is bounded as an operator acting from $L^2(\R^2) \rightarrow L^2(\R^2)$. The subscript $b$ is intended to be suggestive of the fact that $\tilde{X}_b$ is a \textit{bounded} version of $\tilde{X}$.

We can write the quantity $\la X - \lambda \ra^{-1/2} D(t_1, t_2) \la X - \lambda \ra^{-1/2}$ in terms of $\tilde{X}_b$ by commuting $\la X - \lambda \ra^{-1/2}$ with $e^{i (X t_1 + Y t_2) / \Delta}$ and $e^{-i (X t_1 + Y t_2) / \Delta}$ as follows:
\begin{align}
\la & X - \lambda \ra^{-1/2} D(t_1, t_2) \la X - \lambda \ra^{-1/2} \\[1ex]
& = \la X - \lambda \ra^{-1/2} \bigg( e^{i (X t_1 + Y t_2) / \Delta} \tilde{X} e^{-i (X t_1 + Y t_2) / \Delta} - \tilde{X} \bigg) \la X - \lambda \ra^{-1/2} \\[1ex]
    & = e^{i (X t_1 + Y t_2) / \Delta} \tilde{X}_b e^{-i (X t_1 + Y t_2) / \Delta} - \tilde{X}_b.
\end{align}
Therefore, defining $A(t_1, t_2)$ as 
\[
  A(t_1, t_2) = e^{i (X t_1 + Y t_2) / \Delta } \tilde{X}_b e^{-i (X t_1 + Y t_2) / \Delta}
\]
we see that
\[
  \la X - \lambda \ra^{-1/2} D(t_1, t_2) \la X - \lambda \ra^{-1/2} = A(t_1, t_2) - A(0,0).
\]
We now state an important proposition regarding $\tilde{X}_b$:
\begin{proposition}
  \label{prop:gaps-comm-bd}
  Suppose that $P$ is an orthogonal projector which admits an exponentially localized kernel (Definition~\ref{def:exp-loc-kern}). Suppose further that $P$ admits a basis which is $s$-localized for some $s > 5/2$. Then for any $\lambda \in G$ there exists a finite constant $C > 0$ such that
  \begin{equation}
    \begin{split}
      \| [X, \tilde{X}_b] \| & = \| \la X - \lambda \ra^{-1/2} [X, \tilde{X}] \la X - \lambda \ra^{-1/2} \| \leq C, \\
      \| [Y, \tilde{X}_b] \| & = \| \la X - \lambda \ra^{-1/2} [Y, \tilde{X}] \la X - \lambda \ra^{-1/2} \| \leq C.
    \end{split}
  \end{equation}
\end{proposition}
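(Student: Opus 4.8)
The plan is to strip off the bounded ``$P$-part'' of $\tilde X$ and reduce both bounds to two weighted estimates for the projected position operator kernel $K$ of \eqref{eq:pxp-kern}, which I then attack by a weighted refinement of the summation behind Proposition~\ref{prop:abs-kern}.

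\emph{Reduction.} From the computation in Section~\ref{sec:xhat-close}, $X-\tilde X = K + PXQ + QXP$, with $PXQ = -P[X,P]$ and $QXP = [X,P]P$; by Proposition~\ref{prop:abs-kern} the operator $K$ is bounded, so $M := X-\tilde X$ is bounded. Since $X$ and $Y$ commute with themselves, with each other, and with $\langle X-\lambda\rangle^{\pm 1/2}$ (the last because $\langle X-\lambda\rangle^{\pm 1/2}$ is a function of $X$), one has $[A,\tilde X] = -[A,K] - [A,PXQ] - [A,QXP]$ and $[A,\tilde X_b] = \langle X-\lambda\rangle^{-1/2}[A,\tilde X]\langle X-\lambda\rangle^{-1/2}$ for $A\in\{X,Y\}$. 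As $\langle X-\lambda\rangle^{-1/2}$ is a contraction, it suffices to bound $\langle X-\lambda\rangle^{-1/2}[A,K]\langle X-\lambda\rangle^{-1/2}$ and the two $P$-terms $\langle X-\lambda\rangle^{-1/2}[A,PXQ]\langle X-\lambda\rangle^{-1/2}$, $\langle X-\lambda\rangle^{-1/2}[A,QXP]\langle X-\lambda\rangle^{-1/2}$, each by a constant uniform in $\lambda\in G$.

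\emph{The $P$-terms.} Expanding by the Leibniz rule, $[X,PXQ] = -[X,P][X,P] - P[X,[X,P]]$, $[Y,PXQ] = -[Y,P][X,P] - P[Y,[X,P]]$, and similarly for $QXP$, so each such commutator is a finite product of the operators $P$, $[X,P]$, $[Y,P]$, $[X,[X,P]]$, $[Y,[X,P]]$. Since $|P(\vec x,\vec x')|\le Ce^{-\gamma|\vec x-\vec x'|}$, each of these has a kernel bounded by $C'e^{-\gamma'|\vec x-\vec x'|}$ (a polynomial prefactor $|x_i-x_i'|$ or $|x_i-x_i'|^2$ being absorbed into the exponential) and hence is bounded; conjugating by the contraction $\langle X-\lambda\rangle^{-1/2}$ only improves this, uniformly in $\lambda$.

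\emph{The $K$-terms (the heart of the matter).} It remains to bound $\langle X-\lambda\rangle^{-1/2}[X,K]\langle X-\lambda\rangle^{-1/2}$ and $\langle X-\lambda\rangle^{-1/2}[Y,K]\langle X-\lambda\rangle^{-1/2}$ uniformly in $\lambda\in G$. As $\langle X-\lambda\rangle^{\pm1/2}$ is multiplication by $(1+(x_1-\lambda)^2)^{\pm1/4}$ and $[X,K]$, $[Y,K]$ have kernels $(x_1-y_1)K(\vec x,\vec y)$, $(x_2-y_2)K(\vec x,\vec y)$, this means showing that the kernels $\langle x_1-\lambda\rangle^{-1/2}(x_1-y_1)K(\vec x,\vec y)\langle y_1-\lambda\rangle^{-1/2}$ and $\langle x_1-\lambda\rangle^{-1/2}(x_2-y_2)K(\vec x,\vec y)\langle y_1-\lambda\rangle^{-1/2}$ define bounded operators with norm uniform in $\lambda\in G$. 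I would prove this by adapting the argument for Proposition~\ref{prop:abs-kern}: expand $K(\vec x,\vec y) = \sum_{\vec m,j,\vec m',j'} c_{\vec m,j,\vec m',j'}\psi^{(j)}_{\vec m}(\vec x)\overline{\psi^{(j')}_{\vec m'}(\vec y)}$ with $c_{\vec m,j,\vec m',j'} = \langle\psi^{(j)}_{\vec m},(X-m_1)\psi^{(j')}_{\vec m'}\rangle$, and split all polynomial prefactors about the integer centers via $x_i - y_i = (x_i-m_i) - (y_i-m_i') + (m_i-m_i')$ together with $\langle x_1-\lambda\rangle \le \langle x_1-m_1\rangle\langle m_1-\lambda\rangle$ and its analogue in $\vec y$. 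A factor $x_i-m_i$ absorbed into $\psi^{(j)}_{\vec m}$ (or $y_i-m_i'$ into $\psi^{(j')}_{\vec m'}$) merely lowers the localization order of that family by one, and the corresponding terms are handled exactly as in Proposition~\ref{prop:abs-kern}. The genuinely new object is the scalar $m_1-m_1'$ (for $[X,K]$), which is paired with the two weight factors $\langle m_1-\lambda\rangle^{-1/2}$, $\langle m_1'-\lambda\rangle^{-1/2}$ produced by the splitting. The key point is the elementary inequality
\[
  |m_1-m_1'|\,\langle m_1-\lambda\rangle^{-1/2}\langle m_1'-\lambda\rangle^{-1/2} \;\le\; C\,\langle m_1-m_1'\rangle^{1/2},
\]
valid uniformly for $\lambda\in G$ precisely because $\dist(\lambda,\Z)\ge 1/4$ keeps $\langle m_i-\lambda\rangle^{-1/2}$ bounded and lets one trade a power between $\langle m_1-\lambda\rangle$ and $\langle m_1'-\lambda\rangle$ at the cost of $\langle m_1-m_1'\rangle$. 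Thus the symmetric weights convert the full extra power $|m_1-m_1'|\sim\langle\vec m-\vec m'\rangle$ introduced by the commutator into only the half power $\langle\vec m-\vec m'\rangle^{1/2}$, which is exactly why the summation that converges for $s>2$ in Proposition~\ref{prop:abs-kern} now converges for $s>2+\tfrac12=\tfrac52$. The bound for $[Y,K]$ is obtained by the analogous weighted bookkeeping, with the factor $m_2-m_2'$ in place of $m_1-m_1'$; here it is also convenient to first write $[Y,K] = [\tilde Y',K] + (\text{bounded})$ for the discretized operator $\tilde Y' := \sum_{\vec m} m_2 P_{\vec m}$ (using $[QYQ,K]=0$), noting that $\tilde Y'$ commutes with $\tilde X' := \sum_{\vec m} m_1 P_{\vec m}$, which removes the $PXP$-contributions from the outset.

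\emph{Main obstacle.} Everything delicate is in the last step: pushing the weighted summation through with the sharp exponent $s>5/2$ while keeping all constants uniform over $\lambda\in G$, and correctly matching the extra $(\vec x-\vec y)$-factor introduced by the commutator against both the decay of the Wannier matrix elements $c_{\vec m,j,\vec m',j'}$ and the lowered localization orders of the split-off basis functions. The reductions in the first two paragraphs are routine bookkeeping by comparison.
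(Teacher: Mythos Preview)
Your decomposition $\tilde X = X - K - PXQ - QXP$ and the subsequent treatment of the ``$P$-terms'' are fine, and for $[X,\tilde X]$ your key inequality
\[
  |m_1-m_1'|\,\langle m_1-\lambda\rangle^{-1/2}\langle m_1'-\lambda\rangle^{-1/2}\;\le\;C\,\langle m_1-m_1'\rangle^{1/2}
\]
is the right idea: it matches the commutator factor $(m_1-m_1')$ against the weight direction, and this is what ultimately produces the exponent $5/2$. But there is a real gap in your treatment of $[Y,\tilde X]$.

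The problem is that your route through $[Y,K]$ (or its reduction $[\tilde Y',K]=[\tilde Y',PXP]$) produces the commutator factor $(m_2-m_2')$, whereas the weight $\langle X-\lambda\rangle^{-1/2}$ is purely in the first coordinate. Your ``analogous weighted bookkeeping, with the factor $m_2-m_2'$ in place of $m_1-m_1'$'' therefore cannot go through: the analogue of your key inequality,
\[
  |m_2-m_2'|\,\langle m_1-\lambda\rangle^{-1/2}\langle m_1'-\lambda\rangle^{-1/2}\;\le\;C\,\langle m_2-m_2'\rangle^{1/2},
\]
is simply false (take $m_1=m_1'=0$, $\lambda=1/2$, and $|m_2-m_2'|\to\infty$). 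Without any gain from the weights, absorbing the full factor $|m_2-m_2'|$ into the Schur-type summation of Proposition~\ref{prop:abs-kern} would force $s>3$, not $s>5/2$. The parenthetical remark about $[\tilde Y',\tilde X']=0$ does not help here: it only removes the $\tilde X'$-part of $K$, leaving precisely $[\tilde Y',PXP]$, whose basis coefficient is still $(m_2-m_2')$.

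The paper avoids this trap by a different decomposition: it writes $[Y,\tilde X]=[PYP,P\tilde XP]+[QYQ,QXQ]+[PYQ+QYP,\tilde X]$ and observes that the main term is the commutator \emph{with $P\tilde XP$}, which is diagonal in the Wannier basis with eigenvalues $m_1$. Hence $[PYP,P\tilde XP]=\sum (m_1-m_1')\langle\psi_{\vec m}^{(j)},Y\psi_{\vec m'}^{(j')}\rangle|\psi_{\vec m}^{(j)}\rangle\langle\psi_{\vec m'}^{(j')}|$, and the factor is again $(m_1-m_1')$, aligned with the weight. The paper then replaces the position-space weight $\langle X-\lambda\rangle^{-1/2}$ by the basis-diagonal operator $S_\lambda$ via Lemma~\ref{lem:sqrt-bd2}, and runs a Schur test with the weight functions $p(\vec m,j)=|\lambda-m_1|^{-1/2}$, which is the discrete counterpart of your key inequality. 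So the fix is not to patch the $[Y,K]$ estimate, but to change the decomposition so that the discrete commutator factor always comes from $P\tilde XP$ rather than from $Y$.
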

\begin{proof}
  Given in Appendix~\ref{sec:comm-bd-proof}.
\end{proof}
With this proposition in mind, for any $\phi \in C^\infty_c(\R^2)$ we differentiate $A(t_1, t_2) \phi$ with respect to $t_1$ to get:
\begin{align}
  \partial_{t_1} A(t_1, t_2) \phi
  & = i \Delta^{-1} e^{i (X t_1 + Y t_2) / \Delta} \bigl( X \tilde{X}_b -  \tilde{X}_b X \bigr) e^{-i (X t_1 + Y t_2) / \Delta}  \phi \\
  & = i \Delta^{-1} e^{i (X t_1 + Y t_2) / \Delta} [ X , \tilde{X}_b] e^{-i (X t_1 + Y t_2) / \Delta} \phi.
\end{align}
This differentiation step is justified for any $\phi \in C^\infty_c(\R^2)$ since $\tilde{X}_b X$ and $X \tilde{X}_b$ are both bounded operators on $C^\infty_c(\R^2)$. The fact that $\tilde{X}_b X$ is bounded is clear since $X$ is bounded on $C^\infty_c(\R^2)$ and $\tilde{X}_b$, as defined in~\eqref{eq:defXb}, is a bounded operator. The fact that $X \tilde{X}_b$ is bounded follows from the identity $X \tilde{X}_b = [X, \tilde{X}_b] + \tilde{X}_b X$ which is bounded on $C^\infty_c(\R^2)$ due to Proposition~\ref{prop:gaps-comm-bd}.

An analogous argument shows that
\[
  \partial_{t_2} A(t_1, t_2) \phi = i \Delta^{-1} e^{i (X t_1 + Y t_2) / \Delta} [ Y, \tilde{X}_b] e^{-i (X t_1 + Y t_2) / \Delta} \phi.
\]

Due to Proposition~\ref{prop:gaps-comm-bd}, it's easy to check that both $\partial_{t_1} A(t_1,t_2)\phi$ and $\partial_{t_2} A(t_1, t_2)\phi$ are continuous functions of $t_1, t_2$ so we can apply mean value theorem to conclude there exists a $(c_1, c_2) \in [0, t_1] \times [0,t_2]$ so that:
\begin{align}
  \| (A(t_1, t_2) - A(0)) \phi \| & \leq \Delta^{-1} |c_1| \| e^{i (X c_1 + Y c_2) / \Delta} [X, \tilde{X}_b] e^{-i (X c_1 + Y c_2) / \Delta} \phi \| \\
  & \hspace{2em} + \Delta^{-1} |c_2| \| e^{i (X c_1 + Y c_2) / \Delta} [Y, \tilde{X}_b ] e^{-i (X c_1 + Y c_2) / \Delta} \phi \| \\
  & \leq \Delta^{-1} \Big(|t_1| \| [X, \tilde{X}_b] \| + |t_2| \| [Y, \tilde{X}_b] \| \Big) \| \phi \| 
\end{align}
Since $C_c^\infty(\R^2)$ is dense in $L^2(\R^2)$, this implies that there exists a finite constant $C$ so that
\[
  \| \la X - \lambda \ra^{-1/2} D(t_1, t_2) \la X - \lambda \ra^{-1/2} \| \leq C \Delta^{-1} (|t_1| + |t_2|).
\]
Hence, substituting this bound into Equation~\eqref{eq:good-diff2}, we have that
\begin{align}
  \| S_{\lambda} & (P \widehat{X} P - P \tilde{X} P) S_{\lambda} \| \\
                 & \leq C \Delta^{-1}  \int_{\R^2} |f(t_1)| |f(t_2)| (|t_1| + |t_2|) \dd{t_1} \dd{t_2} \\
                 & \leq C' \Delta^{-1}, 
\end{align}
where to get the last line we have used the fact that by construction $f(t), t f(t) \in L^1(\R)$. This completes the proof of Proposition~\ref{prop:xhat-usg} and hence establishes that for $\Delta$ sufficiently large $P \widehat{X} P$ has uniform spectral gaps.

\appendix

\section{Technical Lemmas}
\label{sec:technical-lem}

We collect two technical lemmas here which will be used in other parts of the proof. 
\subsection{Decay Lemma}
\label{sec:kl-mn-bd-lem}
For the statement of our technical lemmas, for each $\vec{k} \in \Z^2$ we introduce special notation for characteristic function of the unit box centered at $\vec{k}$:
\[
\label{eq:chi-def}
  \chi_{\vec{k}}(\vec{x}) =
  \begin{cases}
    1 & \vec{x} \in \left[k_1 - \frac{1}{2}, k_1 + \frac{1}{2}\right) \times \left[k_2 - \frac{1}{2}, k_2 + \frac{1}{2}\right) \\
    0 & \text{otherwise}.
  \end{cases}
\]
Using this notation, we now state the following result:
\begin{lemma}
  \label{lem:kl-mn-bd}
  For any $s_1, s_2 \geq 0$, any $\vec{m}, \vec{k} \in \Z^2$, and any $v \in L^2(\R^2)$ 
  \begin{equation}
    \label{eq:kl-mn-bd}
    \| \chi_{\vec{k}} v \| \leq \frac{2^{s_1 + s_2} \| \chi_{\vec{k}} (|X - m_1| + 1)^{s_1} (|Y - m_2| + 1)^{s_2} v \|}{\la m_1 - k_1 \ra^{s_1} \la m_2 - k_2 \ra^{s_2}}
  \end{equation}
  where $\la x \ra$ is the Japanese bracket $\la x \ra := \sqrt{1 + |x|^2}$.
\end{lemma}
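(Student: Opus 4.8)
The plan is to reduce the inequality to a pointwise lower bound for the weight $(|x_1-m_1|+1)^{s_1}(|x_2-m_2|+1)^{s_2}$ on the support of $\chi_{\vec k}$, and then integrate against $|v|^2$. As a preliminary remark, note that the Japanese bracket depends only on the absolute value of its argument, so $\la m_1-k_1\ra=\la k_1-m_1\ra$ and likewise in the second coordinate; this is why \eqref{eq:kl-mn-bd} may be stated with either sign.

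The key step is the pointwise estimate. If $\vec x\in\mathrm{supp}(\chi_{\vec k})$ then $|x_1-k_1|<\tfrac12$, so the triangle inequality gives $|x_1-m_1|\geq|k_1-m_1|-\tfrac12$, hence
\[
|x_1-m_1|+1\geq |k_1-m_1|+\tfrac12\geq\tfrac12\bigl(|k_1-m_1|+1\bigr).
\]
Combining this with the elementary inequality $(|k_1-m_1|+1)^2\geq|k_1-m_1|^2+1=\la k_1-m_1\ra^2$ yields $|x_1-m_1|+1\geq\tfrac12\la k_1-m_1\ra$ on $\mathrm{supp}(\chi_{\vec k})$, and therefore, since $s_1\geq 0$,
\[
(|x_1-m_1|+1)^{s_1}\geq 2^{-s_1}\la k_1-m_1\ra^{s_1}\qquad\text{on }\mathrm{supp}(\chi_{\vec k}).
\]
The identical argument in the second coordinate gives $(|x_2-m_2|+1)^{s_2}\geq 2^{-s_2}\la k_2-m_2\ra^{s_2}$ on the same set.

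Finally, multiply these two bounds, square, and integrate against $|v(\vec x)|^2$. Since $\chi_{\vec k}$ is supported exactly on the unit box around $\vec k$, this gives
\[
\bigl\|\chi_{\vec k}(|X-m_1|+1)^{s_1}(|Y-m_2|+1)^{s_2}v\bigr\|^2\geq 2^{-2(s_1+s_2)}\la k_1-m_1\ra^{2s_1}\la k_2-m_2\ra^{2s_2}\,\|\chi_{\vec k}v\|^2,
\]
and taking square roots and rearranging produces \eqref{eq:kl-mn-bd}. There is no genuine obstacle here; the only point worth noting is the loss of the absolute constant $2^{s_1+s_2}$, which arises precisely from the half-unit offset between $k_i$ and the nearest point of $\mathrm{supp}(\chi_{\vec k})$ and is harmless once $s_1,s_2$ are fixed.
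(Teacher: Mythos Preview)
Your proof is correct and follows essentially the same strategy as the paper: both arguments rest on the pointwise lower bound $1+|x_i-m_i|\geq |m_i-k_i|+\tfrac12$ on $\mathrm{supp}(\chi_{\vec k})$, combined with the elementary comparison $|m_i-k_i|+\tfrac12\geq\tfrac12\la m_i-k_i\ra$ for integer differences. The paper packages the first step via auxiliary strip characteristic functions and treats the cases $m_i=k_i$ and $m_i\neq k_i$ separately, whereas you obtain it in one line from the reverse triangle inequality; your presentation is more direct but the content is the same.
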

\begin{proof}
  Instead of proving Equation~\eqref{eq:kl-mn-bd} directly we will instead prove that:
  \begin{equation}
    \label{eq:kl-mn-bd1}
    \| \chi_{\vec{k}} v \| \leq \frac{\| \chi_{\vec{k}} (|X - m_1| + 1)^{s_1} (|Y - m_2| + 1)^{s_2} v \|}{(|m_1 - k_1| + 1/2)^{s_1} (|m_2 - k_2| + 1/2)^{s_2}}
  \end{equation}
  Proving Equation~\eqref{eq:kl-mn-bd1} is sufficient since for all $a \in \Z$ one can check that
  \[
    \sqrt{1 + a^2} \leq 2 ||a| + 1/2|.
  \]
  Therefore, for any $\vec{m}, \vec{k} \in \Z^2$ and any $s > 0$ we have that for $i = 1,2$:
  \begin{align}
    (|m_i - k_i| + 1/2)^{-s} & \leq 2^{s} \la m_i - k_i \ra^{-s} 
  \end{align}
  Hence, the proving Equation~\eqref{eq:kl-mn-bd1} implies Equation~\eqref{eq:kl-mn-bd}. 

  We will now prove Equation~\eqref{eq:kl-mn-bd1} in the case where $m_1 \neq k_1$ and $m_2 \neq k_2$; the other cases follow easily using similar arguments. Our main tool for proving Equation~\eqref{eq:kl-mn-bd1} will be to use ``strip'' characteristic functions in $X$ and $Y$:
  \begin{align}
    \chi\{|x_1-m_1| \leq d\}(\vec{x}) & =
    \begin{cases}
      1 & |x_1 - m_1| \leq d \\
      0 & \text{otherwise}.
    \end{cases} \\[1ex]
    \chi\{|x_2 - m_2| \leq d\}(\vec{x}) & =
    \begin{cases}
      1 & |x_2 - m_2| \leq d \\
      0 & \text{otherwise}.
    \end{cases}
  \end{align}
  The key observation is that characteristic functions
  \[
    \chi_{\vec{k}}(\vec{x}) \quad \text{and} \quad \chi\{|x_1-m_1| \leq |m_1 - k_1| - 1/2\}(\vec{x})
  \]
  have disjoint supports (up to a set of measure zero). Therefore,
  \begin{equation}
    \label{eq:cutoff-x}
    \chi_{\vec{k}}(\vec{x}) = \chi_{\vec{k}}(\vec{x})(1 - \chi\{|x_1-m_1| \leq |m_1 - k| - 1/2\}(\vec{x})).
  \end{equation}
  Using Equation~\eqref{eq:cutoff-x} for any function $v$ we have that:
  \begin{align}
    \| & \chi_{\vec{k}} v \|^2
         = \int_{\R^2} \chi_{\vec{k}} |v(\vec{x})|^2 \dd{\vec{x}} \\
       & = \int_{\R^2} \chi_{\vec{k}}(\vec{x})(1 - \chi\{|x_1-m_1| \leq |m_1 - k_1| - 1/2\}(\vec{x})) |v(\vec{x})|^2 \dd{\vec{x}} \\
       & = \int_{\R^2} \chi_{\vec{k}}(\vec{x})(1 - \chi\{|x_1-m_1| \leq |m_1 - k_1| - 1/2\})(\vec{x}) \frac{(1 + |x_1 - m_1|)^{2s_1}}{(1 + |x_1 - m_1|)^{2s_1}} |v(\vec{x})|^2 \dd{\vec{x}} \\
  \end{align}
  Since
  \[
    1 - \chi\{|x_1-m_1| \leq |m_1 - k_1| - 1/2\}(\vec{x}) = \chi\{|x_1 - m_1| > |m_1 - k_1| - 1/2\}(\vec{x})
  \]
  we have 
  \begin{multline}
    (1 - \chi\{|x_1-m_1| \leq |m_1 - k_1| - 1/2\}(\vec{x})) \frac{1}{(1 + |x_1 - m_1|)^{2s_1}} \\
     = \chi\{|x_1-m_1| > |m_1 - k_1| - 1/2\}(\vec{x}) \frac{1}{(1 + |x_1 - m_1|)^{2s_1}} \leq \frac{1}{(|m_1 - k_1| + 1/2)^{2s_1}}
  \end{multline}
  Hence
  \begin{align}
    \| \chi_{\vec{k}} v \|^2
    & \leq \frac{1}{(|m_1 - k_1| + 1/2)^{2s_1}} \int_{\R^2} \chi_{\vec{k}}(\vec{x}) (1 + |x_1 - m_1|)^{2s_1} |v(\vec{x})|^2 \dd{\vec{x}} \\[1ex]
    & = \frac{\|\chi_{\vec{k}} (1 + |X - m_1|)^{s_1} v\|^2}{(|m_1 - k_1| + 1/2)^{2s_1}}
  \end{align}
  
  We will now apply a similar argument $\|\chi_{\vec{k}} (1 + |X - m_1|)^{s_1} v\|^2$. By similar reasoning to Equation~\eqref{eq:cutoff-x}, we have that
  \begin{equation}
    \label{eq:cutoff-y}
    \chi_{\vec{k}}(\vec{x}) = \chi_{\vec{k}}(\vec{x})(1 - \chi\{|x_2- m_2| \leq |m_2 - k_2| - 1/2\}(\vec{x})).
  \end{equation}
  Therefore,
  \begin{align}
      \|\chi_{\vec{k}} & (1 + |X - m_1|)^{s_1} v\|^2 
       = \int_{\R^2} \chi_{\vec{k}}(\vec{x}) (1 + |x_1 - m_1|)^{2 s_1} |v(\vec{x})|^2 \dd{\vec{x}} \\
      & = \int_{\R^2} \chi_{\vec{k}}(\vec{x})(1 - \chi\{|x_2 - m_2| \leq |m_2 - k_2| - 1/2\}(\vec{x})) (1 + |x_1 - m_1|)^{2 s_1} |v(\vec{x})|^2 \dd{\vec{x}} \\
      & = \int_{\R^2} \chi_{\vec{k}}(\vec{x})(1 - \chi\{|x_2 - m_2| \leq |m_2 - k_2| - 1/2\}(\vec{x})) \frac{(1 + |x_1 - m_1|)^{2 s_1}(1 + |x_2 - m_2|)^{2 s_2}}{(1 + |x_2 - m_2|)^{2 s_2}} |v(\vec{x})|^2 \dd{\vec{x}}.
  \end{align}
  Hence, repeating a similar argument to before we conclude that:
  \[
    \| \chi_{\vec{k}} v \|^2 \leq \frac{\|\chi_{\vec{k}} (1 + |X - m_1|)^{s_1} (1 + |Y - m_2|)^{s_2} v\|^2}{(|m_1 - k_1| + 1/2)^{2s_1} (|m_2 - k_2| + 1/2)^{2s_2}}.
  \]
  This proves Equation~\eqref{eq:kl-mn-bd1} proving the lemma.
\end{proof}

\subsection{Product to Sum Bound}
\begin{lemma}
  \label{lem:prod-to-sum-bd}
  For any $s_1, s_2 \geq 0$, any $\vec{m} \in \R^2$, and any $v \in L^2(\R^2)$ we have the following inequality:
  \begin{align}
    \| (1 + &|X - m_1|)^{s_1} (1 + |Y - m_2|)^{s_2} v \| \\
            & \leq \| (1 + |X - m_1|)^{s_1+s_2} v \| +  \| (1 + |Y - m_2|)^{s_1 + s_2} v \|.
  \end{align}
\end{lemma}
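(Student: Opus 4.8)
The plan is to reduce this operator-norm statement to an elementary pointwise inequality between the weight functions and then invoke the triangle inequality for the $L^2$ norm. Writing $a := 1 + |x_1 - m_1|$ and $b := 1 + |x_2 - m_2|$ (both $\geq 1$, though $\geq 0$ is all that is needed), I would first establish the scalar estimate
\[
  a^{s_1} b^{s_2} \leq a^{s_1 + s_2} + b^{s_1 + s_2} \qquad \text{for all } a, b \geq 0 \text{ and } s_1, s_2 \geq 0.
\]
The cleanest argument is a case split: since $t \mapsto t^{s}$ is nondecreasing on $[0,\infty)$ for every $s \geq 0$, if $a \leq b$ then $a^{s_1} b^{s_2} \leq b^{s_1} b^{s_2} = b^{s_1+s_2}$, and if $b \leq a$ then $a^{s_1} b^{s_2} \leq a^{s_1} a^{s_2} = a^{s_1+s_2}$; in either case $a^{s_1} b^{s_2} \leq \max\{a,b\}^{s_1+s_2} \leq a^{s_1+s_2} + b^{s_1+s_2}$. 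This automatically handles the degenerate cases $s_1 = 0$ or $s_2 = 0$. (Alternatively, for $s_1, s_2 > 0$ one could invoke Young's inequality with conjugate exponents $p = (s_1+s_2)/s_1$ and $q = (s_1+s_2)/s_2$, but the monotonicity argument avoids separating out the edge cases.)

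Next I would substitute $a = 1 + |x_1 - m_1|$ and $b = 1 + |x_2 - m_2|$ to obtain, for a.e.\ $\vec{x} \in \R^2$,
\[
  (1 + |x_1 - m_1|)^{s_1} (1 + |x_2 - m_2|)^{s_2} \leq (1 + |x_1 - m_1|)^{s_1+s_2} + (1 + |x_2 - m_2|)^{s_1+s_2}.
\]
Multiplying both sides by $|v(\vec{x})| \geq 0$ and taking the $L^2(\R^2)$ norm, the triangle inequality for $\|\cdot\|$ then yields
\[
  \| (1 + |X - m_1|)^{s_1} (1 + |Y - m_2|)^{s_2} v \| \leq \| (1 + |X - m_1|)^{s_1+s_2} v \| + \| (1 + |Y - m_2|)^{s_1+s_2} v \|,
\]
which is precisely the claimed inequality.

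There is no genuine obstacle here; the only points requiring a little care are (i) phrasing the monotonicity step so that it covers $s_i = 0$ without a separate argument, and (ii) noting that the inequality is to be read in $[0,\infty]$, so it holds vacuously when the right-hand side is infinite and carries information only when $v$ belongs to the relevant weighted $L^2$ space. No hypothesis on $P$ or on the generalized Wannier basis enters — this is a purely measure-theoretic bookkeeping lemma used to trade mixed $X$–$Y$ weights for single-variable weights elsewhere in the proof.
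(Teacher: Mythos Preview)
Your proof is correct. Both your argument and the paper's reduce to a pointwise scalar inequality followed by passage to the $L^2$ norm, but the route to the scalar bound differs slightly: you use the monotonicity observation $a^{s_1}b^{s_2} \leq \max\{a,b\}^{s_1+s_2} \leq a^{s_1+s_2}+b^{s_1+s_2}$ directly on the weights and then apply the triangle inequality in $L^2$, whereas the paper applies Young's product inequality with exponents $p=(s_1+s_2)/s_1$, $q=(s_1+s_2)/s_2$ to the \emph{squared} weights, integrates, and finishes with $\sqrt{a^2+b^2}\leq |a|+|b|$. Your version is a bit more economical --- it handles $s_i=0$ without a separate case and avoids invoking Young --- while the paper's Young-based argument produces the sharper intermediate bound $\tfrac{1}{p}\|(1+|X-m_1|)^{s_1+s_2}v\|^2+\tfrac{1}{q}\|(1+|Y-m_2|)^{s_1+s_2}v\|^2$ on the squared norm, though this refinement is not used elsewhere.
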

\begin{proof}
Observe that the result is trivial if $s_1 = 0$ or $s_2 = 0$ so we can assume without loss of generality that $s_1 > 0$ and $s_2 > 0$.

  By definition we have that:
  \begin{align}
    \| (1 + & |X - m_1|)^{s_1} (1 + |Y - m_2|)^{s_2} v \|^2 \\
    &= \int_{\R^2} (1 + | x_1 - m_1 |)^{2s_1} (1 + | x_2 - m_2 |)^{2s_2} |v(\vec{x})|^2 \dd{\vec{x}}.
  \end{align}
  Since $s_1, s_2 > 0$ we can apply Young's product inequality with $p = \frac{s_1 + s_2}{s_1}$ and $q = \frac{s_1 + s_2}{s_2}$ so that
  \begin{align}
    (1 +& | x_1 - m_1 |)^{2s_1} (1 + | x_2 - m_2 |)^{2s_2} \\[1ex]
        & \leq \frac{1}{p} (1 + | x_1 - m_1 |)^{2s_1 p} + \frac{1}{q} (1 + | x_2 - m_2 |)^{2s_2 q} \\[1ex]
        & \leq \frac{1}{p} (1 + | x_1 - m_1 |)^{2(s_1 + s_2)} + \frac{1}{q} (1 + | x_2 - m_2 |)^{2(s_1 + s_2)}
  \end{align}
  Hence, using this pointwise bound:
  \begin{align}
    \| (1 + |X - m_1|)^{s_1} & (1 + |Y - m_2|)^{s_2} v \|^2 \\
                           & \leq \frac{1}{p} \| (1 + |X - m_1|)^{s_1+s_2} v \|^2 + \frac{1}{q} \| (1 + |Y - m_2|)^{s_1+s_2} v \|^2
  \end{align}
  The result follows by taking square roots, using that $\sqrt{a^2 + b^2} \leq |a| + |b|$, and observing that $\max\{ p^{-1/2}, q^{-1/2} \} \leq 1$
\end{proof}

\section{Proof of Proposition~\ref{prop:abs-kern}}
\label{sec:abs-kernel-bd}
Let us recall the integral kernel we would like to study:
\begin{equation}
K(\vec{x}, \vec{y}) := \sum_{\vec{m},j} \sum_{\vec{m}',j'}  \la \psi_{\vec{m}}^{(j)}, (X - m_1) \psi_{\vec{m}'}^{(j')} \ra \psi_{\vec{m}}^{(j)}(\vec{x}) \overline{\psi_{\vec{m}'}^{(j')}(\vec{y})}
\end{equation}
To show that this kernel defines a bounded operator from $L^2(\R^2) \rightarrow L^2(\R^2)$, we will appeal to the continuous version of Schur's test:
\begin{theorem}[Schur's Test, adapted from \text{\cite[Lemma 1.11.14]{2010Tao}}]
\label{thm:schur-tao}
Let $K : \R^2 \times \R^2 \rightarrow \C$ be a measurable function obeying the bounds
\[
\| K(\vec{x}, \cdot) \|_{L^1} \leq B_0
\]
for almost every $\vec{x} \in \R^2$, and
\[
\| K(\cdot, \vec{y}) \|_{L^1} \leq B_1
\]
for almost every $\vec{y} \in \R^2$. Then the integral operators defined by kernels $K(\vec{x}, \vec{y})$ and $|K(\vec{x}, \vec{y})|$ define a bound operator from $L^2(\R^2) \rightarrow L^2(\R^2)$ with operator norm bounded by $\sqrt{B_0 B_1}$.
\end{theorem}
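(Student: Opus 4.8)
The plan is to run the classical Cauchy--Schwarz argument underlying Schur's test, handling the kernel $|K|$ first and then deducing the bound for $K$ itself. Write $T$ for the integral operator with kernel $|K(\vec{x},\vec{y})|$, write $S$ for the integral operator with kernel $K(\vec{x},\vec{y})$, and fix $f \in L^2(\R^2)$. The first step is a pointwise estimate on $(Tf)(\vec{x})$: splitting $|K(\vec{x},\vec{y})| = |K(\vec{x},\vec{y})|^{1/2} \cdot |K(\vec{x},\vec{y})|^{1/2}$ and applying the Cauchy--Schwarz inequality in the $\vec{y}$ variable gives
\[
  (Tf)(\vec{x}) \leq \left( \int_{\R^2} |K(\vec{x},\vec{y})| \dd{\vec{y}} \right)^{1/2} \left( \int_{\R^2} |K(\vec{x},\vec{y})| |f(\vec{y})|^2 \dd{\vec{y}} \right)^{1/2} \leq B_0^{1/2} \left( \int_{\R^2} |K(\vec{x},\vec{y})| |f(\vec{y})|^2 \dd{\vec{y}} \right)^{1/2},
\]
where the last step uses the hypothesis $\|K(\vec{x},\cdot)\|_{L^1} \leq B_0$ valid for a.e.\ $\vec{x}$.

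Next I would square this inequality and integrate in $\vec{x}$, obtaining
\[
  \|Tf\|^2 \leq B_0 \int_{\R^2} \int_{\R^2} |K(\vec{x},\vec{y})| |f(\vec{y})|^2 \dd{\vec{y}} \dd{\vec{x}}.
\]
Since the integrand $(\vec{x},\vec{y}) \mapsto |K(\vec{x},\vec{y})| |f(\vec{y})|^2$ is nonnegative and measurable, Tonelli's theorem permits exchanging the order of integration, and then the hypothesis $\|K(\cdot,\vec{y})\|_{L^1} \leq B_1$ gives
\[
  \|Tf\|^2 \leq B_0 \int_{\R^2} |f(\vec{y})|^2 \left( \int_{\R^2} |K(\vec{x},\vec{y})| \dd{\vec{x}} \right) \dd{\vec{y}} \leq B_0 B_1 \|f\|^2.
\]
In particular the right-hand side is finite, which retroactively shows that $(Tf)(\vec{x}) < \infty$ for a.e.\ $\vec{x}$ and that $\vec{x} \mapsto (Tf)(\vec{x})$ is a well-defined element of $L^2(\R^2)$; hence $T$ is bounded with $\|T\| \leq \sqrt{B_0 B_1}$.

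Finally, for the operator $S$, the integral $\int_{\R^2} K(\vec{x},\vec{y}) f(\vec{y}) \dd{\vec{y}}$ is absolutely convergent for a.e.\ $\vec{x}$ by the previous step applied to $|f|$, so $Sf$ is well defined, and the pointwise inequality $|(Sf)(\vec{x})| \leq (T|f|)(\vec{x})$ combined with $\||f|\| = \|f\|$ yields $\|Sf\| \leq \|T|f|\| \leq \sqrt{B_0 B_1}\,\|f\|$. I do not expect a genuine obstacle here: the only points requiring care are the measure-theoretic bookkeeping---measurability of the relevant integrands (immediate from measurability of $K$ and of $f$) and the application of Tonelli---and the observation that a.e.-finiteness of $Tf$ emerges as a consequence of the estimate rather than being assumed at the outset.
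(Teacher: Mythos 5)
Your proof is correct and is the standard Cauchy--Schwarz/Tonelli argument for Schur's test; the paper itself gives no proof of this statement but simply cites it from Tao, and the cited source uses essentially the same argument you do. The only cosmetic point is that your first display should read $|(Tf)(\vec{x})| \leq \cdots$ (or be stated for $|f|$) since $f$ may be complex-valued, but this is exactly what your final paragraph handles.
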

Since the collection $\{ \psi_{\vec{m}}^{(j)} \}$ is pairwise orthogonal, one can easily verify that for all $\vec{m}, \vec{m}' \in \Z^2$:
\[
\la \psi_{\vec{m}}^{(j)}, (X - m_1) \psi_{\vec{m}'}^{(j')} \ra  = \la \psi_{\vec{m}}^{(j)}, (X - m_1') \psi_{\vec{m}'}^{(j')} \ra 
\]
Hence $\overline{K(\vec{x}, \vec{y})} = K(\vec{y}, \vec{x})$ so it suffices to only prove the first bound. For any $\vec{x} \in \R^2$, by triangle inequality we have that:
\[
\| K(\vec{x}, \cdot) \|_{L^1}  \leq \sum_{\vec{m},j} \sum_{\vec{m}',j'}  |\la \psi_{\vec{m}}^{(j)}, (X - m_1) \psi_{\vec{m}'}^{(j')} \ra| \, |\psi_{\vec{m}}^{(j)}(\vec{x}) | \int |\psi_{\vec{m}'}^{(j')}(\vec{y})| \dd{\vec{y}}.
\]
For the first step of our proof we appeal to the following pointwise bound proven by Marcelli, Moscolari, and Panati in \cite{2020MarcelliMoscolariPanati} 
\begin{lemma}[adapted from \text{\cite[Lemma 2.6]{2020MarcelliMoscolariPanati}}]
\label{lem:wf-pointwise}
Suppose that $P$ is an exponentially localized projector and $\{ \psi_{\vec{m}}^{(j)} \}$ is an $s$-localized generalized Wannier basis. There exists a constant $C$, depending only on $P$, so that each $\psi_{\vec{m}}^{(j)}$ satisfies the pointwise bound:
\[
|\psi_{\vec{m}}^{(j)}(\vec{x})| \leq C \la \vec{x} - \vec{m} \ra^{-s}.
\]
\end{lemma}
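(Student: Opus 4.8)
The plan is to exploit that every basis element lies in $\range{(P)}$, so it is reproduced by the exponentially localized integral kernel of $P$, and then to trade that exponential decay against the finite weighted second moment built into $s$-localization.

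First I would note that, since $\psi_{\vec{m}}^{(j)} \in \range{(P)}$, we have $\psi_{\vec{m}}^{(j)} = P \psi_{\vec{m}}^{(j)}$; and because $P$ admits a kernel with $|P(\vec{x},\vec{y})| \le C e^{-\gamma |\vec{x}-\vec{y}|}$ (Definition~\ref{def:exp-loc-kern}), the function $\vec{y} \mapsto P(\vec{x},\vec{y})$ belongs to $L^2(\R^2)$ for every fixed $\vec{x}$, so for a.e.\ $\vec{x}$ the identity
\[
  \psi_{\vec{m}}^{(j)}(\vec{x}) = \int_{\R^2} P(\vec{x},\vec{y})\,\psi_{\vec{m}}^{(j)}(\vec{y})\dd{\vec{y}}
\]
holds with an absolutely convergent integral. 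Taking absolute values, inserting the factor $\la \vec{y}-\vec{m}\ra^{s}\la \vec{y}-\vec{m}\ra^{-s} = 1$, and applying Cauchy--Schwarz in $\vec{y}$ gives
\[
  |\psi_{\vec{m}}^{(j)}(\vec{x})| \le C \left( \int_{\R^2} e^{-2\gamma|\vec{x}-\vec{y}|}\la \vec{y}-\vec{m}\ra^{-2s}\dd{\vec{y}} \right)^{1/2} \left( \int_{\R^2} \la \vec{y}-\vec{m}\ra^{2s}|\psi_{\vec{m}}^{(j)}(\vec{y})|^2 \dd{\vec{y}} \right)^{1/2},
\]
and by Definition~\ref{def:s-loc} the second factor is bounded by a constant uniform in $\vec{m}$ and $j$.

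For the first factor I would use the elementary inequality $\la \vec{x}-\vec{m}\ra \le \sqrt{2}\,\la \vec{x}-\vec{y}\ra\la \vec{y}-\vec{m}\ra$ (Peetre's inequality), i.e.\ $\la \vec{y}-\vec{m}\ra^{-2s} \le 2^{s}\la \vec{x}-\vec{m}\ra^{-2s}\la \vec{x}-\vec{y}\ra^{2s}$, which after the substitution $\vec{w} = \vec{x}-\vec{y}$ yields
\[
  \int_{\R^2} e^{-2\gamma|\vec{x}-\vec{y}|}\la \vec{y}-\vec{m}\ra^{-2s}\dd{\vec{y}} \le 2^{s}\,\la \vec{x}-\vec{m}\ra^{-2s}\int_{\R^2} e^{-2\gamma|\vec{w}|}\la \vec{w}\ra^{2s}\dd{\vec{w}},
\]
and the remaining integral is finite because exponential decay dominates polynomial growth. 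Combining the two factors gives $|\psi_{\vec{m}}^{(j)}(\vec{x})| \le C' \la \vec{x}-\vec{m}\ra^{-s}$, where $C'$ depends only on the kernel constants $(C,\gamma)$ of $P$, the exponent $s$, and the $s$-localization constant of the basis --- in particular, it is independent of $\vec{m}$ and $j$.

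I do not anticipate a real obstacle: this is a Combes--Thomas/elliptic-regularity style estimate, and the only points needing a sentence of care are the pointwise (a.e.) validity of the reproducing identity $\psi_{\vec{m}}^{(j)} = P\psi_{\vec{m}}^{(j)}$, which is immediate once one observes $P(\vec{x},\cdot) \in L^2(\R^2)$, and the fact that the weighted-norm bound from $s$-localization is uniform over the (relabeled) basis, which is part of Definition~\ref{def:s-loc}. This mirrors the argument of \cite[Lemma 2.6]{2020MarcelliMoscolariPanati}, now carried out directly from the hypothesis that $P$ has an exponentially localized kernel.
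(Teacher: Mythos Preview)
Your argument is correct and is precisely the standard proof: use the reproducing identity $\psi = P\psi$ to write the basis element against the exponentially localized kernel, split by Cauchy--Schwarz, and transfer the decay center via Peetre's inequality. The paper does not supply its own proof of this lemma but simply cites \cite[Lemma~2.6]{2020MarcelliMoscolariPanati}; your write-up is exactly the argument that reference gives (specialized to the hypothesis that $P$ has an exponentially localized kernel), so there is nothing to compare. One small remark: as you correctly note at the end, the constant depends not only on the kernel constants of $P$ but also on $s$ and on the uniform $s$-localization constant of the basis; the phrase ``depending only on $P$'' in the lemma statement is a slight imprecision that your proof handles properly.
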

Due to Lemma \ref{lem:wf-pointwise}, we see that so long as $s > 2$, $\psi_{\vec{m}}^{(j)} \in L^1(\R^2)$. Hence 
\begin{align}
\| K(\vec{x}, \cdot) \|_{L^1}
&  \leq C \sum_{\vec{m},j}  \sum_{\vec{m}',j'}  |\la \psi_{\vec{m}}^{(j)}, (X - m_1) \psi_{\vec{m}'}^{(j')} \ra| \, |\psi_{\vec{m}}^{(j)}(\vec{x})| \\
 & \leq C \left( \sup_{\vec{m},j} \sum_{\vec{m}',j'}  |\la \psi_{\vec{m}}^{(j)}, (X - m_1) \psi_{\vec{m}'}^{(j')} \ra| \right) \left(\sum_{\vec{m},j}  |\psi_{\vec{m}}^{(j)}(\vec{x})| \right) \\
  & \leq C M \left( \sup_{\vec{m},j,j'} \sum_{\vec{m}'}  |\la \psi_{\vec{m}}^{(j)}, (X - m_1) \psi_{\vec{m}'}^{(j')} \ra| \right) \left(\sum_{\vec{m},j}  |\psi_{\vec{m}}^{(j)}(\vec{x})| \right)
\end{align}
where in the last line we have used the fact that $j' \in \{ 1, \cdots, M \}$. Once again using the pointwise bound from Lemma \ref{lem:wf-pointwise}, we see that the sum over $(\vec{m},j)$ is bounded by a constant for any fixed $\vec{x} \in \R^2$ so long as $s > 2$. Hence, to complete the proof of Proposition~\ref{prop:abs-kern} it suffices to show that the following sum is bounded:
\[
\label{eq:main-x-bd}
\sup_{\vec{m},j,j'} \sum_{\vec{m}'}  |\la \psi_{\vec{m}}^{(j)}, (X - m_1) \psi_{\vec{m}'}^{(j')} \ra|.
\]
For proving this bound, we will fix a choice of $\vec{m}, j, j'$ and then prove a bound which is independent of this choice. 

Our main technique for upper bounding \eqref{eq:main-x-bd} will be to insert a partition of unity of the form:
\[
  \sum_{\vec{k}} \chi_{\vec{k}}(\vec{x}) = 1
\]
where $\chi_{\vec{k}}$ is the characteristic function defined in Equation \eqref{eq:chi-def}. We prove two technical lemmas (Lemmas~\ref{lem:kl-mn-bd} and~\ref{lem:prod-to-sum-bd}) which relate the characteristic functions $\chi_{\vec{k}}$ to the basis $\{ \psi_{\vec{m}}^{(j)} \}$ in Appendix~\ref{sec:technical-lem}.

Inserting the partition of unity of characteristic functions $\sum_{\vec{k}} \chi_{\vec{k}} = 1$ we have that:
\begin{align}
  \sum_{\vec{m}'}  |\la \psi_{\vec{m}}^{(j)}, (X - m_1) \psi_{\vec{m}'}^{(j')} \ra|
                 & \leq \sum_{\vec{m}'}\sum_{\vec{k}}  |\la \chi_{\vec{k}} (X - m_1) \psi_{\vec{m}}^{(j)}, \psi_{\vec{m}'}^{(j')} \ra| \\
                 & \leq \sum_{\vec{m}'} \sum_{\vec{k}}  \| \chi_{\vec{k}} (X - m_1) \psi_{\vec{m}}^{(j)} \| \| \chi_{k\ell} \psi_{\vec{m}'}^{(j')} \| \\
                 & \leq \sum_{\vec{m}'} \sum_{\vec{k}}  \| \chi_{\vec{k}} (|X - m_1| + 1) \psi_{\vec{m}}^{(j)} \| \| \chi_{\vec{k}} \psi_{\vec{m}'}^{(j')} \|  \label{eq:l1-kern1}
\end{align}
where in the last line we have used the pointwise bound $(x - m_1)^2 \leq (|x - m_1| + 1)^2$.  Applying Lemma~\ref{lem:kl-mn-bd} with $s_1 = s_2 = 1 + \epsilon$ gives us that: 
\begin{align}
  \| \chi_{\vec{k}} &\psi_{\vec{m}'}^{(j')} \| \leq \frac{C_1 \| \chi_{\vec{k}} (|X - m_1'| + 1)^{1+\epsilon} (|Y - m_2'| + 1)^{1+\epsilon} \psi_{\vec{m}'}^{(j')} \|}{\la m_1' - k_1 \ra^{1+\epsilon} \la m'_2 - k_2 \ra^{1+\epsilon}}  \label{eq:l1-kern2} 
\end{align}
Next, applying Lemma~\ref{lem:kl-mn-bd} with $s_1 = s_2 = 1/2 + \epsilon$ gives that:
\begin{align}
  \| \chi_{\vec{k}} &(|X - m_1| + 1) \psi_{\vec{m}}^{(j)} \| \\
                  & \hspace{3em} \leq \frac{C_2 \| \chi_{\vec{k}} (|X - m_1 | + 1)^{3/2+\epsilon} (|Y - m_2| + 1)^{1/2+\epsilon} \psi_{\vec{m}}^{(j)} \|}{\la m_1 - k_1 \ra^{1/2+\epsilon} \la m_2 - k_2 \ra^{1/2+\epsilon}}.  \label{eq:l1-kern3} 
\end{align}
Applying Lemma~\ref{lem:prod-to-sum-bd} to upper bound Equation~\eqref{eq:l1-kern2} gives:
\begin{align}
  \| \chi_{k\ell} \psi_{\vec{m}'}^{(j')} \| 
  & \leq \frac{C_1 \| \chi_{\vec{k}} (|X - m_1'| + 1)^{1+\epsilon} (|Y - m_2'| + 1)^{1+\epsilon} \psi_{\vec{m}'}^{(j')} \|}{\la m'_1 - k_1 \ra^{1+\epsilon} \la m'_2 - k_2 \ra^{1+\epsilon}} \\
  & \leq \frac{C_1(\| (|X - m_1'| + 1)^{2+2\epsilon} \psi_{\vec{m}'}^{(j')} \| + \| (|Y - m_2'| + 1)^{2+2\epsilon} \psi_{\vec{m}'}^{(j')} \|)}{\la m'_1 - k_1 \ra^{1+\epsilon} \la m'_2 - k_2 \ra^{1+\epsilon}} \\
  & \leq \frac{C_3}{\la m'_1 - k_1 \ra^{1+\epsilon} \la m'_2 - k_2 \ra^{1+\epsilon}}.  \label{eq:l1-kern4}
\end{align}
In the last line, we have used the fact that by assumption $\{ \psi_{\vec{m}}^{(j)} \}$ is $s$-localized with $s > 2$, so we can pick $\epsilon$ sufficiently small so that
\[
  \| (|X - m_1'| + 1)^{2+2\epsilon} \psi_{\vec{m}'}^{(j')} \| + \| (|Y - m_2'| + 1)^{2+2\epsilon} \psi_{\vec{m}'}^{(j')} \|
\]
is bounded by a constant.

Plugging Equation~\eqref{eq:l1-kern3} and Equation~\eqref{eq:l1-kern4} into Equation~\eqref{eq:l1-kern1} we have that we can find a constant $C_4$ such that:
\begin{align}
 \sum_{\vec{m}'} & \sum_{\vec{k}}  \| \chi_{\vec{k}} (|X - m_1| + 1) \psi_{\vec{m}}^{(j)} \| \| \chi_{k\ell} \psi_{\vec{m}'}^{(j')} \|  \\
  & \leq \sum_{\vec{m}'} \sum_{\vec{k}} \frac{C_4 \| \chi_{\vec{k}} (|X - m_1| + 1)^{3/2+\epsilon} (|Y - m_2| + 1)^{1/2+\epsilon} \psi_{\vec{m}}^{(j)} \|}{\la m_1 - k_1 \ra^{1/2+\epsilon} \la m_2 - k_2 \ra^{1/2+\epsilon} \la m'_1 - k_1 \ra^{1+\epsilon} \la m'_2 - k_2 \ra^{1+\epsilon}}.
\end{align}
Treating $\vec{m}'$ as a constant, we can group the summand into two parts
\begin{align}
  & A_{\vec{k}} := \frac{\| \chi_{\vec{k}} (|X - m_1| + 1)^{3/2+\epsilon} (|Y - m_2| + 1)^{1/2+\epsilon} \psi_{\vec{m}}^{(j)} \|}{\la m_1 - k_1 \ra^{1/2+\epsilon} \la m_2 - k_2 \ra^{1/2+\epsilon}} \\
  & B_{\vec{m}'-\vec{k}} := \frac{1}{\la m'_1 - k_1 \ra^{1+\epsilon} \la m'_2 - k_2 \ra^{1+\epsilon}}.
\end{align}
With this notation the sum we wish to bound can be upper bounded as:
\begin{equation}
  \label{eq:l1-kern5}
  \sum_{\vec{m}'} \sum_{\vec{k}} A_{\vec{k}} B_{\vec{m}'-\vec{k}} \leq \Big( \sum_{\vec{k}} A_{\vec{k}}  \Big) \Big( \sup_{\vec{k}} \sum_{\vec{m}'} B_{\vec{m}'-\vec{k}} \Big) 
\end{equation}
Hence, to complete the proof it suffices to show that $A_{\vec{k}}, B_{\vec{m}'}$ are both in $\ell^1(\Z^2)$. The fact $B_{\vec{m}'} \in \ell^1(\Z^2)$ for any $\epsilon > 0$ is immediate by integral test; therefore we only need to show $A_{\vec{k}} \in \ell^1(\Z^2)$. 

Applying Cauchy-Schwarz inequality
\begin{align}
  \sum_{\vec{k}} A_{\vec{k}}
  & = \sum_{\vec{k}} \frac{\| \chi_{\vec{k}} (|X - m_1| + 1)^{3/2+\epsilon} (|Y - m_2| + 1)^{1/2+\epsilon} \psi_{\vec{m}}^{(j)} \|}{\la m_1 - k_1 \ra^{1/2+\epsilon} \la m_2 - k_2 \ra^{1/2+\epsilon}} \\
  & \leq \left( \sum_{\vec{k}}  \| \chi_{\vec{k}} (|X - m_1| + 1)^{3/2+\epsilon} (|Y - m_2| + 1)^{1/2+\epsilon} \psi_{\vec{m}}^{(j)} \|^2\right)^{1/2} \\
  & \hspace{3em} \times \left( \sum_{\vec{k}} \frac{1}{\la m_1 - k_1 \ra^{1+2\epsilon} \la m_2 - k_2 \ra^{1+2\epsilon}} \right)^{1/2} \\
  & \leq C_4 \| (|X - m_1| + 1)^{3/2+\epsilon} (|Y - m_2| + 1)^{1/2+\epsilon} \psi_{\vec{m}}^{(j)} \|
\end{align}
where in the last line we have made use of the fact that $\sum_{\vec{k}} \chi_{\vec{k}} = 1$. Since by assumption $\{ \psi_{\vec{m}}^{(j)} \}$ is $s$-localized with $s > 2$, by applying Lemma~\ref{lem:prod-to-sum-bd} we conclude that 
\[
  \| (|X - m_1| + 1)^{3/2+\epsilon} (|Y - m_2| + 1)^{1/2+\epsilon} \psi_{\vec{m}}^{(j)} \|
\]
is bounded for all $\epsilon$ sufficiently small. This completes the proof of Proposition~\ref{prop:abs-kern}

\section{Square Root Bounds}
\label{sec:sqrt-bd}
In this section, we will prove the following lemma which includes Lemma~\ref{lem:sqrt-bd} as a special case. 
\begin{lemma}
  \label{lem:sqrt-bd2}
  Suppose that $P$ is an orthogonal projector which admits an exponentially localized kernel (Definition~\ref{def:exp-loc-kern}). Suppose further that $P$ admits an $s$-localized generalized Wannier basis for some $s > 2$. 
  Then there exists a constant $C >0 $ such that for any $\lambda \in G$ (recall that $G$ is the set of gaps defined in Equation~\eqref{eq:g-def}):
  \begin{align}
    & \| S_{\lambda} P \la X - \lambda \ra^{1/2} \| \leq C \\
    & \| \la X - \lambda \ra^{1/2} P S_{\lambda} \| \leq C \\
    & \| S_{\lambda}^{-1} P \la X - \lambda \ra^{-1/2} \| \leq C \\
    & \| \la X - \lambda \ra^{-1/2} P S_{\lambda}^{-1} \| \leq C
  \end{align}
\end{lemma}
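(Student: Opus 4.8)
The plan is to pair the four estimates into two adjoint pairs, reduce each pair to one discrete Schur‑type bound over the relabelled Wannier index set, and obtain that bound by a weighted Cauchy--Schwarz argument in which the gap condition $\lambda \in G$ absorbs all $\lambda$‑dependence into absolute constants.

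For the reduction: since $S_\lambda$ and $\la X - \lambda \ra^{\pm 1/2}$ are self-adjoint and $P = P^* = P^2$, the operator $\la X - \lambda \ra^{1/2} P S_\lambda$ is (a restriction of) the adjoint of $S_\lambda P \la X - \lambda \ra^{1/2}$, and $\la X - \lambda \ra^{-1/2} P S_\lambda^{-1}$ the adjoint of $S_\lambda^{-1} P \la X - \lambda \ra^{-1/2}$, so it suffices to bound $T_+ := S_\lambda P \la X - \lambda \ra^{1/2}$ and $T_- := S_\lambda^{-1} P \la X - \lambda \ra^{-1/2}$. Because $[P, S_\lambda] = 0$ and $\{\psi_{\vec m}^{(j)}\}$ simultaneously diagonalises $P$ and $S_\lambda$ on $\range(P)$, for $\phi \in C_c^\infty(\R^2)$ one computes, using self-adjointness of $\la X - \lambda \ra^{\pm 1/2}$ and orthonormality of the basis,
\[
  \| T_\pm \phi \|^2 = \sum_{\vec m, j} \bigl| \la v^\pm_{\vec m, j}, \phi \ra \bigr|^2,
  \qquad v^+_{\vec m, j} := |\lambda - m_1|^{-1/2} \la X - \lambda \ra^{1/2} \psi_{\vec m}^{(j)}, \quad v^-_{\vec m, j} := |\lambda - m_1|^{1/2} \la X - \lambda \ra^{-1/2} \psi_{\vec m}^{(j)},
\]
the vectors $v^\pm_{\vec m, j}$ lying in $L^2$ because $\{\psi_{\vec m}^{(j)}\}$ is $s$-localised with $s > 2$. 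By the standard estimate $\sum_\alpha |\la v_\alpha, \phi \ra|^2 \le \bigl( \sup_\alpha \sum_\beta |\la v_\alpha, v_\beta \ra| \bigr) \|\phi\|^2$ (which also shows $T_-$ extends to a bounded operator), the whole lemma reduces to proving
\[
  \sup_{\vec m, j} \ \sum_{\vec m', j'} \bigl| \la v^\pm_{\vec m, j}, v^\pm_{\vec m', j'} \ra \bigr| < \infty \quad \text{uniformly in } \lambda \in G.
\]

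Next I would estimate a single Gram entry. Writing $\la v^+_{\vec m, j}, v^+_{\vec m', j'} \ra = |\lambda - m_1|^{-1/2} |\lambda - m_1'|^{-1/2} \la \la X - \lambda \ra^{1/2} \psi_{\vec m}^{(j)}, \la X - \lambda \ra^{1/2} \psi_{\vec m'}^{(j')} \ra$ (and analogously for $v^-$), I would bound the inner product by Cauchy--Schwarz after inserting $1 = \la \vec x - \vec m \ra^{\sigma} \la \vec x - \vec m \ra^{-\sigma}$ and splitting the two weights between the two factors. The ingredients are the elementary pointwise bounds $\la x_1 - \lambda \ra^{1/2} \le C \la x_1 - m_1 \ra^{1/2} \la m_1 - \lambda \ra^{1/2}$ and $|\lambda - m_1|^{1/2} \la x_1 - \lambda \ra^{-1/2} \le C \la x_1 - m_1 \ra^{1/2}$; the Peetre-type inequality $\la a + b \ra \le \sqrt{2}\, \la a \ra \la b \ra$, which turns the removed weight into $\la \vec x - \vec m \ra^{-\sigma} \le C \la \vec m - \vec m' \ra^{-\sigma} \la \vec x - \vec m' \ra^{\sigma}$; and the localisation estimate $\| \la \vec x - \vec m \ra^{r} \psi_{\vec m}^{(j)} \| \le C$ valid for $r \le s$. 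Combining these gives
\[
  \bigl| \la v^\pm_{\vec m, j}, v^\pm_{\vec m', j'} \ra \bigr| \le C_\sigma \bigl( |\lambda - m_1|^{-1/2} \la m_1 - \lambda \ra^{1/2} \bigr) \bigl( |\lambda - m_1'|^{-1/2} \la m_1' - \lambda \ra^{1/2} \bigr) \la \vec m - \vec m' \ra^{-\sigma}
\]
for every $\sigma$ up to $s - \tfrac12$ (for the $v^-$ family one pairs the powers $|\lambda - m_1|^{\pm 1/2}$ with the $\la X - \lambda \ra^{-1/2}$ factors so as to compensate the growth of $S_\lambda^{-1}$, and gets the same final bound). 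Because $\lambda \in G$ forces $\dist(\lambda, \Z) \ge \tfrac14$, the two parenthesised factors are bounded by an absolute constant, so $\bigl| \la v^\pm_{\vec m, j}, v^\pm_{\vec m', j'} \ra \bigr| \le C \la \vec m - \vec m' \ra^{-\sigma}$ uniformly in $\lambda \in G$. Invoking bounded density of the relabelled center points (at most $M$ indices $j'$ per lattice site), $\sum_{\vec m', j'} \bigl| \la v^\pm_{\vec m, j}, v^\pm_{\vec m', j'} \ra \bigr| \le C M \sum_{\vec m' \in \Z^2} \la \vec m - \vec m' \ra^{-\sigma}$, which is finite and independent of $(\vec m, j)$ as soon as $\sigma > 2$; choosing such a $\sigma$ in the admissible range completes the proof. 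I expect the crux to be exactly this last balancing act: the operators $\la X - \lambda \ra^{\pm 1/2}$ — and, for $T_-$, the unbounded inverse $S_\lambda^{-1}$ — each spend roughly half a power of the available polynomial decay, so one must still extract summable off-diagonal decay $\la \vec m - \vec m' \ra^{-\sigma}$ with $\sigma > 2$ from a basis that decays only like $\la \vec x - \vec m \ra^{-s}$; this is the only genuinely quantitative use of the localisation hypothesis and is where the argument would break down for an insufficiently localised basis.
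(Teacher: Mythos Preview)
Your approach is sound and genuinely different from the paper's, but it proves the lemma only under the stronger hypothesis $s>5/2$, not $s>2$ as stated. Your Gram estimate gives $|\la v^\pm_{\vec m,j}, v^\pm_{\vec m',j'}\ra| \le C\,\la \vec m-\vec m'\ra^{-\sigma}$ for $\sigma\le s-\tfrac12$, and summability over $\Z^2$ requires $\sigma>2$; the admissible range $(2,\,s-\tfrac12]$ is nonempty only when $s>5/2$. Since the lemma is invoked in the paper only under the $s>5/2$ hypothesis of Proposition~\ref{prop:xhat-usg}, your argument suffices for the applications, but it does not establish the lemma as written.

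The paper reaches $s>2$ by a different mechanism that exploits a cancellation you miss. Rather than estimating the Gram matrix directly, it writes
\[
  S_\lambda P \la X-\lambda\ra^{1/2}
   = S_\lambda P\bigl(\la X-\lambda\ra^{1/2} - S_\lambda^{-1}\bigr)P
     + S_\lambda P \la X-\lambda\ra^{1/2} Q + P,
\]
and analogously for the other three operators, reducing everything to the single estimate $\|P S_\lambda^{-1} P - P\la X-\lambda\ra^{1/2} P\|\le C'$ together with easy commutator terms $\|[P,\la X-\lambda\ra^{1/2}]\|$ controlled by the exponential localization of $P$. The point is that $S_\lambda^{-1}$ on $\range(P)$ and $\la X-\lambda\ra^{1/2}$ both encode the weight $|\cdot-\lambda|^{1/2}$, so their \emph{difference} is $O(1)$: the two half-powers cancel instead of each consuming half a power of the available decay. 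That difference is then bounded by the same Schur-test machinery as Proposition~\ref{prop:abs-kern}, which needs only $s>2$. Your direct Gram approach treats the two growth factors separately and thus pays the half-power twice.
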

Lemma~\ref{lem:sqrt-bd2} follows as an easy corollary of the following result and the fact that $P$ admits an exponentially localized kernel:
\begin{lemma}
  \label{lem:sqrt-diff}
  Suppose that $P$ is an orthogonal projector which admits an exponentially localized kernel (Definition~\ref{def:exp-loc-kern}). Suppose further that $P$ admits an $s$-localized generalized Wannier basis for some $s > 2$. Then there exists a constant $C' >0 $ such that for any $\lambda \in G$:
  \[
    \| P S_{\lambda}^{-1} P - P \la X - \lambda \ra^{1/2} P \| \leq C'
  \]
\end{lemma}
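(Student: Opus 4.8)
The plan is to turn the statement into a bound on a matrix in the Wannier basis and then run a Schur-test argument parallel to the proof of Proposition~\ref{prop:abs-kern}. First I would compute the left operator explicitly: inverting~\eqref{eq:defS} and using $PQ=QP=0$ and $P\ket{\psi_{\vec{m}}^{(j)}}=\ket{\psi_{\vec{m}}^{(j)}}$ gives $PS_{\lambda}^{-1}P=\sum_{\vec{m},j}|\lambda-m_1|^{1/2}\ket{\psi_{\vec{m}}^{(j)}}\bra{\psi_{\vec{m}}^{(j)}}$. Both $PS_{\lambda}^{-1}P$ and $P\la X-\lambda\ra^{1/2}P$ vanish on $\range(Q)$ and are defined on the dense subspace $\cD:=\spn\{\psi_{\vec{m}}^{(j)}\}\subseteq\range(P)$ (each $\psi_{\vec{m}}^{(j)}$ lies in the domain of $\la X-\lambda\ra^{1/2}$ because $s$-localization with $s>2$ gives a finite first moment in $X$). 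Expanding in the orthonormal basis $\{\psi_{\vec{m}}^{(j)}\}$ of $\range(P)$, the difference $T:=PS_{\lambda}^{-1}P-P\la X-\lambda\ra^{1/2}P$ has matrix entries
\[
  T_{(\vec{m},j),(\vec{m}',j')}=-\bigl\la\psi_{\vec{m}}^{(j)},\,\bigl(\la X-\lambda\ra^{1/2}-|\lambda-m_1|^{1/2}\bigr)\psi_{\vec{m}'}^{(j')}\bigr\ra,
\]
where orthonormality allows the constant $|\lambda-m_1|^{1/2}$ to be subtracted for free on the off-diagonal entries (and it is automatic on the diagonal). Since $T$ is symmetric, the problem reduces, exactly as in the proof of Proposition~\ref{prop:abs-kern}, to bounding $\sup_{\vec{m},j}\sum_{\vec{m}',j'}|T_{(\vec{m},j),(\vec{m}',j')}|$ by a constant independent of $\lambda\in G$.

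The only genuinely new ingredient is the elementary pointwise inequality $\bigl|\la x_1-\lambda\ra^{1/2}-|\lambda-m_1|^{1/2}\bigr|\le C\la x_1-m_1\ra^{1/2}$, uniform in $x_1\in\R$, $m_1\in\Z$, $\lambda\in\R$; it follows by the triangle inequality from $|\la a\ra^{1/2}-\la b\ra^{1/2}|^2\le|\la a\ra-\la b\ra|\le|a-b|$ together with $\sup_{a}\bigl||a|^{1/2}-\la a\ra^{1/2}\bigr|<\infty$. With this bound I would insert the partition of unity $\sum_{\vec{k}\in\Z^2}\chi_{\vec{k}}=1$ by the unit-box indicators of Appendix~\ref{sec:technical-lem}, move the (real) multiplication operator onto $\psi_{\vec{m}}^{(j)}$, and use that on $\mathrm{supp}\,\chi_{\vec{k}}$ one has $|x_1-m_1|\le|k_1-m_1|+\tfrac12$, to get
\[
  |T_{(\vec{m},j),(\vec{m}',j')}|\le C\sum_{\vec{k}}\la k_1-m_1\ra^{1/2}\,\|\chi_{\vec{k}}\psi_{\vec{m}}^{(j)}\|\,\|\chi_{\vec{k}}\psi_{\vec{m}'}^{(j')}\|.
\]

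Summing over $(\vec{m}',j')$ and factoring the $\vec{k}$-sum, I would control the two pieces separately, mimicking Appendix~\ref{sec:abs-kernel-bd}. For the inner sum $\sum_{\vec{m}',j'}\|\chi_{\vec{k}}\psi_{\vec{m}'}^{(j')}\|$: $j'$ ranges over at most $M$ values, and Lemma~\ref{lem:kl-mn-bd} with $s_1=s_2=1+\epsilon$, Lemma~\ref{lem:prod-to-sum-bd}, and $s$-localization (with $\epsilon$ small enough that $2+2\epsilon\le s$) give $\|\chi_{\vec{k}}\psi_{\vec{m}'}^{(j')}\|\le C\la m_1'-k_1\ra^{-1-\epsilon}\la m_2'-k_2\ra^{-1-\epsilon}$, which is summable in $\vec{m}'$ uniformly in $\vec{k}$. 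For the outer sum $\sum_{\vec{k}}\la k_1-m_1\ra^{1/2}\|\chi_{\vec{k}}\psi_{\vec{m}}^{(j)}\|$: apply Lemma~\ref{lem:kl-mn-bd} with exponents $s_1,s_2$ obeying $s_1+s_2\le s$, $s_1>1$, $s_2>1/2$ (feasible since $s>2$), then Cauchy--Schwarz in $\vec{k}$; the weight $\sum_{\vec{k}}\la k_1-m_1\ra^{1-2s_1}\la k_2-m_2\ra^{-2s_2}$ converges, while $\|(1+|X-m_1|)^{s_1}(1+|Y-m_2|)^{s_2}\psi_{\vec{m}}^{(j)}\|$ is bounded by Lemma~\ref{lem:prod-to-sum-bd} and $s$-localization. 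Combining gives $\sup_{\vec{m},j}\sum_{\vec{m}',j'}|T_{(\vec{m},j),(\vec{m}',j')}|\le C$, with $C$ independent of $\lambda$, so $\|T\|\le C$ by Schur's test.

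The main obstacle I anticipate is bookkeeping rather than conceptual: matching the algebraic decay harvested from Lemma~\ref{lem:kl-mn-bd} against the growth factor $\la k_1-m_1\ra^{1/2}$ that the half-power operator $\la X-\lambda\ra^{1/2}$ introduces, and verifying that every moment estimate entering the computation is uniform in $\lambda\in G$ (it is, since the pointwise inequality above and the constant $\sup_a||a|^{1/2}-\la a\ra^{1/2}|$ do not depend on $\lambda$, and $G$ enters only through $S_{\lambda}$ being well defined). A minor point of care is to carry out all the formal manipulations on $\cD$ and then extend $T$ to a bounded operator on $L^2(\R^2)$ by density.
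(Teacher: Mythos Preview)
Your proposal is correct and follows essentially the same route as the paper: expand in the Wannier basis, use orthonormality to subtract the constant $|\lambda-m_1|^{1/2}$, apply Schur's test, insert the $\chi_{\vec{k}}$-partition of unity, and use a pointwise bound $\bigl|\la x_1-\lambda\ra^{1/2}-|m_1-\lambda|^{1/2}\bigr|\le C(1+|x_1-m_1|)^{1/2}$. The only differences are cosmetic: the paper routes Schur's test through the continuous integral kernel (reusing the reduction from Appendix~\ref{sec:abs-kernel-bd}) rather than the discrete matrix, and---instead of pulling the half-power out as $\la k_1-m_1\ra^{1/2}$ and redoing the $\vec{k}$-sum as you do---it keeps $(|X-m_1|+1)^{1/2}$ inside the norm, upperbounds it by $(|X-m_1|+1)$, and observes that the resulting expression is exactly Equation~\eqref{eq:l1-kern1}, already bounded in the proof of Proposition~\ref{prop:abs-kern}.
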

Let's assume Lemma~\ref{lem:sqrt-diff} is true and prove Lemma~\ref{lem:sqrt-bd2}. We will return to prove Lemma~\ref{lem:sqrt-diff} in the next section (Appendix~\ref{sec:sqrt-diff}).
\begin{proof}[Proof of Lemma~\ref{lem:sqrt-bd2}]
  We will show that
  \[
  \begin{split}
    & \| S_{\lambda} P \la X - \lambda \ra^{1/2} \| \leq C \\
    & \| S_{\lambda}^{-1} P \la X - \lambda \ra^{-1/2} \| \leq C
  \end{split}
  \]
  the other two bounds follow by using the fact for any bounded operator $\| A \| = \| A^\dagger \|$. For the first bound, we calculate 
  \begin{align}
    S_{\lambda} P \la X - \lambda \ra^{1/2}
    & = S_{\lambda} P \Big( \la X - \lambda \ra^{1/2} - S_{\lambda}^{-1} + S_{\lambda}^{-1} \Big) \\[1ex]
    & = S_{\lambda} P \Big( \la X - \lambda \ra^{1/2} - S_{\lambda}^{-1} \Big) + P \\ 
    & = S_{\lambda} P \Big( \la X - \lambda \ra^{1/2} - S_{\lambda}^{-1} \Big)(P+Q) + P \\ 
    & = S_{\lambda} P \Big( \la X - \lambda \ra^{1/2} - S_{\lambda}^{-1} \Big) P + S_{\lambda} P \la X - \lambda \ra^{1/2} Q + P,
  \end{align}
  where we have used that $[P, S_{\lambda}] = 0$, $P + Q = I$, and $P Q = Q P = 0$. Therefore, we have that 
  \begin{align} 
    \| S_{\lambda} & P \la X - \lambda \ra^{1/2} \| \\
                   & \leq \| S_{\lambda} \| \Big( \| P\la X - \lambda \ra^{1/2}P - PS_{\lambda}^{-1}P \| + \| P \la X - \lambda \ra^{1/2} Q \| \Big) + 1 \\
                   & \leq \| S_{\lambda} \| \Big( \| P\la X - \lambda \ra^{1/2}P - PS_{\lambda}^{-1}P \| + \| [P, \la X - \lambda \ra^{1/2}] \| \Big) + 1.
  \end{align}
  Now observe that $\|  [P, \la X - \lambda \ra^{1/2}] \  \|$ is clearly bounded since $P$ admits an exponentially localized kernel (Definition~\ref{def:exp-loc-kern}). Hence the first bound is proved.
  
  For the second bound, using $P^2 = P$ and $[P, S_{\lambda}^{-1}] = 0$,  we calculate
  \begin{align}
      S_{\lambda}^{-1} P \la X - \lambda \ra^{-1/2}
      & = P S_{\lambda}^{-1} P \la X - \lambda \ra^{-1/2} \\
      & = P \Bigl( S_{\lambda}^{-1} - \la X - \lambda \ra^{1/2} + \la X - \lambda \ra^{1/2} \Bigr) P \la X - \lambda \ra^{-1/2} \\
      & = P \Bigl( S_{\lambda}^{-1} - \la X - \lambda \ra^{1/2} \Bigr) P \la X - \lambda \ra^{-1/2} + P \la X - \lambda \ra^{1/2} P \la X - \lambda \ra^{-1/2} \\
      & = P \Bigl( S_{\lambda}^{-1} - \la X - \lambda \ra^{1/2} \Bigr) P \la X - \lambda \ra^{-1/2} + P[\la X - \lambda \ra^{1/2}, P] \la X - \lambda \ra^{-1/2} + P 
  \end{align}
Hence, using the fact that $\| \la X - \lambda \ra^{-1/2} \| \leq 1$, we get the upper bound
\begin{align}
    \| S_{\lambda}^{-1} & P \la X - \lambda \ra^{-1/2} \| \leq \| P \Bigl( S_{\lambda}^{-1} - \la X - \lambda \ra^{1/2} \Bigr) P \| + \| [\la X - \lambda \ra^{1/2}, P] \|  + 1
\end{align}
which is bounded due to Lemma~\ref{lem:sqrt-diff} and since $P$ admits an exponentially localized kernel as before.
\end{proof}
\subsection{Proof of Lemma~\ref{lem:sqrt-diff}}
\label{sec:sqrt-diff}
The proof of this lemma follows very closely with the proof of Proposition \ref{prop:abs-kern} in 
Appendix~\ref{sec:abs-kernel-bd}. Writing out these two expressions in terms of the basis $\{ \psi_{\vec{m}}^{(j)} \}$ we have that: 
\begin{align}
  & P S_{\lambda}^{-1} P = \sum_{\vec{m},j} |m_1 - \lambda|^{1/2} \ket{\psi_{\vec{m}}^{(j)}} \bra{\psi_{\vec{m}}^{(j)}} \\
  & P \la X - \lambda \ra^{1/2} P = \sum_{\vec{m},j} \sum_{\vec{m}',j'} \la \psi_{\vec{m}}^{(j)}, \la X - \lambda \ra^{1/2} \psi_{\vec{m}'}^{(j')} \ra  \ket{\psi_{\vec{m}}^{(j)}} \bra{\psi_{\vec{m}'}^{(j')}}
\end{align}
Since $\{ \psi_{\vec{m}}^{(j)} \}$ is orthonormal, we have that when $(\vec{m},j) \neq (\vec{m}',j')$:
\begin{align}
  \la \psi_{\vec{m}}^{(j)}, \la X - \lambda \ra^{1/2} \psi_{\vec{m}'}^{(j')} \ra
  & = \la \psi_{\vec{m}}^{(j)}, \Big( \la X - \lambda \ra^{1/2} - |m_1 - \lambda|^{1/2} \Big) \psi_{\vec{m}'}^{(j')} \ra \\
  & = \la \psi_{\vec{m}}^{(j)}, \Big( \la X - \lambda \ra^{1/2} - |m_1' - \lambda|^{1/2} \Big) \psi_{\vec{m}'}^{(j')} \ra  \label{eq:sqrt-bd1}
\end{align}
Therefore, we can express the difference $P S_{\lambda}^{-1} P - P \la X - \lambda \ra^{1/2} P$ as follows: 
\begin{align}
  P & S_{\lambda}^{-1} P - P \la X - \lambda \ra^{1/2} P \\
    & = -\sum_{\vec{m},j} \sum_{\vec{m}',j'} \la \psi_{\vec{m}}^{(j)}, \Big(\la X - \lambda \ra^{1/2} - |m_1' - \lambda|^{1/2}\Big) \psi_{\vec{m}'}^{(j')} \ra  \ket{\psi_{\vec{m}}^{(j)}} \bra{\psi_{\vec{m}'}^{(j')}}
\end{align}
Following the argument from Appendix~\ref{sec:abs-kernel-bd}, by Schur's test, if $\{ \psi_{\vec{m}}^{(j)} \}$ is $s$-localized with $s > 2$ to prove Lemma \ref{lem:sqrt-diff} it suffices to show that the following quantity is bounded:
\[
\sup_{\vec{m},j,j'} \sum_{\vec{m}'} |\la \psi_{\vec{m}}^{(j)}, \Big(\la X - \lambda \ra^{1/2} - |m_1' - \lambda|^{1/2}\Big) \psi_{\vec{m}'}^{(j')} \ra|.
\]
Following the argument in Appendix~\ref{sec:abs-kernel-bd}, we fix a choice of $(\vec{m}, j, j')$ and aim to prove a bound independent of this choice. Inserting the partition of unity $\sum_{\vec{k}} \chi_{\vec{k}} = 1$ (see Equation \eqref{eq:chi-def} for the definition of $\chi_{\vec{k}}$):
\begin{align}
  \sum_{\vec{m}'} &\sum_{\vec{k}} |\la \chi_{\vec{k}} \psi_{\vec{m}}^{(j)}, \Big(\la X - \lambda \ra^{1/2} - |m_1 - \lambda|^{1/2}\Big) \psi_{\vec{m}'}^{(j')} \ra| \\
                  & \leq \sum_{\vec{m}'} \sum_{\vec{k}} \| \chi_{\vec{k}} \Big(\la X - \lambda \ra^{1/2} - |m_1 - \lambda|^{1/2}\Big) \psi_{\vec{m}}^{(j)} \| \| \chi_{\vec{k}} \psi_{\vec{m}'}^{(j')} \| \\
                  & \leq \sum_{\vec{m}'} \sum_{\vec{k}} \| \chi_{\vec{k}} (|X - m_1| + 1)^{1/2} \psi_{\vec{m}}^{(j)} \| \| \chi_{\vec{k}} \psi_{\vec{m}'}^{(j')} \|.
\end{align}
To get the last line we have used the pointwise inequality which can be easily verified for any $\lambda \in \R$:
\[
|\la x - \lambda \ra^{1/2} - |m_1 - \lambda|^{1/2}|^2 \leq |x - m_1| + 1
\]
%following sequence of pointwise inequalities:
%\begin{align}
%  |\la x &- \lambda \ra^{1/2} - |m_1 - \lambda|^{1/2}|^2 \\
%        & \leq |\la x - \lambda \ra - |m_1 - \lambda|| && \text{(elementary)} \\
%        & = | \sqrt{(x - \lambda)^2 + 1^2} - \sqrt{(m_1 - \lambda)^2 + 0^2}| \\
%        & \leq \sqrt{(x - m_1)^2 + 1} && \text{(reverse triangle inequality)} \\
%        & \leq |x - m_1| + 1. && \text{(elementary)}
%\end{align}
Therefore, the quantity we want to bound is
\begin{align}
  \sum_{\vec{m}'}& \sum_{\vec{k}} \| \chi_{\vec{k}} (|X - m_1| + 1)^{1/2} \psi_{\vec{m}}^{(j)} \| \| \chi_{\vec{k}} \psi_{\vec{m}'}^{(j')} \| \label{eq:sqrt-bd2}
\end{align}
In the proof of Proposition~\ref{prop:abs-kern}, we showed that the following expression is bounded when $P$ admits a basis which is $s$-localized for $s > 2$ (see Equation~\eqref{eq:l1-kern1}):
\[
  \sup_{j'} \sum_{\vec{m}'} \sum_{\vec{k}} \| \chi_{\vec{k}} (|X - m_1| + 1) \psi_{\vec{m}}^{(j)} \| \| \chi_{\vec{k}} \psi_{\vec{m}'}^{(j')} \|.
\]
Since $(|x - m_1| + 1)^{1/2} \leq (|x - m_1| + 1)$ this calculation implies that Equation~\eqref{eq:sqrt-bd2} is bounded, completing the proof of Lemma~\ref{lem:sqrt-bd}.

\section{Proof of  Proposition~\ref{prop:gaps-comm-bd}}
\label{sec:comm-bd-proof} 
Let us start this section by recalling the proposition we want to prove:

\newtheorem*{thm:gaps-comm-bd}{Proposition~\ref{prop:gaps-comm-bd}}
\begin{thm:gaps-comm-bd}
  Suppose that $P$ is an orthogonal projector which admits an exponentially localized kernel (Definition~\ref{def:exp-loc-kern}). Suppose further that $P$ admits a basis which is $s$-localized for some $s > 5/2$. Then for any $\lambda \in G$ there exists a finite constant $C > 0$ such that
  \begin{equation}
    \label{eq:gaps-comm-bd}
    \begin{split}
      \| \la X - \lambda \ra^{-1/2} [X, \tilde{X}] \la X - \lambda \ra^{-1/2} \| \leq C, \\
      \| \la X - \lambda \ra^{-1/2} [Y, \tilde{X}] \la X - \lambda \ra^{-1/2} \| \leq C.
    \end{split}
  \end{equation}
\end{thm:gaps-comm-bd}
The core idea of this proof is to rewrite the commutators we are interested in bounding into different parts we can control.

Let's begin by considering the commutator $[Y, \tilde{X}]$. Using that $Y = PYP + PYQ + QYP + QYQ$ and $\tilde{X} = P\tilde{X}P + QXQ$ we have:
\begin{align}
  [Y, \tilde{X}]
  & = [PYP + PYQ + QYP + QYQ, P\tilde{X}P + QXQ] \\
  & = [PYP + PYQ + QYP, P\tilde{X}P] + [PYQ + QYP + QYQ, QXQ] 
\end{align}
where we have used that $PQ = QP = 0$. Grouping the terms with $PYQ + QYP$ together then gives us:
\begin{align}
  [Y, \tilde{X}] = [PYP, P\tilde{X}P] + [QYQ, QXQ] + [PYQ + QYP, \tilde{X}]
\end{align}
Performing similar calculations for $[X, \tilde{X}]$ gives us
\begin{align}
  [X, \tilde{X}]
  & = [PXP, P\tilde{X}P] + [QXQ, QXQ] + [ PXQ + QXP, \tilde{X} ] \\
  & = [PXP, P\tilde{X}P] + [ PXQ + QXP, \tilde{X} ]
\end{align}
Therefore, we have three types of terms to bound:
\begin{enumerate}[itemsep=1ex]
\item $[QYQ, QXQ]$ (see Appendix~\ref{sec:usg-bd1})
\item $[ PXQ + QXP, \tilde{X} ]$ and $[ PYQ + QYP, \tilde{X} ]$ (see Appendix~\ref{sec:usg-bd2})
\item $[PXP, P\tilde{X}P]$ and $[PYP, P\tilde{X}P]$ (see Appendix~\ref{sec:usg-bd3})
\end{enumerate}
While $[QYQ,QXQ]$ can be bounded without using the decay terms, $\la X - \lambda \ra^{-1/2}$ (see Appendix~\ref{sec:usg-bd1}), bounding the other terms requires making use of this additional decay (see Appendix~\ref{sec:usg-bd2} and Appendix~\ref{sec:usg-bd3}).

\subsection{Bounding $[QXQ,QYQ]$ term}
\label{sec:usg-bd1}
Using the fact that $Q = I - P$ and $[X,Y] = 0$ we easily calculate that
\begin{align}
  [QXQ,QYQ]
  & = QXQYQ - QYQXQ \\
  & = QX(I-P)YQ - QY(I-P)XQ \\
  & = QXYQ - QXPYQ - QYXQ + QYPXQ \\
  & = QYPXQ - QXPYQ 
\end{align}
Therefore,
\begin{align}
  \| [QXQ,QYQ] \|
  & = \| QYPXQ - QXPYQ \| \\
  & \leq \| QYP \| \| PXQ \| + \| QXP \| \| PYQ \| 
\end{align}
but this is bounded since 
\[
\begin{split}
& \| PXQ \| = \| QXP \| = \| Q [X, P] \| \leq \| [X,P] \| \\
& \| PYQ \| = \| QYP \| = \| Q [Y, P] \| \leq \| [Y,P] \| 
\end{split}
\]
and $\| [X,P] \|$ and $\| [Y,P] \|$ are clearly bounded since $P$ admits an exponentially localized kernel.

\subsection{Bounding $[ PXQ + QXP, \tilde{X} ]$ and $[ PYQ + QYP, \tilde{X} ]$ terms}
\label{sec:usg-bd2}
In this section, we will show how to bound
\[
  \la X - \lambda \ra^{-1/2} [ PYQ, \tilde{X} ] \la X - \lambda \ra^{-1/2}.
\]
Bounding $\la X - \lambda \ra^{-1/2} [ QYP, \tilde{X} ] \la X - \lambda \ra^{-1/2}$ and the other two terms follows using similar steps.

We calculate
\begin{align}
[ PYQ, \tilde{X} ]
& = - [ P[Y,P], \tilde{X} ] \\
& = - [ P [Y,P], \tilde{X} - X ] -  [ P[Y,P], X ] 
\end{align}
Hence
\begin{align}
\| & \la X - \lambda \ra^{-1/2} [ PYQ, \tilde{X} ] \la X - \lambda \ra^{-1/2} \| \\
& \leq \| [ P [Y,P], \tilde{X} - X ] \| + \| \la X - \lambda \ra^{-1/2} [ P [Y,P], X ] \la X - \lambda \ra^{-1/2} \| \\
& \leq 2 \| [Y,P] \| \| \tilde{X} - X \| + \| \la X - \lambda \ra^{-1/2} [ P [Y,P], X ] \la X - \lambda \ra^{-1/2} \|
\end{align}
The first term is bounded by a constant since $\| X - \tilde{X} \|$ is bounded (Proposition~\ref{prop:xhat-x-close}) and $P$ admits an exponentially localized kernel. Therefore, to complete the proof we only need to bound the second term.

Expanding the commutator in the second term gives
\begin{align}
\la X - \lambda \ra^{-1/2} & [ P [Y,P], X ] \la X - \lambda \ra^{-1/2} \\
& = \la X - \lambda \ra^{-1/2} \Big( P [Y, P] (X - \lambda) - (X - \lambda) P [Y, P] \Big) \la X - \lambda \ra^{-1/2}.
\end{align}
We now will show that the first term is bounded by a constant, the second term follows by a similar calculation. By inserting copies of $\la X - \lambda \ra^{1/2} \la X - \lambda \ra^{-1/2}$ we get that
\begin{align}
\la X - \lambda \ra^{-1/2} & P [Y, P] (X - \lambda) \la X - \lambda \ra^{-1/2} \\
= & \Big( \la X - \lambda \ra^{-1/2} P \la X - \lambda \ra^{1/2} \Big) \\
& \times \Big(\la X - \lambda \ra^{-1/2} [Y, P] \la X - \lambda \ra^{1/2} \Big) \\
& \times\Big( \la X - \lambda \ra^{-1/2}  (X - \lambda) \la X - \lambda \ra^{-1/2} \Big)
\end{align}
Since $P$ admits an exponentially localized kernel, it is clear that there exist constants $C, C'$ so that
\[
\begin{split}
& \| \la X - \lambda \ra^{-1/2} P \la X - \lambda \ra^{1/2}  \| \leq C \\
& \| \la X - \lambda \ra^{-1/2} [Y, P] \la X - \lambda \ra^{1/2} \| \leq C'
\end{split}
\]
Since trivially
\[
\| \la X - \lambda \ra^{-1/2}  (X - \lambda) \la X - \lambda \ra^{-1/2} \| \leq 1
\]
we conclude that $\| \la X - \lambda \ra^{-1/2} [ PYQ, \tilde{X} ] \la X - \lambda \ra^{-1/2} \|$ is bounded by a constant completing the proof.

\subsection{Bounding $[PXP, P\tilde{X}P]$ and $[PYP, P\tilde{X}P]$ terms}
\label{sec:usg-bd3}
In this section we will show how to bound the following quantities
\begin{align}
  & \la X - \lambda \ra^{-1/2} [PXP, P\tilde{X}P] \la X - \lambda \ra^{-1/2} \\
  & \la X - \lambda \ra^{-1/2} [PYP, P\tilde{X}P] \la X - \lambda \ra^{-1/2}.
\end{align}
We'll start by writing the commutators $[PXP, P\tilde{X}P]$ and $[PYP, P\tilde{X}P]$ in terms of the basis $\{ \psi_{\vec{m}}^{(j)} \}$.
\begin{align}
  [PXP, & P\tilde{X}P]
  = \left[ \sum_{\vec{m},j} \sum_{\vec{m}',j'} \la \psi_{\vec{m}}^{(j)}, X \psi_{\vec{m}'}^{(j')} \ra \ket{\psi_{\vec{m}}^{(j)}}\bra{\psi_{\vec{m}'}^{(j')}}, P\tilde{X}P \right] \\[.5ex]
        & = \sum_{\vec{m},j} \sum_{\vec{m}',j'} \la \psi_{\vec{m}}^{(j)}, X \psi_{\vec{m}'}^{(j')} \ra \left( P\tilde{X}P \ket{\psi_{\vec{m}}^{(j)}}\bra{\psi_{\vec{m}'}^{(j')}} -  \ket{\psi_{\vec{m}}^{(j)}}\bra{\psi_{\vec{m}'}^{(j')}} P\tilde{X}P \right) \\
  & = \sum_{\vec{m},j} \sum_{\vec{m}',j'}  (m_1 - m_1') \la \psi_{\vec{m}}^{(j)}, X \psi_{\vec{m}'}^{(j')} \ra \ket{\psi_{\vec{m}}^{(j)}}\bra{\psi_{\vec{m}'}^{(j')}}  \label{eq:pxp-pxtp-comm}
\end{align}
A similar calculation shows that
\begin{equation}
\label{eq:pyp-pxtp-comm}
  [PYP, P\tilde{X}P] = \sum_{\vec{m},j} \sum_{\vec{m}',j'}  (m_1 - m_1') \la \psi_{\vec{m}}^{(j)}, Y \psi_{\vec{m}'}^{(j')} \ra \ket{\psi_{\vec{m}}^{(j)}}\bra{\psi_{\vec{m}'}^{(j')}}.
\end{equation}
Since $S_{\lambda}$ is diagonal in the basis $\{ \psi_{\vec{m}}^{(j)} \}$ it will significantly simplify our arguments if we replace the decay provided by $\la X - \lambda\ra^{-1/2}$ with $S_{\lambda}$. Using Lemma~\ref{lem:sqrt-bd2} we have that
\begin{align}
  \| \la & X - \lambda \ra^{-1/2} [PXP, P\tilde{X}P] \la X - \lambda \ra^{-1/2} \| \\
         & \leq \| \la X - \lambda \ra^{-1/2} P [PXP, P\tilde{X}P] P \la X - \lambda \ra^{-1/2} \| \\
         & \leq \| \la X - \lambda \ra^{-1/2} P S_{\lambda}^{-1} \| \| S_{\lambda} [PXP, P\tilde{X}P] S_{\lambda} \| \| S_{\lambda}^{-1} P \la X - \lambda \ra^{-1/2} \| \\
         & \leq C^2 \| S_{\lambda} [PXP, P\tilde{X}P] S_{\lambda} \|
\end{align}
Similar calculations for $Y$ show that
\[
  \lVert \la X - \lambda \ra^{-1/2} [PYP, P\tilde{X}P] \la X - \lambda \ra^{-1/2} \rVert \leq C^2 \| S_{\lambda} [PYP, P\tilde{X}P] S_{\lambda} \|
\]
Hence it suffices to show that $S_{\lambda} [PXP, P\tilde{X}P] S_{\lambda}$ and $S_{\lambda} [PYP, P\tilde{X}P] S_{\lambda}$ are both bounded. Using the expressions for $[PXP, P\tilde{X}P]$ and $[PYP, P\tilde{X}P]$ from Equations~\eqref{eq:pxp-pxtp-comm} and~\eqref{eq:pyp-pxtp-comm}, we get 
\begin{align}
  S_{\lambda} & [PXP, P\tilde{X}P] S_{\lambda} \\
  & = \sum_{\vec{m},j} \sum_{\vec{m}',j'}  \frac{(m_1 - m_1')}{|\lambda - m_1|^{1/2}|\lambda - m_1'|^{1/2}} \la \psi_{\vec{m}}^{(j)}, X \psi_{\vec{m}'}^{(j')} \ra \ket{\psi_{\vec{m}}^{(j)}}\bra{\psi_{\vec{m}'}^{(j')}} \\[2ex]
  S_{\lambda} & [PYP, P\tilde{X}P] S_{\lambda} \\
  & = \sum_{\vec{m},j} \sum_{\vec{m}',j'}  \frac{(m_1 - m_1')}{|\lambda - m_1|^{1/2}|\lambda - m_1'|^{1/2}} \la \psi_{\vec{m}}^{(j)}, Y \psi_{\vec{m}'}^{(j')} \ra \ket{\psi_{\vec{m}}^{(j)}}\bra{\psi_{\vec{m}'}^{(j')}}
\end{align}
Therefore, to finish the proof of Proposition~\ref{prop:gaps-comm-bd}, we will prove the following proposition:
\begin{proposition}
  \label{prop:comm-est}
  If $\{ \psi_{\vec{m}}^{(j)} \}$ is an $s$-localized basis with $s > 5/2$, then there exists an absolute constant $C$ such that for all $\lambda \in G$ (recall that $G$ is the set of gaps defined in Equation~\eqref{eq:g-def}) we have
  \begin{align}
    & \bigl\| \sum_{\vec{m},j} \sum_{\vec{m}',j'}  \frac{(m_1 - m_1')}{|\lambda - m_1|^{1/2}|\lambda - m_1'|^{1/2}} \la \psi_{\vec{m}}^{(j)}, X \psi_{\vec{m}'}^{(j')} \ra \ket{\psi_{\vec{m}}^{(j)}}\bra{\psi_{\vec{m}'}^{(j')}} \bigr\| \leq C \label{eq:comm-bd-sum1} \\
    & \bigl\| \sum_{\vec{m},j} \sum_{\vec{m}',j'}  \frac{(m_1 - m_1')}{|\lambda - m_1|^{1/2}|\lambda - m_1'|^{1/2}} \la \psi_{\vec{m}}^{(j)}, Y \psi_{\vec{m}'}^{(j')} \ra \ket{\psi_{\vec{m}}^{(j)}}\bra{\psi_{\vec{m}'}^{(j')}} \bigr\| \leq C. \label{eq:comm-bd-sum2}
  \end{align}
\end{proposition}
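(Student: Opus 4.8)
The plan is to mimic the proof of Proposition~\ref{prop:abs-kern} from Appendix~\ref{sec:abs-kernel-bd}: first reduce the two operator-norm bounds to a Schur-type row-sum estimate on the coefficient matrices, then extract the half-power of decay produced by the prefactor $(m_1-m_1')|\lambda-m_1|^{-1/2}|\lambda-m_1'|^{-1/2}$ and run the same partition-of-unity computation, at the cost of strengthening the hypothesis from $s>2$ to $s>5/2$.

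\emph{Reduction to a Schur estimate.} For any operator $A=\sum_{\vec{m},j}\sum_{\vec{m}',j'}C_{(\vec{m},j),(\vec{m}',j')}\ket{\psi_{\vec{m}}^{(j)}}\bra{\psi_{\vec{m}'}^{(j')}}$ with $\{\psi_{\vec{m}}^{(j)}\}$ orthonormal, Bessel's inequality together with Cauchy--Schwarz applied with the weight $|C_{(\vec{m},j),(\vec{m}',j')}|^{1/2}$ gives
\[
  \|A\|\le\Big(\sup_{\vec{m},j}\sum_{\vec{m}',j'}|C_{(\vec{m},j),(\vec{m}',j')}|\Big)^{1/2}\Big(\sup_{\vec{m}',j'}\sum_{\vec{m},j}|C_{(\vec{m},j),(\vec{m}',j')}|\Big)^{1/2},
\]
the discrete analogue of Schur's test (Theorem~\ref{thm:schur-tao}). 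For both matrices in~\eqref{eq:comm-bd-sum1} and~\eqref{eq:comm-bd-sum2} one has $|C_{(\vec{m},j),(\vec{m}',j')}|=|C_{(\vec{m}',j'),(\vec{m},j)}|$, since $\overline{\la\psi_{\vec{m}}^{(j)},Z\psi_{\vec{m}'}^{(j')}\ra}=\la\psi_{\vec{m}'}^{(j')},Z\psi_{\vec{m}}^{(j)}\ra$ for $Z\in\{X,Y\}$; hence the two suprema agree and it suffices to bound a single row sum, uniformly in $\lambda\in G$.

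\emph{Simplifying the row sum.} The diagonal term $(\vec{m},j)=(\vec{m}',j')$ carries the factor $m_1-m_1'=0$ and drops out, and for every surviving term orthonormality lets us replace $\la\psi_{\vec{m}}^{(j)},Z\psi_{\vec{m}'}^{(j')}\ra$ by $\la\psi_{\vec{m}}^{(j)},(Z-c_Z)\psi_{\vec{m}'}^{(j')}\ra$, where $c_X=m_1$ and $c_Y=m_2$. Since $\lambda\in G$ forces $|\lambda-m_1|\ge\frac14$ and $|\lambda-m_1'|\ge\frac14$, the elementary inequality $|m_1-m_1'|^{1/2}\le|m_1-\lambda|^{1/2}+|m_1'-\lambda|^{1/2}$ yields
\[
  \frac{|m_1-m_1'|}{|\lambda-m_1|^{1/2}|\lambda-m_1'|^{1/2}}\le|m_1-m_1'|^{1/2}\Big(\frac{1}{|\lambda-m_1'|^{1/2}}+\frac{1}{|\lambda-m_1|^{1/2}}\Big)\le 4|m_1-m_1'|^{1/2},
\]
which is the only place the definition of $G$ is used and which makes every subsequent bound automatically uniform in $\lambda$. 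Thus it suffices to prove, for $s>5/2$,
\[
  \sup_{\vec{m},j,j'}\ \sum_{\vec{m}'}|m_1-m_1'|^{1/2}\,\big|\la\psi_{\vec{m}}^{(j)},(Z-c_Z)\psi_{\vec{m}'}^{(j')}\ra\big|<\infty,
\]
the sum over $j'$ only costing a harmless factor $M$ exactly as in Appendix~\ref{sec:abs-kernel-bd}.

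\emph{The Appendix~\ref{sec:abs-kernel-bd} computation with an extra half-power.} To bound the last display I would repeat the argument of Appendix~\ref{sec:abs-kernel-bd} verbatim, carrying the factor $|m_1-m_1'|^{1/2}$ along. Inserting the partition of unity $\sum_{\vec{k}}\chi_{\vec{k}}=1$ gives
\[
  \big|\la\psi_{\vec{m}}^{(j)},(Z-c_Z)\psi_{\vec{m}'}^{(j')}\ra\big|\le\sum_{\vec{k}}\|\chi_{\vec{k}}(|Z-c_Z|+1)\psi_{\vec{m}}^{(j)}\|\,\|\chi_{\vec{k}}\psi_{\vec{m}'}^{(j')}\|,
\]
and then I would split $|m_1-m_1'|^{1/2}\le|m_1-k_1|^{1/2}+|k_1-m_1'|^{1/2}$ and absorb each half-power into the corresponding Wannier function, using that on the support of $\chi_{\vec{k}}$ one has $|m_1-k_1|\le|x_1-m_1|+\frac12$ and $|k_1-m_1'|\le|x_1-m_1'|+\frac12$. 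Applying Lemma~\ref{lem:kl-mn-bd} to trade the $L^2$-localization of each factor for algebraic decay in $\vec{k}$, summing over $\vec{m}'$ by the same $\ell^1$-splitting used to pass from~\eqref{eq:l1-kern1} to~\eqref{eq:l1-kern5}, and summing over $\vec{k}$ via Cauchy--Schwarz together with $\sum_{\vec{k}}\|\chi_{\vec{k}}w\|^2=\|w\|^2$ and Lemma~\ref{lem:prod-to-sum-bd}, reduces the whole sum to boundedness of norms of the type $\|(|X-m_1|+1)^{a}(|Y-m_2|+1)^{b}\psi_{\vec{m}}^{(j)}\|$ with $a+b\le\frac52+2\epsilon$. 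The main obstacle, and the only place the precise exponent matters, is exactly this bookkeeping: the half-power $|m_1-m_1'|^{1/2}$ raises the worst total weight on a Wannier function from the value $2+2\epsilon$ appearing in Appendix~\ref{sec:abs-kernel-bd} to $\frac52+2\epsilon$, forcing the upgrade to $s>5/2$; choosing $\epsilon$ with $\frac52+2\epsilon\le s$ closes the estimate. Since the prefactor $|m_1-m_1'|^{1/2}$ involves only the first coordinate while the correction $(Z-c_Z)$ costs one power only in the $Z$-coordinate, the same computation handles both~\eqref{eq:comm-bd-sum1} and~\eqref{eq:comm-bd-sum2}, which completes the proof.
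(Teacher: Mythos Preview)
Your approach is correct and genuinely different from the paper's. The paper applies the \emph{weighted} Schur test with weights $p(\vec{m},j)=|\lambda-m_1|^{-1/2}$ and $q(\vec{m}',j')=|\lambda-m_1'|^{-1/2}$, which reduces the problem to bounding
\[
  \sum_{\vec{m}',j'}\frac{|m_1-m_1'|}{|\lambda-m_1'|}\,\bigl|\la\psi_{\vec{m}}^{(j)},X\psi_{\vec{m}'}^{(j')}\ra\bigr|
\]
uniformly in $(\vec{m},j)$. After inserting the partition of unity and splitting $|m_1-m_1'|\le|m_1-k_1|+|m_1'-k_1|$, the paper still carries the factor $|\lambda-m_1'|^{-1}$ and must sum it against the $\la m_1'-k_1\ra^{-1/2-\epsilon}$ decay coming from Lemma~\ref{lem:kl-mn-bd}; this is handled by Young's convolution inequality with carefully tuned exponents $p=2$, $q=1+\epsilon/2$, $r=\tfrac{2(2+\epsilon)}{2+3\epsilon}$.

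You instead discard the $\lambda$-dependent decay at the outset via the elementary bound $\tfrac{|m_1-m_1'|}{|\lambda-m_1|^{1/2}|\lambda-m_1'|^{1/2}}\le 4|m_1-m_1'|^{1/2}$, then absorb this half-power directly into one of the Wannier functions on the support of $\chi_{\vec{k}}$. This lets you rerun the Appendix~\ref{sec:abs-kernel-bd} argument essentially unchanged, avoiding Young's inequality entirely. Both routes land on the same threshold $s>5/2$; yours is more elementary and makes the uniformity in $\lambda$ transparent, while the paper's retains more of the available decay but needs the extra convolution machinery to cash it in.
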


\subsection{Proof of Proposition~\ref{prop:comm-est}}
\label{sec:comm-est}
In this section, we will only prove Equation~\eqref{eq:comm-bd-sum1}, Equation~\eqref{eq:comm-bd-sum2} follows by analogous steps.

To bound Equation~\eqref{eq:comm-bd-sum1}, we will use Schur's test for discrete kernels. Recall that Schur's test tells us that if $T$ is a linear operator defined by the discrete kernel $K(\vec{m},j,\vec{m}',j')$:
\[
  T f(\vec{m},j) = \sum_{\vec{m}',j'} K(\vec{m},j,\vec{m}',j') f(\vec{m}',j')
\]
and if for some real, positive functions $p, q$ we have
\begin{align}
  & \sum_{\vec{m}',j'} |K(\vec{m},j,\vec{m}',j')| q(\vec{m}',j') \leq \alpha p(\vec{m},j) \quad \text{and} \\[1ex]
  & \sum_{\vec{m}} p(\vec{m},j) |K(\vec{m},j,\vec{m}',j')| \leq \beta q(\vec{m}',j').
\end{align}
then $\| T \| \leq \sqrt{\alpha \beta}$.
Equation~\eqref{eq:comm-bd-sum1} can be viewed as the operator norm of an operator defined by the following discrete kernel:
\[
K(\vec{m},j,\vec{m}',j') := \frac{(m_1 - m_1')}{|\lambda - m_1|^{1/2}|\lambda - m_1'|^{1/2}} \la \psi_{\vec{m}}^{(j)}, X \psi_{\vec{m}'}^{(j')} \ra.
\]
Choosing $p(\vec{m},j) = |\lambda - m|^{-1/2}$ and  $q(\vec{m}',j') = |\lambda - m_1'|^{-1/2}$ and applying Schur's test we see that it is enough to find $\alpha, \beta$ such that
\begin{align}
  \sum_{\vec{m}',j'} \frac{|m_1 - m_1'|}{|\lambda - m_1'| |\lambda - m_1|^{1/2}} |\la \psi_{\vec{m}}^{(j)}, X  \psi_{\vec{m}'}^{(j')} \ra| \leq \frac{\alpha}{|\lambda - m_1|^{1/2}} \\
  \sum_{\vec{m},j} \frac{|m_1 - m_1'|}{|\lambda - m_1'|^{1/2} |\lambda - m_1|} |\la \psi_{\vec{m}}^{(j)}, X  \psi_{\vec{m}'}^{(j')} \ra| \leq \frac{\beta}{|\lambda - m_1'|^{1/2}} 
\end{align}
Multiplying both sides of the first inequality by $|\lambda - m|^{1/2}$ gives that we need to show that:
\begin{equation}
  \label{eq:comm-bd1}
  \sum_{\vec{m}',j'} \frac{|m_1 - m'|}{|\lambda - m_1'|} |\la \psi_{\vec{m}}^{(j)}, X  \psi_{\vec{m}'}^{(j')} \ra| \leq \alpha
\end{equation}
Similarly, multiplying both sides of the second inequality by $|\lambda - m_1'|^{1/2}$ gives that we need to show that:
\begin{equation}
  \label{eq:comm-bd2}
  \sum_{\vec{m},j} \frac{|m_1 - m_1'|}{|\lambda - m_1|} |\la \psi_{\vec{m}}^{(j)}, X  \psi_{\vec{m}'}^{(j') }\ra| \leq \beta
\end{equation}
Since $X$ is self-adjoint, proving the bound in Equation~\eqref{eq:comm-bd1} immediately implies Equation~\eqref{eq:comm-bd2} with $\alpha = \beta$ by performing the change of index $(\vec{m},j) \leftrightarrow (\vec{m}',j')$. Hence, we will focus on Equation~\eqref{eq:comm-bd1} for the remainder of this section.

Similar to the proof of Proposition~\ref{prop:abs-kern} in Appendix~\ref{sec:abs-kernel-bd}, our main technique for proving Equation~\eqref{eq:comm-bd1} will be inserting a partition unity of the form:
\[
  \sum_{\vec{k}} \chi_{\vec{k}}(\vec{x}) = 1
\]
where
\[
  \chi_{\vec{k}}(\vec{x}) =
  \begin{cases}
    1 & \vec{x} \in \left[k - \frac{1}{2}, k + \frac{1}{2}\right) \times \left[\ell - \frac{1}{2}, \ell + \frac{1}{2}\right) \\
    0 & \text{otherwise}.
  \end{cases}
\]

Since $\{ \psi_{\vec{m}}^{(j)} \}$ is an orthonormal basis $\la \psi_{\vec{m}}^{(j)}, \psi_{\vec{m}'}^{(j')} \ra = 0$ whenever $(\vec{m},j) \neq (\vec{m}',j')$. Therefore, we easily see that
\begin{align}
\sum_{\vec{m}',j'} & \frac{|m_1 - m_1'|}{|\lambda - m_1'|} |\la \psi_{\vec{m}}^{(j)}, X  \psi_{\vec{m}'}^{(j')} \ra|
               = \sum_{\vec{m}',j'} \frac{|m_1 - m_1'|}{|\lambda - m_1'|} |\la \psi_{\vec{m}}^{(j)}, (X - m_1)  \psi_{\vec{m}'}^{(j')} \ra|.
\end{align}
Now we can insert our partition of unity to get
\begin{align}
  \sum_{\vec{m}',j'} & \frac{|m_1 - m_1'|}{|\lambda - m_1'|} |\la \psi_{\vec{m}}^{(j)}, (X - m_1)  \psi_{\vec{m}'}^{(j')} \ra| \\
                  & \leq \sum_{\vec{m}',j'} \sum_{\vec{k}} \frac{|m_1 - m_1'|}{|\lambda - m_1'|} |\la \chi_{\vec{k}} \psi_{\vec{m}}^{(j)}, (X - m_1)  \psi_{\vec{m}'}^{(j')} \ra| \\
                  & \leq \sum_{\vec{m}',j'} \sum_{\vec{k}} \frac{|m_1 - m_1'|}{|\lambda - m_1'|} \| \chi_{\vec{k}} (X - m_1)  \psi_{\vec{m}}^{(j)} \| \| \chi_{\vec{k}}  \psi_{\vec{m}'}^{(j')} \| \\
                  & \leq M \left( \sup_{j'} \sum_{\vec{m}'} \sum_{\vec{k}} \frac{|m_1 - m_1'|}{|\lambda - m_1'|} \| \chi_{\vec{k}} (X - m_1)  \psi_{\vec{m}}^{(j)} \| \| \chi_{\vec{k}}  \psi_{\vec{m}'}^{(j')} \| \right)  \label{eq:comm-bd3}
\end{align}
Using the fact that $| m_1 - m_1' | \leq | m_1 - k_1 | + | m'_1 - k_1 |$, we can now upper bound Equation~\eqref{eq:comm-bd3} by the sum of the following two terms:
\begin{align}
  \sum_{\vec{m}'} \sum_{\vec{k}} \frac{|m_1 - k_1|}{|\lambda - m_1'|} \| \chi_{\vec{k}} (X - m_1) \psi_{\vec{m}}^{(j)} \| \| \chi_{\vec{k}} \psi_{\vec{m}'}^{(j')} \| \label{eq:comm-bd4} \\
  \sum_{\vec{m}'} \sum_{\vec{k}} \frac{|m_1' - k_1|}{|\lambda - m_1'|} \| \chi_{\vec{k}} (X - m_1) \psi_{\vec{m}}^{(j)} \| \| \chi_{\vec{k}}  \psi_{\vec{m}'}^{(j')} \| \label{eq:comm-bd5}.
\end{align}
Equations~\eqref{eq:comm-bd4} and~\eqref{eq:comm-bd5} can be bounded by essentially the same calculation. Since these two equations are asymmetric in $m_1$ and $m_1'$ it turns out that bounding Equation~\eqref{eq:comm-bd4} requires we assume the basis is $s$-localized with $s > 5/2$ whereas Equation~\eqref{eq:comm-bd5} only requires $s > 2$. Since these sums are bounded using the same techniques, we only show how to bound Equation~\eqref{eq:comm-bd4}.

For this proof, we will fix a choice of $(\vec{m},j)$ and prove a bound which is uniform in $(\vec{m},j)$. Using Lemma~\ref{lem:kl-mn-bd} with $s_1 = 1/2 + \epsilon$ and $s_2 = 1 + \epsilon$ we have that for any $\epsilon > 0$
\begin{align}
  \| \chi_{\vec{k}}  \psi_{\vec{m}'}^{(j')} \| 
                   & \leq \frac{C_1 \| \chi_{\vec{k}} (|X - m_1'| + 1)^{1/2+\epsilon} (|Y - m_2'| + 1)^{1+\epsilon} \psi_{\vec{m}'}^{(j')} \|}{\la m_1' - k_1 \ra^{1/2+\epsilon} \la m_2' - k_2 \ra^{1+\epsilon}} \\
                   & \leq \frac{C_2}{\la m_1' - k_1 \ra^{1/2+\epsilon} \la m_2' - k_2 \ra^{1+\epsilon}} 
\end{align}
where in the second line we have used Lemma~\ref{lem:prod-to-sum-bd} along with the assumption that the basis $\{ \psi_{\vec{m}}^{(j)} \}$ is $s$-localized with $s > 5/2$.

Therefore, we can upper bound Equation~\eqref{eq:comm-bd4} with
\begin{align}
  \sum_{\vec{m}'} \sum_{\vec{k}} \frac{|m_1 - k_1|}{|\lambda - m_1'|} &\| \chi_{\vec{k}} (X - m_1) \psi_{\vec{m}}^{(j)} \| \| \chi_{\vec{k}} \psi_{\vec{m}'}^{(j')} \| \\
  & \leq  C_2 \sum_{\vec{m}'} \sum_{\vec{k}} \frac{|m_1 - k_1|}{|\lambda - m_1'|} \frac{\| \chi_{\vec{k}} (X - m_1) \psi_{\vec{m}}^{(j)} \|}{\la m' - k \ra^{1/2+\epsilon} \la m'_2 - k_2 \ra^{1+\epsilon}}
\end{align}
Applying Lemma~\ref{lem:kl-mn-bd} with $s_1 = 1$ and $s_2 = 1/2+\epsilon$ we also have that 
\begin{align} 
  \| \chi_{\vec{k}} & (X - m_1) \psi_{\vec{m}}^{(j)} \| \\
                   & \leq \frac{C_3 \| \chi_{\vec{k}} (|X - m_1| + 1)^{2} (|Y - m_2| + 1)^{1/2+\epsilon} \psi_{\vec{m}}^{(j)} \|}{\la m_1 - k_1 \ra \la m_2 - k_2 \ra^{1/2+\epsilon}}  \label{eq:comm-bd4-2}
\end{align}
To reduce clutter, in the next few steps let us define:
\begin{equation}
  \label{eq:comm-bd4-adef}
  A_{\vec{k}} := \| \chi_{\vec{k}} (|X - m_1| + 1)^{2} (|Y - m_2| + 1)^{1/2+\epsilon} \psi_{\vec{m}}^{(j)} \|.
\end{equation}
Note that we have excluded the dependence on $(\vec{m},j)$ in our notation since we have fixed a choice $(\vec{m},j)$ for this proof. With this definition and the bound from Equation~\eqref{eq:comm-bd4-2} we have that
\begin{align}
  \sum_{\vec{m}'} & \sum_{\vec{k}} \frac{|m_1 - k_1|}{|\lambda - m_1'|} \frac{\| \chi_{\vec{k}} (X - m_1) \psi_{\vec{m}}^{(j)} \|}{\la m_1' - k_1 \ra^{1/2+\epsilon} \la m_2' - k_2 \ra^{1+\epsilon}} \\
               & \leq C_2 C_3 \sum_{\vec{m}'} \sum_{\vec{k}} \frac{|m_1 - k_1|}{|\lambda - m_1'|} \frac{A_{\vec{k}}}{\la m_1 - k_1 \ra \la m_2 - k_2 \ra^{1/2+\epsilon}} \frac{1}{\la m_1' - k_1 \ra^{1/2+\epsilon} \la m_2' - k_2 \ra^{1+\epsilon}} \\
               & \leq C_2 C_3 \sum_{\vec{m}'} \sum_{\vec{k}} \frac{1}{|\lambda - m_1'|} \frac{A_{\vec{k}}}{\la m_2 - k_2 \ra^{1/2+\epsilon}\la m_1' - k_1 \ra^{1/2+\epsilon} \la m_2' - k_2 \ra^{1+\epsilon}} \\
               & \leq C_2 C_3 \sum_{m_2', k_2} \frac{1}{\la m_2 - k_2\ra^{1/2+\epsilon} \la m_2' - k_2 \ra^{1+\epsilon}} \left( \sum_{m_1',k_1} \frac{A_{\vec{k}}}{|\lambda - m_1'| \la m_1' - k_1 \ra^{1/2+\epsilon}} \right)  \label{eq:comm-bd4-3}
\end{align}
Let's focus our attention on the sum over $(m_1', k_1)$
\[
  \sum_{m',k} \frac{A_{\vec{k}}}{|\lambda - m_1'| \la m_1' - k_1 \ra^{1/2+\epsilon}}
\]
Since $\lambda$, $k_2$, and $(\vec{m},j)$ are fixed this is a sum of the form
\[
  \sum_{m_1'} \sum_{k_1} a[k_1] b[m_1'] c[m_1'-k_1]
\]
which is clearly the $\ell^1$-norm of a convolution. Therefore, by Young's convolution inequality \cite[Theorem 4.2]{2001LiebLoss} with $p = 2$, $q = 1 + \frac{\epsilon}{2}$ and $r = (2 - \frac{1}{p} - \frac{1}{q})^{-1} = \frac{2(2+\epsilon)}{2+3\epsilon}$ we have that
\begin{align}
  \sum_{m_1',k_1}
  & \frac{A_{\vec{k}}}{|\lambda - m_1'| \la m_1' - k_1 \ra^{1/2+\epsilon}} \\
  & \leq \left( \sum_{k} A_{\vec{k}}^2 \right)^{1/2} \left( \sum_{m'} \frac{1}{|\lambda - m_1'|^{1 + \epsilon/2}} \right)^{1/q} \left( \sum_{m'} \frac{1}{\la m' \ra^{(1/2+\epsilon)r}} \right)^{1/r}  \label{eq:comm-bd4-4}
\end{align}
It's easy to check that 
\[
  \left(\frac{1}{2}+\epsilon\right)r = \frac{2 + 5 \epsilon + 2 \epsilon^2}{2 + 3\epsilon} =  1 + \epsilon + O(\epsilon^2)
\]
so for $\epsilon > 0$ sufficiently small the last two terms in Equation~\eqref{eq:comm-bd4-4} are bounded by a constant, $C_4$. Therefore, we conclude that
\begin{equation}
  \label{eq:comm-bd4-mk-bd}
  \sum_{m_1',k_1} \frac{A_{\vec{k}}}{|\lambda - m_1'| \la m_1' - k_1 \ra^{1/2+\epsilon}} \leq C_4 \left( \sum_{k_1} A_{\vec{k}}^2 \right)^{1/2}.
\end{equation}
Using this bound in Equation~\eqref{eq:comm-bd4-3} then gives:
\begin{align}
\sum_{\vec{m}'} & \sum_{\vec{k}} \frac{|m_1 - k_1|}{|\lambda - m_1'|} \frac{\| \chi_{\vec{k}} (X - m_1) \psi_{\vec{m}}^{(j)} \|}{\la m_1' - k_1 \ra^{1/2+\epsilon} \la m_2' - k_2 \ra^{1+\epsilon}} \\
             & \leq C_2 C_3 C_4 \sum_{m_2',k_2} \frac{1}{\la m_2 - k_2\ra^{1/2+\epsilon} \la m_2' - k_2 \ra^{1+\epsilon}} \left( \sum_{k_1} A_{\vec{k}}^2 \right)^{1/2}.
\end{align}
Similar to before, this sum is the $\ell^1$-norm of a convolution in $m_2', k_2$. Therefore, by Young's convolution inequality, with $p = q = 2$, $r = 1$ we have that
\begin{align}
  \sum_{m_2',k_2} & \frac{1}{\la m_2 - k_2\ra^{1/2+\epsilon} \la m_2' - k_2 \ra^{1+\epsilon}} \left( \sum_{k_1} A_{\vec{k}}^2 \right)^{1/2} \\
  & \leq \left( \sum_{\vec{k}} A_{\vec{k}}^2 \right)^{1/2}\left( \sum_{k_2} \frac{1}{\la m_2 - k_2\ra^{1+2\epsilon}} \right)^{1/2} \left( \sum_{m_2'} \frac{1}{ \la m_2'\ra^{1+\epsilon}} \right)
\end{align}
The last two sums are clearly bounded for any $\epsilon > 0$ so to finish the bounding Equation~\eqref{eq:comm-bd4}, we only need to show the first sum is bounded. Recalling the definition of $A_{\vec{k}}$ as Equation~\eqref{eq:comm-bd4-adef}:
\begin{align}
  \sum_{\vec{k}} A_{\vec{k}}^2
  & = \sum_{\vec{k}} \| \chi_{\vec{k}} (|X - m_1| + 1)^{2} (|Y - m_2| + 1)^{1/2+\epsilon} \psi_{\vec{m}}^{(j)} \|^2 \\
  & = \| (|X - m_1| + 1)^{2} (|Y - m_2| + 1)^{1/2+\epsilon} \psi_{\vec{m}}^{(j)} \|^2.
\end{align}
Applying Lemma~\ref{lem:prod-to-sum-bd}, we see that this quantity is bounded by a constant so long as we assume that $\{ \psi_{\vec{m}}^{(j)} \}$ is bounded with $s > 5/2$. This finishes the proof that Equation~\eqref{eq:comm-bd4} is bounded.

\bibliographystyle{plain}
\bibliography{bibliography}

\end{document}